\documentclass[12pt]{article}
\usepackage[a4paper,margin=1in]{geometry}
\usepackage{setspace}
\usepackage[T1]{fontenc}
\usepackage[utf8]{inputenc}
\usepackage{mathpazo}
\usepackage{microtype}
\usepackage[round,authoryear]{natbib}
\usepackage{amsmath,amssymb,amsthm,amsfonts,array,mathrsfs,ifthen,mathtools}
\usepackage{dsfont}
\usepackage{enumitem}
\usepackage{comment}
\usepackage{url}
\usepackage{hyperref}
\usepackage{bm}
\usepackage{graphicx}
\usepackage{xcolor}
\definecolor{warmdarkred}{RGB}{120, 20, 30}
\definecolor{myblue}{RGB}{0,30,180}
\hypersetup{
    colorlinks=true,
    linkcolor=warmdarkred,
    urlcolor=warmdarkred,
    citecolor=warmdarkred
}
\theoremstyle{plain}
\newtheorem{theorem}{Theorem}
\newtheorem{proposition}{Proposition}
\newtheorem{lemma}{Lemma}
\newtheorem{corollary}{Corollary}
\newtheorem{assumption}{Assumption}
\newtheorem{definition}{Definition}

\usepackage[nameinlink,capitalise,noabbrev]{cleveref}
\crefname{assumption}{Assumption}{Assumptions}
\Crefname{assumption}{Assumption}{Assumptions}
\crefname{definition}{Definition}{Definitions}
\Crefname{definition}{Definition}{Definitions}
\crefname{lemma}{Lemma}{Lemmas}
\Crefname{lemma}{Lemma}{Lemmas}
\crefname{proposition}{Proposition}{Propositions}
\Crefname{proposition}{Proposition}{Propositions}
\crefname{corollary}{Corollary}{Corollaries}
\Crefname{corollary}{Corollary}{Corollaries}
\newcommand{\Sellers}{\mathcal S}
\newcommand{\Buyers}{\mathcal B}
\newcommand{\Qualities}{\mathcal Q}
\newcommand{\Types}{\Theta}
\newcommand{\Census}{\mathcal C}
\newcommand{\Domain}{\mathcal D}

\newcommand{\Matchings}{\mathcal X}

\newcommand{\argmax}{\operatorname*{arg\,max}}

\usepackage{titlesec}

\titleformat{\section}
{\normalfont\large\bfseries\centering\sc} 
{\thesection.}                          
{1em}                                   
{}                                      

\titleformat{\subsection}
{\normalfont\normalsize\bfseries\centering\sc}
{\thesubsection.}
{1em}
{}

\title{ \bf \Large
Market tallies: minimal information for efficient trade \\

}
\author{Federico Vaccari\\
        {\small University of Bergamo}\\
        {\small \texttt{vaccari.econ@gmail.com}}}
\date{}

\begin{document}
\maketitle

\begin{abstract}
This paper studies how much public information is needed to implement efficient trade in dynamic markets with privately informed sellers and buyers. An institution compares a certified statistic of market composition
with the statistic implied by agents' reports. Truthful reporting is supported when the statistic changes after every unilateral change in reported type. When all market compositions are possible, the least number of public announcements is $\max\{K,L\}$, where $K$ is the number of seller qualities and $L$ the number of buyer types. The certificate must rely on information outside the reports it checks. The paper also shows that information sufficient to discipline reports need not coordinate buyers across limited capacity. Posted-price implementation may require certified capacities and a clearing rule.
\end{abstract}

\bigskip

\noindent\textbf{Keywords:} adverse selection, dynamic trade, information disclosure, market design, posted prices, congestion, certification.

\bigskip

\noindent\textbf{JEL codes:} C72, D82, D47, L13.


\newpage

\begingroup
\hypersetup{linkcolor=black}
\tableofcontents
\endgroup

\newpage


\section{Introduction}
\label{sec:introduction}

Markets with adverse selection often rely on time to produce information. Sellers reveal their quality through their willingness to wait, and buyers learn from prices, offers, and trading histories. This process may improve sorting, but it does so at a cost. Valuable trades are postponed, and some may never occur. When buyers also differ in how much they value quality, learning who should trade is only part of the problem, as the market must also determine who should trade with whom.

This paper asks how much public information is needed to implement efficient trade without delay. The answer is surprisingly modest, provided that the institution can verify a suitable aggregate statistic. The statistic need not reveal individual types, nor need it disclose the full composition of the market. Its purpose is to expose a unilateral change in the reported counts.

To see the idea, consider a finite market in which sellers privately know the quality of their goods and buyers privately know their valuations. Before agents report, an independent source announces a statistic of the numbers of sellers and buyers of each type. The mechanism then calculates the same statistic from the reports. Trade proceeds only if the two values agree. A false report by one participant changes the composition implied by the reports. If the statistic changes with each such alteration, the false report is detected, and the trade is canceled.

The first result characterizes the public statistics that reject every unilateral false report. A statistic has this property if and only if it takes different values at any two market compositions that can be obtained from one another by changing the type of a single participant. When all compositions consistent with the numbers of buyers and sellers are possible, the least number of public values among statistics with this property is $\max\{K,L\}$, where $K$ is the number of seller qualities and $L$ is the number of buyer types. This bound does not depend on the number of participants, even though the number of possible market compositions grows with market size.\footnote{The bound concerns a count check that rejects false reports, whether or not they would be profitable. The check supports truthful reporting as an equilibrium but does not rule out other inefficient equilibria.}

The result follows from a simple counting idea. Index the seller qualities and buyer types, add their indices across participants, and announce the remainder after division by $\max\{K,L\}$. Changing one seller's quality or one buyer's type necessarily changes this remainder. At the same time, fewer public values cannot suffice. Holding all but one seller fixed yields $K$ compositions that must be distinguished, and holding all but one buyer fixed yields $L$ such compositions. Thus, a small public certificate can discipline individual reports even though it reveals little about the underlying market.\footnote{Separation is sufficient for truthful reporting, but need not be necessary when prices or transfers already discourage reports that the tally does not detect. Appendix~\ref{app:beyond_count_checks} studies this broader problem.}


This finding has a natural interpretation in terms of privacy. A platform, registry, or regulator may have access to detailed records without disclosing them to market participants. It can use those records to calculate the public statistics and release only the information required for the report check. Internal verification and public disclosure are distinct. Efficient implementation may require the former without requiring the latter to be equally detailed.

The source of the public statistic is essential. A statistic calculated from the reports that it is meant to check provides no independent information. Agreement then holds by construction. More generally, a procedure based only on the checked reports cannot both accept truthful reports in every state and reject all inconsistent ones. The certificate must rest on an informational source outside those reports. Possible sources include registries, platform records, audits, independent witnesses, and legal certification. The paper makes this institutional requirement explicit and considers several ways in which such information may be produced.

The public statistic addresses incentives, but it does not by itself organize trade. Even if participants know the market's entire composition, several buyers may approach the same seller, while another suitable seller receives no applications. Public information may reveal how many trades should occur without telling buyers how to divide themselves among the available opportunities. Congestion can prevent efficiency even when adverse selection has been resolved.

The paper studies this second problem through a posted-price institution. An organizer uses certified information about market composition to announce prices and capacities. Sellers enter the markets associated with their qualities, buyers apply where their payoff is highest, and a clearing rule assigns buyers when several of them seek the same limited capacity. Prices support the efficient allocation, while clearing prevents buyers from concentrating on the same sellers. This construction requires more information or more institutional involvement than the report check alone.

The comparison separates two functions that are easily conflated. Information used to discipline reports need only expose unilateral inconsistencies. Information used to organize trade must also support prices, identify capacities, and guide participants away from congestion. A public statistic may be sufficient for the first purpose and insufficient for the second. Accordingly, the informational requirements of a market cannot be assessed without specifying the institution in which the information will be used.

The paper's argument applies to finite markets. A false report in a finite market changes an integer count. In a continuum economy, a single participant has measure zero and does not change the aggregate distribution. The results are most applicable to settings in which participation or inventory can be counted, such as thin markets, platform trading rounds, procurement lots, registries, and similar environments.

The paper contributes to the study of adverse selection by treating public information as an input into implementation rather than as a substitute for individual information. It identifies how little needs to be disclosed to support truthful reporting, explains why that disclosure must have an independent source, and shows why additional organization may still be needed to achieve efficient trade. The value of public information depends not only on what it reveals, but also on the task it is asked to perform.


\subsection{Related literature}
\label{sec:related_literature}

This paper builds on the literature on adverse selection that begins with \citet{Akerlof1970}. Much of the subsequent work asks how market interaction itself can reveal private information. Prices, trading histories, entry, and delay may allow goods of different qualities to be sorted over time \citep{JanssenRoy2002,JanssenRoy2004,HendelLizzeriSiniscalchi2005}. Public news and information about supply can also affect the course of trade \citep{DaleyGreen2012,BilanciniBoncinelli2016}, while decentralized dynamic markets may gradually separate informed sellers \citep{MorenoWooders2010,CamargoLester2014}. Buyer heterogeneity introduces a further difficulty because trade must be sorted on both sides of the market \citep{Roy2014}. The present paper asks what independently verified information would allow a market to reach an efficient allocation at once.

The paper is most closely related to work on mechanism design when information about the distribution of types is available. The impossibility result of \citet{MyersonSatterthwaite1983} establishes the tension among efficiency, incentive compatibility, individual rationality, and budget balance in bilateral trade with private information. Aggregate restrictions can alter that tension. \citet{JacksonSonnenschein2007} show that linking many decisions and restricting the frequencies of reports can ease incentive constraints, while \citet{BoukourasKoufopoulos2017} study implementation when realized type frequencies are commonly known. \citet{McLeanPostlewaite2002} relate incentive compatibility to the informational importance of individual agents in large economies. Here, the emphasis is on how little of the realized market composition must be made public to discipline individual reports. The answer is independent of market size when the sets of seller qualities and buyer types are fixed.

There is also a connection with information design. In persuasion and information-design problems, the designer chooses what agents learn in order to shape their beliefs and actions \citep{RayoSegal2010,KamenicaGentzkow2011,BergemannMorris2019}. The public information considered here serves a different purpose. It certifies a feature of the realized market against which individual reports can be checked. The analysis does not rely on a prior or on the management of posterior beliefs. Differently, it asks which distinctions among market compositions are needed for the report check.

This use of public information brings the paper closer to the literature on verifiable disclosure, hard evidence, and certification. Early work studies the disclosure of verifiable private information and the resulting unraveling of information \citep{Grossman1981,Milgrom1981,MilgromRoberts1986}. \citet{GreenLaffont1986} and \citet{BullWatson2007} examine implementation when agents possess evidence or when some claims can be verified. \citet{Lizzeri1999} studies certification intermediaries, and \citet{DranoveJin2010} survey the broader literature on quality disclosure and certification. In the present paper, the certified object is a statistic of the market composition. This difference permits individual information to remain private while still placing a verifiable restriction on reports.

The distinction between verification and disclosure also relates to the economics of privacy and data. \citet{AcquistiTaylorWagman2016} survey the economics of privacy, and \citet{BergemannBonattiGan2022} study the value and governance of market data. This paper does not introduce privacy preferences or a market for data. Its contribution on this front is institutional. An intermediary may use detailed records internally while disclosing only the public information needed for implementation. The amount of information verified by the intermediary may exceed the amount revealed to market participants.

Finally, the analysis of posted prices draws on the assignment-market tradition. Efficient assignments can be supported by prices \citep{ShapleyShubik1971,DemangeGaleSotomayor1986}. The paper uses this logic to construct price markets from certified information about the composition of a finite market. It then adds a clearing rule to allocate limited capacity among buyers. This last step distinguishes the construction from competitive search models, in which prices and submarkets screen agents through decentralized choice \citep{GuerrieriShimerWright2010}. In a finite market, prices may direct buyers toward suitable goods without preventing several buyers from choosing the same seller. The resulting congestion is the reason that information and clearing play separate roles in the analysis.


\section{Model}
\label{sec:model}

There are $N\geq 1$ sellers and $M\geq 1$ buyers. Let $\Sellers=\{1,\ldots,N\}$ and $\Buyers=\{1,\ldots,M\}$ denote the two sets of agents. Each seller owns one indivisible good, and each buyer demands at most one good.

A seller's private information concerns the quality of her good. There are $K\geq 1$ possible qualities, $\Qualities=\{q^0,\ldots,q^{K-1}\}$, and seller $i$ observes $q_i\in\Qualities$. A buyer's private information is her valuation type. There are $L\geq 1$ possible buyer types, $\Types=\{\theta^0,\ldots,\theta^{L-1}\}$, and buyer $j$ observes $\theta_j\in\Types$. Write $q=(q_i)_{i\in\Sellers}$ and $\theta=(\theta_j)_{j\in\Buyers}$ for the realized profile of qualities and buyer types.

A buyer of type $\theta$ values a good of quality $q$ at $v_\theta(q)$. The seller's value from retaining a good of quality $q$ is $c(q)$. Both functions are common knowledge and take finite real values. No monotonicity or single-crossing assumption is imposed unless stated otherwise.

If buyer $j$ purchases seller $i$'s good at price $p$ and date $t$, their payoffs, measured relative to their outside options, are
\[
        \delta^t\left(v_{\theta_j}(q_i)-p\right)
\]
and
\[      
        \delta^t\left(p-c(q_i)\right),
\]
respectively, where $\delta\in(0,1)$. An agent who does not trade receives zero. Time is discrete, with $t\in\{0,1,2,\ldots\}$. The institutions considered below seek to carry out all equilibrium trades at date zero. Discounting is relevant to the cost of delay, but not to the comparison of date-zero allocations.

A matching $x\subseteq\Sellers\times\Buyers$ is \emph{feasible} if each agent belongs to at most one matched pair. Let $\Matchings$ be the set of feasible matchings. At the profile $(q,\theta)$, matching $x$ produces total gains from trade
\[
        W(x\mid q,\theta) = \sum_{(i,j)\in x} \left[v_{\theta_j}(q_i)-c(q_i)\right].
\]
The first-best surplus is
\[
        W^*(q,\theta) = \max_{x\in\Matchings}W(x\mid q,\theta).
\]
The empty matching is feasible, so $W^*(q,\theta)\geq 0$.

Some profiles may admit more than one surplus-maximizing matching. When a mechanism requires a single selected outcome, fix a public tie-breaking rule and let $x^*(q,\theta)$ denote the matching it selects. The rule first removes any zero-surplus trade and then chooses among the remaining surplus-maximizing matchings. Every trade in $x^*(q,\theta)$ produces positive gains.

The choice among surplus-maximizing matchings has no welfare significance. Throughout the paper, a matching is \emph{efficient} if it attains $W^*(q,\theta)$, and \emph{efficient trade} means that such a matching is carried out at date zero. The count-check mechanism uses $x^*(q,\theta)$ because it assigns a particular matching. A market-clearing institution may select another efficient matching when the first best is not unique.


\subsection{Market composition and public tallies}
\label{subsec:censuses_tallies}

The composition of the market can be described without revealing the identity of any participant. For each seller quality $q^r$, and for $r\in\{0,\ldots,K-1\}$, let
\[
        n_r(q) = \#\left\{i\in\Sellers \mid q_i=q^r \right\},
\]
and, for each buyer type $\theta^\ell$ and $\ell\in\{0,\ldots,L-1\}$, let
\[
        m_\ell(\theta) = \#\left\{j\in\Buyers\mid \theta_j=\theta^\ell\right\}.
\]

\begin{definition}
The \textbf{market census} at profile $(q,\theta)$ is
\[
        C(q,\theta) = \left( (n_r(q))_{r=0}^{K-1}, (m_\ell(\theta))_{\ell=0}^{L-1} \right).
\]
The market census records the number of sellers of each quality and the number of buyers of each type, but not their identities.
\end{definition}

The set of all censuses consistent with the numbers of sellers and buyers is
\[
        \Census_{N,M} = \left\{ (n,m)\in\mathbb Z_+^K\times\mathbb Z_+^L \;\text{ such that }\; \sum_{r=0}^{K-1}n_r=N,\ \sum_{\ell=0}^{L-1}m_\ell=M \right\}.
\]
The set of market compositions that may occur is a nonempty domain $\Domain\subseteq\Census_{N,M}$.

The omission of identities is substantive. The paper studies aggregate certification rather than public verification of each participant's type. Individual identities have no direct payoff relevance in the model, while publishing the association between identities and types would disclose far more information than is needed for the count check. A registry or platform may hold identity-linked records internally in order to certify the census, but the public census contains only type counts.

Allowing identity-linked public information would lead to a different implementation problem. If the type of every named participant were certified, the mechanism could use that information directly rather than elicit the same types through reports. Even a more limited identity-sensitive certificate could distinguish exchanges of reports that leave the census unchanged. The results below characterize what can be achieved with aggregate public information. However, the same limitations do not necessarily apply to individual verification. An anonymous census need not prevent all inference about individuals, particularly in a small or restricted market. It means only that the public record does not explicitly associate a type with an identity.

\begin{definition}
The domain is \textbf{unrestricted} if $\Domain=\Census_{N,M}$. That is, every composition consistent with the fixed numbers of sellers and buyers may occur. 

A \textbf{restricted domain} is a proper nonempty subset $\Domain\subsetneq\Census_{N,M}$, and represents prior knowledge that some compositions cannot occur.
\end{definition}

A reported profile whose census does not belong to $\Domain$ is regarded as infeasible and is rejected by the mechanisms considered below.

The designer need not disclose the market census itself. Instead, the public announcement may retain only some of its information.

\begin{definition}
A \textbf{market tally} is a function $\phi:\Domain\to\mathcal Z$, where $\mathcal Z$ is the finite set of announcements that the tally may produce. Without loss, $\phi$ is taken to be onto, so every element of $\mathcal Z$ is used for at least one census. The tally reveals the \textbf{full market census} if $\phi$ is one-to-one.
\end{definition}

In the count-check mechanism introduced below, the realized tally is an authenticated public announcement. Both the trading institution and market participants observe it before agents report. The institution uses the tally to check the reports, while the records or other evidence used to produce it remain private. A certificate sent only to a trusted institution would describe a different arrangement. The disclosure results below concern the public-announcement benchmark.

A tally that is not one-to-one allows several market compositions to produce the same public announcement. The number of possible announcements, $|\mathcal Z|$, will be the paper's measure of public disclosure. It counts the authenticated messages that may be released to market participants. It does not measure how much information the certifying institution collects or holds, or how costly that information is to verify. In particular, a tally with a small range may still be computed from the full census or from identity-linked records. Moreover, the measure is prior-free, as it records how many announcements may be needed, not how frequently they arise.

The analysis first asks how small $|\mathcal Z|$ can be when the tally is used to compare individual reports with independently certified aggregate information. This is a question about the count-check institution introduced in the next section. It does not presume that the same amount of information is sufficient for every possible trading institution.


\section{Analysis}
\label{sec:analysis}


\subsection{Certified count checks}
\label{sec:certified_count_checks}

Suppose that an informational anchor has access to records or other evidence sufficient to certify the tally of the true market census. Before agents report, it releases the authenticated value
\[
        z_0=\phi(C(q,\theta)).
\]
The trading institution and market participants observe $z_0$, but they need not observe the underlying records or the full census. The institution then compares $z_0$ with the tally implied by agents' reports. For now, the certificate is assumed to be correct. Section~\ref{sec:certification} explains why its content cannot be derived solely from the reports being checked.

A false report by one seller reduces the reported number of sellers of her true quality by one and increases the reported number of sellers of another quality by one. A false report by one buyer has the analogous effect on the buyer counts. This observation motivates the following relation between censuses.

\begin{definition}
Two distinct censuses $C,C'\in\Domain$ are \textbf{one-agent neighbors}, written $C\sim_1 C'$, if one can be obtained from the other by changing the type of a single seller or a single buyer. Thus, either one unit is moved between two seller-quality counts while all buyer counts remain fixed, or one unit is moved between two buyer-type counts while all seller counts remain fixed.
\end{definition}

The required public information can now be stated simply.

\begin{definition}
A tally $\phi:\Domain\to\mathcal Z$ is \textbf{separating} on $\Domain$ if, for every $C,C'\in\Domain$,
\[
        C\sim_1 C' \implies\phi(C)\neq\phi(C').
\]

\end{definition}

Separation does not require the tally to reveal the census. It requires only that the public announcement change when the reported census is altered by one participant.

Fix a tally $\phi:\Domain\to\mathcal Z$. The associated \emph{count-check mechanism} proceeds as follows.

\begin{enumerate}[label=\textbf{\arabic*.},leftmargin=2em]
    \item The certified tally $z_0=\phi(C(q,\theta))$ is announced;

    \item Each seller reports a quality $\hat q_i\in\Qualities$, and each buyer reports a valuation type $\hat\theta_j\in\Types$. These reports produce the census $\widehat C=C(\hat q,\hat\theta)$;

    \item If $\widehat C\notin\Domain$ or $\phi(\widehat C)\neq z_0$, the mechanism cancels trade and makes no transfers;

    \item If $\widehat C\in\Domain$ and $\phi(\widehat C)=z_0$, the mechanism selects $x^*(\hat q,\hat\theta)$. For every matched pair $(i,j)$, buyer $j$ pays seller $i$
    \[
            p_{ij}(\hat q,\hat\theta) = \frac{ v_{\hat\theta_j}(\hat q_i)+c(\hat q_i) }{2}.
    \]
    Unmatched agents make and receive no transfers.
\end{enumerate}

The transfer divides the reported gains from trade equally between the buyer and the seller. Since the mechanism selects only trades with positive reported surplus, both parties to a truthful trade receive a positive payoff. Transfers are balanced within each pair.\footnote{Equal division is convenient but not essential. The argument would also apply to any rule that places the transaction price between reported cost and reported value and gives both parties a positive share whenever the reported gains are positive.}

The next result shows that separation supports truthful reporting, characterizes the stronger requirement that every unilateral false report be rejected, and determines the least public information needed for that requirement.

\begin{theorem}
\label{thm:main_countcheck}
Let $\phi:\Domain\to\mathcal Z$ be a market tally. Then,

\begin{enumerate}[label=\textup{(\roman*)},leftmargin=2em]
    \item If $\phi$ is separating, truthful reporting is an ex post equilibrium of the count-check mechanism. At every realized profile whose census belongs to $\Domain$, the truthful equilibrium carries out $x^*(q,\theta)$ at date zero, satisfies ex post individual rationality, and balances transfers;

    \item The tally rejects every unilateral false report at every census in $\Domain$ if and only if it is separating;

    \item On the unrestricted domain, the least number of possible public announcements among separating tallies is
    \[
            R\coloneqq \max\{K,L\}.
    \]
    A tally attaining this bound is
    \begin{equation}
            \phi(C) = \left( \sum_{r=0}^{K-1}r n_r + \sum_{\ell=0}^{L-1}\ell m_\ell \right)\bmod R,
            \label{eq:modular_tally}
    \end{equation}
    where $C= \left( (n_r)_{r=0}^{K-1}, (m_\ell)_{\ell=0}^{L-1} \right)$.
\end{enumerate}
\end{theorem}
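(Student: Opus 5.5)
I would prove the three parts in the order (ii), (i), (iii), since (ii) is essentially definitional and (i) reduces to it.

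\emph{Part (ii).} Fix a true profile $(q,\theta)$ with $C=C(q,\theta)\in\Domain$. A unilateral false report by seller $i$, say $\hat q_i=q^s\neq q_i=q^r$, produces a reported census $\widehat C$ that differs from $C$ by moving one unit from $n_r$ to $n_s$; the same holds for buyers. So the set of censuses reachable from $C$ by a unilateral deviation is exactly $\{C'\in\Census_{N,M}: C'\text{ differs from } C \text{ by one unit moved}\}$. Such a $C'$ outside $\Domain$ is rejected automatically. Each such $C'$ inside $\Domain$ is a one-agent neighbor, and it is rejected if and only if $\phi(C')\neq\phi(C)=z_0$.

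Sufficiency of separation is then immediate. For necessity, take $C\sim_1 C'$ in $\Domain$. I pick a profile realizing $C$ in which some agent holds the type whose count decreases; this exists because that count is positive. Letting that agent misreport produces $C'$. Rejection then forces $\phi(C)\neq\phi(C')$.

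\emph{Part (i).} Under truthful reports, $\widehat C=C\in\Domain$ and $\phi(\widehat C)=z_0$. The mechanism therefore implements $x^*(q,\theta)$ at date zero with pairwise-balanced transfers. Each matched agent gets half of a positive surplus, and unmatched agents get zero, which gives ex post individual rationality. A unilateral deviation is rejected by (ii), yielding payoff $0$, which is no greater than the truthful payoff. Hence truthful reporting is an ex post equilibrium, since the argument holds at every profile without beliefs.

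\emph{Part (iii), lower bound.} Fix any census on the unrestricted domain with $N-1$ sellers and all $M$ buyers allocated arbitrarily. Adding the last seller at each of the $K$ qualities gives $K$ censuses in $\Census_{N,M}$ that are pairwise one-agent neighbors, since they differ by moving one seller. Separation thus forces $K$ distinct values, so $|\mathcal Z|\ge K$. The symmetric argument with buyers gives $|\mathcal Z|\ge L$, using $N,M\ge1$.

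\emph{Part (iii), upper bound.} For \eqref{eq:modular_tally}, moving a seller from $r$ to $s$ changes the sum by $s-r$, with $0<|s-r|\le K-1<R$. This is nonzero mod $R$, and the buyer case is identical. The range of $\phi$ is a subset of $\mathbb Z_R$, so it has at most $R$ values, and by the lower bound it has exactly $R$; the onto convention is then satisfied.

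\textbf{Main obstacle.} I expect the main subtlety to be the necessity direction of (ii): making sure the neighbor relation within $\Domain$ matches deviations from actual profiles, including the positivity of the decreasing count. Everything else is routine.
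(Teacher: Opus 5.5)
Your proposal is correct and follows essentially the same argument as the paper: separation plus the domain check rejects every unilateral deviation; a non-separated neighbor pair is realized as a passing unilateral deviation; the $K$ and $L$ pairwise-neighbor censuses give the lower bound; and the modular sum shifts by a nonzero amount of absolute value less than $R$. The only differences are cosmetic: you prove (ii) before (i), and you explicitly check two points the paper leaves implicit, namely that the decreasing count is positive and that the modular tally is onto.
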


In \eqref{eq:modular_tally}, ``$\bmod R$'' means the remainder after division by $R$, taking values in $\{0,\ldots,R-1\}$. A tally constructed from such remainders is called a \emph{modular tally}. It acts as a check on the reported type counts.

The incentive argument is immediate. When all other agents report truthfully, a false report either produces a census outside $\Domain$ or moves the census to a one-agent neighbor. In the first case the report is rejected as infeasible. In the second, separation makes its implied tally differ from the certified announcement. The deviation leads to no trade. Truthful reporting gives every agent a nonnegative payoff and gives each matched agent a positive payoff.

The bound on public announcements has a similarly straightforward interpretation. Holding all buyers and all but one seller fixed produces $K$ censuses, one for each possible quality of the remaining seller. Every two of these censuses are one-agent neighbors and must receive different announcements. At least $K$ announcements are required. Repeating the argument with one buyer and $L$ possible buyer types gives a lower bound of $L$.

The construction in \eqref{eq:modular_tally} reaches the larger of these two bounds. If one seller changes her reported quality from $q^r$ to $q^s$, the quantity inside parentheses changes by $s-r$. Since $0<|s-r|<R$, its remainder after division by $R$ changes. The same reasoning applies when a buyer changes her report from $\theta^\ell$ to $\theta^h$. Thus, the tally detects every unilateral change while using only $R$ announcements. Figure~\ref{fig:three_announcement_tally} illustrates this construction for three seller qualities, holding the buyer census fixed.

\begin{figure}[t]
    \centering
    \includegraphics[width=0.55\textwidth]{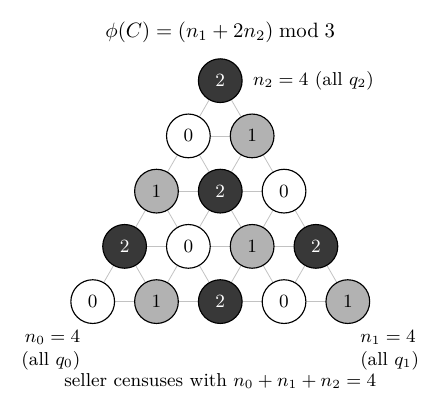}
    \caption{\textit{A three-announcement tally.} Each point represents a seller census with four sellers and three possible qualities. An edge joins two censuses that differ in one seller's report. The number inside each point is the announcement $\phi(C)=(n_1+2n_2)\bmod 3$. Every edge joins points with different announcements, although many other censuses remain pooled.}
    \label{fig:three_announcement_tally}
\end{figure}

For a general domain, let
\[
        \kappa^{\mathrm{cc}}(\Domain) \coloneqq \min_{\phi\ \mathrm{separating\ on}\ \Domain} |\phi(\Domain)|
\]
denote the least number of announcements required by a separating tally. The theorem gives
\[
        \kappa^{\mathrm{cc}}(\Census_{N,M}) = \max\{K,L\}.
\]
Restrictions on the domain may reduce this number because they exclude some censuses to which a unilateral false report might otherwise lead. If $\Domain$ contains no pair of one-agent neighbors, one announcement is enough, as every unilateral change either produces an infeasible census or is already ruled out by the domain.

Theorem~\ref{thm:main_countcheck} separates two claims. Part~(i) uses separation as a sufficient condition for truthful ex post equilibrium. Parts~(ii) and~(iii) concern the stronger requirement that every unilateral false report fail the count check, whether or not it would have been profitable. If two neighboring censuses receive the same announcement, a false report can pass the check, but it need not benefit the deviating agent. Accordingly, $\kappa^{\mathrm{cc}}(\Domain)$ is the least disclosure required for this payoff-independent count check. However, it is not necessarily the least disclosure capable of supporting truthful behavior under the particular preferences and transfers of an economic environment.\footnote{Appendix~\ref{app:beyond_count_checks} studies this latter problem.}

The theorem establishes weak, rather than full, implementation. Starting from truthful reports, a unilateral false report makes the mechanism cancel trade. An agent who would trade truthfully strictly prefers to avoid that outcome, but an unmatched agent may be indifferent. More importantly, the tally verifies counts rather than identities. It cannot detect an exchange of reports that leaves the census unchanged. The mechanism may also remain at no trade when reports fail the check and no single agent can restore consistency.\footnote{Appendix~\ref{sec:weak_full_implementation} characterizes both accepted false equilibria and rejected no-trade equilibria. Ruling them out requires an additional instrument, such as identity-linked verification, audits, penalties, or a procedure for revising rejected reports.}

A tally with two possible announcements will be called a \emph{one-bit tally}. It works like an on-and-off switch, and is represented by Figure~\ref{fig:one_bit_tally}. The leading special case is immediate.

\begin{corollary}
\label{cor:binary}
Suppose there are two seller qualities and two buyer types. On the unrestricted domain, two announcements are necessary and sufficient for unilateral count checking. One such tally is $\phi(C)=(n_H+m_H)\bmod 2$, where $n_H$ is the number of high-quality sellers and $m_H$ is the number of high-valuation buyers.

More generally, on any domain $\Domain$,
\[
        \kappa^{\mathrm{cc}}(\Domain) =
        \begin{cases}
        2, & \text{if $\Domain$ contains a pair of one-agent neighbors},\\ 1, & \text{otherwise}.
        \end{cases}
\]
\end{corollary}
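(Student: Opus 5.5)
The first part of the corollary is the special case $K=L=2$ of Theorem~\ref{thm:main_countcheck}(iii). There $R=\max\{K,L\}=2$, so on the unrestricted domain two announcements are necessary and sufficient. Label the low quality and type by index $0$ and the high ones by index $1$. The modular tally \eqref{eq:modular_tally} then becomes $(0\cdot n_L+1\cdot n_H+0\cdot m_L+1\cdot m_H)\bmod 2=(n_H+m_H)\bmod 2$, which is the stated tally. As a direct check, a unilateral change by one seller or one buyer changes $n_H+m_H$ by exactly $\pm1$, and so flips its parity.

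For the general-domain formula I will keep $K=L=2$, since that is the setting the corollary refers to. The key observation is that every one-agent move changes $n_H+m_H$ by $\pm1$. With two qualities, changing one seller's report moves one unit between $n_L$ and $n_H$. With two types, changing one buyer's report moves one unit between $m_L$ and $m_H$.

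\textbf{Sufficiency of two announcements.} Restrict the parity tally $\phi(C)=(n_H+m_H)\bmod 2$ to $\Domain$. Any pair $C\sim_1 C'$ in $\Domain$ has sums that differ by one, so $\phi(C)\neq\phi(C')$ and the tally is separating on $\Domain$. Hence $\kappa^{\mathrm{cc}}(\Domain)\le 2$ on every domain. If $\Domain$ contains no pair of one-agent neighbors, the separation condition is vacuous, so a constant tally is separating and $\kappa^{\mathrm{cc}}(\Domain)=1$. Since $\Domain$ is nonempty, at least one announcement is always used.

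\textbf{Necessity of two announcements.} If $\Domain$ contains neighbors $C\sim_1 C'$, any separating tally must assign them distinct values. Its image therefore has at least two elements, and $\kappa^{\mathrm{cc}}(\Domain)\ge 2$. Together with the previous step this gives $\kappa^{\mathrm{cc}}(\Domain)=2$. A separating tally is not required to be onto a prescribed set; $\phi$ is taken to be onto its range without loss. The parity tally may therefore use only one value on some $\Domain$, but that happens only when $\Domain$ has no neighbors, which is consistent with both cases.

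There is no real obstacle. The one point to state explicitly is that the general-domain formula relies on the two-quality, two-type structure. In the binary case the neighbor graph is bipartite with respect to the parity of $n_H+m_H$, so it is $2$-colorable. For larger $K$ or $L$ the neighbor graph contains triangles, and two colors would not suffice.
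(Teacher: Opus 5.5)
Your proof is correct and matches the paper's argument: the parity of $n_H+m_H$ flips under every one-agent move, so the parity tally separates neighbors on any binary domain; a pair of neighbors forces at least two announcements; and a constant tally suffices when $\Domain$ has no neighbors. The only difference is that you get the unrestricted case by specializing Theorem~\ref{thm:main_countcheck}(iii), whereas the paper derives it from the general-domain formula by noting that the unrestricted domain contains one-agent neighbors.
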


The announcement reveals only whether $n_H+m_H$ is even or odd, and reveals neither count separately. Yet, a false report by one participant changes one of the two counts by one and changes the announcement accordingly. Figure~\ref{fig:one_bit_tally} illustrates the construction. Neighboring censuses always receive different announcements, although many distant censuses remain pooled.

\begin{figure}[t]
    \centering
    \includegraphics[width=0.65\textwidth]{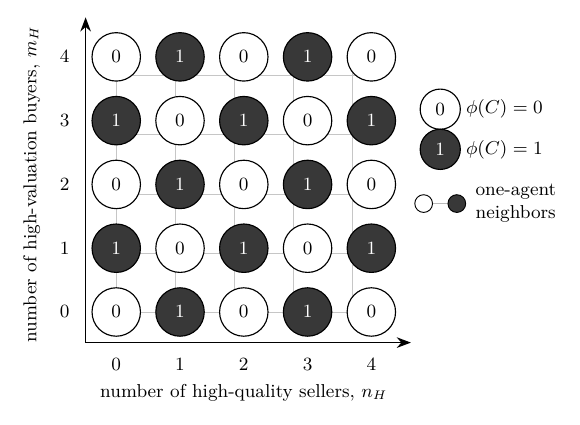}
    \caption{\textit{A one-bit tally in a binary market.} Each node represents a census $(n_H,m_H)$. White and dark nodes correspond to the two values of $\phi(C)=(n_H+m_H)\bmod 2$. A unilateral change in a report moves to a horizontal or vertical neighbor and changes the public announcement. The tally nevertheless pools many censuses that the full census would distinguish.}
    \label{fig:one_bit_tally}
\end{figure}


\subsection{Certification and informational anchors}
\label{sec:certification}

Theorem~\ref{thm:main_countcheck} takes the certified tally as given. This assumption is essential. A public announcement can discipline reports only if its content is not determined by the reports being checked. If the institution calculates the tally from those same reports, agreement holds by construction.

It is useful to state this point formally. Write $\omega=(q,\theta)$ for a profile of seller qualities and buyer types, and let
\[
        \Omega_{\Domain} \coloneqq \left\{ \omega\in\Qualities^\Sellers\times\Types^\Buyers \; \text{ such that }  \; C(\omega)\in\Domain \right\}
\]
be the set of profiles admitted by the domain. A certification procedure is \emph{report-only} if its decision depends exclusively on messages submitted by the agents. Thus, if $M_a$ is the message space of agent $a$, its acceptance rule has the form
\[
        A: \prod_{a\in\Sellers\cup\Buyers}M_a \to  \{0,1\}.
\]
The procedure has no access to a registry, audit, witness, platform record, or other information about the realized profile. Every message is available to every type of the agent who sends it.

For each $\omega\in\Omega_{\Domain}$, let $m^T(\omega)$ denote the prescribed message profile that truthfully reports the agents' types and the tally $\phi(C(\omega))$. At the true profile $\omega$, a reported profile $\hat\omega$ is \emph{tally-inconsistent} if $\phi(C(\hat\omega)) \neq \phi(C(\omega))$. An independently certified tally rejects such a report because its implied tally differs from the public certificate. A report-only procedure cannot do so in every state while also accepting truthful reports in every state.

\begin{proposition}
\label{prop:no_self_certification_main}
Suppose that $\phi$ is nonconstant on $\Domain$. No report-only certification procedure can both accept the prescribed truthful message profile $m^T(\omega)$ at every $\omega\in\Omega_{\Domain}$ and reject every tally-inconsistent report.
\end{proposition}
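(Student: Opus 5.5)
The plan is a direct contradiction argument built on one observation: a report-only acceptance rule $A$ is a function of the message profile alone, so it cannot depend on the true profile $\omega$. The message profile that is accepted as truthful in one state is therefore also accepted when it is sent in any other state. Nonconstancy of $\phi$ then supplies a state in which that same profile is tally-inconsistent.

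First, since $\phi$ is nonconstant on $\Domain$, fix two censuses $C,C'\in\Domain$ with $\phi(C)\neq\phi(C')$. Choose profiles $\omega,\omega'\in\Omega_{\Domain}$ with $C(\omega)=C$ and $C(\omega')=C'$. Such profiles exist because every census in $\Census_{N,M}$ is realized by some assignment of qualities to the $N$ sellers and types to the $M$ buyers. Suppose, for contradiction, that a report-only procedure $A$ accepts $m^T(\tilde\omega)$ at every $\tilde\omega\in\Omega_{\Domain}$ and rejects every tally-inconsistent report. Applying the first property at $\omega'$ gives $A(m^T(\omega'))=1$.

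Second, consider the true state $\omega$. Every message is available to every type of its sender, so each agent $a$ can send $m^T_a(\omega')$ at $\omega$. The resulting message profile reports the types $\hat\omega=\omega'$ and the tally $\phi(C(\omega'))$. Since $\phi(C(\hat\omega))=\phi(C')\neq\phi(C)=\phi(C(\omega))$, this report is tally-inconsistent at $\omega$, so the second property requires $A(m^T(\omega'))=0$. But $A$ does not depend on the state, so $A(m^T(\omega'))=1$ from the first step, a contradiction. This completes the argument.

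I expect the only delicate point to be definitional rather than technical. The argument needs that ``rejecting a tally-inconsistent report'' means evaluating $A$ at the corresponding message profile, and that this evaluation is the same object as the acceptance of $m^T(\omega')$ at $\omega'$. I would state this explicitly using the domain and codomain of $A$: the truth is not an argument of $A$. Two remarks would be worth adding. The deviation used is typically multilateral, which is enough because the proposition concerns every tally-inconsistent report. If one wants the same conclusion for unilateral deviations, one can take $C\sim_1 C'$ whenever $\Domain$ contains a pair of one-agent neighbors with distinct tallies, and run the identical argument with $\omega,\omega'$ differing in one agent's type.
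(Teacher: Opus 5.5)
Your proof is correct and follows the paper's argument: accepting truthful reports at $\omega'$ forces $A(m^T(\omega'))=1$, the identical message profile is tally-inconsistent when sent at $\omega$, and $A$ cannot tell the two states apart because it depends only on messages. Your explicit check that every census in $\Domain\subseteq\Census_{N,M}$ is realized by some profile in $\Omega_{\Domain}$ fills in a step the paper leaves implicit.
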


The proposition rests on observational equivalence. To a designer who sees only reports, a truthful description of $\omega'$ when $\omega'$ has occurred looks no different from the same description submitted when the true profile is $\omega$. The messages alone provide no basis for treating the two cases differently. The revelation principle does not alter this conclusion. It allows one to simplify the agents' messages after the information available to the mechanism has been specified. It does not supply information about the realized market that the mechanism does not otherwise observe. Certification is a restriction on the information structure, not merely on the form of the reporting game.

The most direct failure of report-only procedures occurs when the institution constructs the public tally from the reports themselves.

\begin{corollary}
\label{cor:self_generated_vacuous_main}
Suppose that agents report a profile $\hat\omega\in\Omega_{\Domain}$ and that the institution announces $\hat z=\phi(C(\hat\omega))$. If the report is accepted whenever $\phi(C(\hat\omega))=\hat z$, every reported profile whose census belongs to $\Domain$ is accepted. Any rejection of a census outside $\Domain$ follows from the domain restriction, not from the tally comparison.
\end{corollary}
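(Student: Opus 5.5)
The claim is immediate once the self-generated announcement is substituted into the acceptance rule, so the plan is a direct verification. I will not use observational equivalence (Proposition~\ref{prop:no_self_certification_main}). Its only role here is to explain why this corollary is its most transparent instance.

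First, fix an arbitrary reported profile $\hat\omega$ with $C(\hat\omega)\in\Domain$. The institution then announces $\hat z=\phi(C(\hat\omega))$. This is well defined because $\phi$ is a function on $\Domain$ and $C(\hat\omega)$ lies in its domain.

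Second, the acceptance test compares $\phi(C(\hat\omega))$ with $\hat z$. By construction these are the same element of $\mathcal Z$, so the test passes. This holds whatever the true profile $\omega$ is: the test never refers to $\omega$, and $\hat z$ depends on reports alone. In particular, a report that is tally-inconsistent relative to the true $\omega$, meaning $\phi(C(\hat\omega))\neq\phi(C(\omega))$, is still accepted. This gives the first sentence of the corollary. It also exhibits concretely the failure in Proposition~\ref{prop:no_self_certification_main}: provided $\phi$ is nonconstant, such inconsistent reports exist, since some one-agent or arbitrary change of census alters $\phi$.

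Third, for the remaining sentence, suppose $C(\hat\omega)\notin\Domain$. Then $\phi(C(\hat\omega))$ is undefined, so no tally comparison can even be made. Any rejection must therefore come from the feasibility clause, step~3 of the count-check mechanism, which cancels trade whenever $\widehat C\notin\Domain$. Combining the two cases, the set of accepted reports is exactly $\{\hat\omega: C(\hat\omega)\in\Domain\}$. Hence the tally comparison has no bite, and the domain restriction alone determines rejections.

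There is no real obstacle here. The only point needing care is wording: the acceptance condition must be read as comparing the report-implied tally with the report-generated announcement, not with the certified $z_0$. The argument must also treat the out-of-domain case as a separate branch rather than as a failed equality.
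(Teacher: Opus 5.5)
Your proof is correct and matches the paper's argument. Substituting $\hat z=\phi(C(\hat\omega))$ into the acceptance rule gives the tautology $\phi(C(\hat\omega))=\phi(C(\hat\omega))$ for every $\hat\omega\in\Omega_{\Domain}$, and your separate treatment of censuses outside $\Domain$ (where $\phi$ is undefined, so only the feasibility clause can reject) simply makes explicit what the paper leaves implicit.
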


An \emph{informational anchor} is a source of evidence about the realized market that is independent of the reports being checked. The anchor computes or certifies the tally and releases an authenticated value. It may use the full census or identity-linked records internally, while disclosing only the tally. However, the anchor and the trading institution need not be separate organizations. A platform, for example, may perform both roles. What matters is the separation between the records used to certify the tally and the reports that the tally checks. Theorem~\ref{thm:main_countcheck} limits the public announcement, not the information that the institution must collect or verify.

The subsequent proposition states only that some information must enter from outside the unrestricted reports. Appendix~\ref{app:certification_extensions} considers two possible arrangements: certification by independent witnesses and certification supported by audits.


\subsection{Certified posted-price clearing}
\label{sec:posted_prices}

The count-check mechanism asks agents to report their types and then assigns buyers directly to sellers. This section considers a more decentralized institution. Sellers choose among posted-price markets, buyers apply to those markets, and an organizer clears the resulting applications. The organizer does not assign agents on the basis of individual type reports, but she still certifies the available markets and allocates their limited capacity.

This change of institution changes the use of information. The count check needs enough information to expose an individual inconsistency. Posted-price trade must also determine which markets should open, how many sellers should enter each one, and how buyers should be allocated when several of them seek the same capacity.

The construction requires a certified census, but the census need not be observed by everyone. Specifically, the theorem below uses the following arrangement. The organizer receives the certified census but does not publish it. She uses the seller and buyer counts to calculate prices and capacities, and announces only the price markets, seller capacities, and clearing rule. Because the capacities equal the seller counts, the seller side of the census becomes public. The buyer census need not be announced. Full public disclosure of the census is a sufficient special case.

Fix a census $C$. For each quality $q\in\Qualities$, let $n_q$ be the number of goods of that quality, and, for each type $\theta\in\Types$, let $m_\theta$ be the number of such buyers. Matching a type-$\theta$ buyer with a quality-$q$ good produces surplus
\[
        s_{\theta q}=v_\theta(q)-c(q).
\]

A \emph{type-class allocation} is a collection of nonnegative integers
\[
        y=(y_{\theta q})_{\theta\in\Types,q\in\Qualities},
\]
where $y_{\theta q}$ is the number of type-$\theta$ buyers assigned to goods of quality $q$. It is feasible if
\[
        \sum_{\theta\in\Types}y_{\theta q}\leq n_q \quad\text{for every }q\in\Qualities
\]
and
\[
        \sum_{q\in\Qualities}y_{\theta q}\leq m_\theta \quad\text{for every }\theta\in\Types.
\]
It is efficient if it maximizes
\[
        \sum_{\theta\in\Types} \sum_{q\in\Qualities} y_{\theta q}s_{\theta q}
\]
over all feasible type-class allocations.

The following standard assignment result supplies prices that support an efficient allocation.

\begin{lemma}
\label{lem:support}
For every census $C$, there are an efficient type-class allocation $y^*$ and nonnegative numbers $(u_\theta)_{\theta\in\Types}$ and $(\rho_q)_{q\in\Qualities}$ such that, for every $(\theta,q)\in\Types\times\Qualities$,
\[
        u_\theta+\rho_q\geq s_{\theta q} = v_\theta(q)-c(q),
\]
with equality whenever $y^*_{\theta q}>0$. Moreover,
\[
        \sum_{q\in\Qualities}y^*_{\theta q}<m_\theta \implies u_\theta=0
\]
and
\[
        \sum_{\theta\in\Types}y^*_{\theta q}<n_q \implies \rho_q=0.
\]

At prices $p_q=c(q)+\rho_q$, every buyer type assigned to quality $q$ obtains
\[
        v_\theta(q)-p_q = u_\theta = \max\left\{ 0,\max_{r\in\Qualities} \left[v_\theta(r)-p_r\right] \right\}.
\]
A seller who trades in market $q$ receives the nonnegative payoff $\rho_q$.
\end{lemma}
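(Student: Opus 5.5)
The plan is to read the type-class problem as a finite transportation linear program and to obtain the prices from LP duality. The primal is to maximize $\sum_{\theta,q}y_{\theta q}s_{\theta q}$ subject to $\sum_\theta y_{\theta q}\le n_q$, $\sum_q y_{\theta q}\le m_\theta$, and $y\ge 0$. The dual is to minimize $\sum_\theta m_\theta u_\theta+\sum_q n_q\rho_q$ subject to $u_\theta+\rho_q\ge s_{\theta q}$ and $u,\rho\ge 0$.

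First I will establish existence and integrality. The primal feasible set is a nonempty polytope, since $y=0$ is feasible and every coordinate is bounded by $\min\{n_q,m_\theta\}$. An optimum therefore exists. The constraint matrix is the incidence matrix of a bipartite graph (buyer types $\times$ qualities), which is totally unimodular. Because the right-hand sides $n_q,m_\theta$ are integers, some optimal solution is an extreme point and hence integral. This gives an efficient type-class allocation $y^*$, which is also optimal among integer allocations.

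Next I will apply strong duality. It yields an optimal dual pair $(u,\rho)$, which is nonnegative and satisfies $u_\theta+\rho_q\ge s_{\theta q}$. Complementary slackness then delivers exactly the stated properties:
\begin{itemize}
\item $y^*_{\theta q}>0$ implies $u_\theta+\rho_q=s_{\theta q}$;
\item a slack buyer constraint $\sum_q y^*_{\theta q}<m_\theta$ implies $u_\theta=0$;
\item a slack seller constraint $\sum_\theta y^*_{\theta q}<n_q$ implies $\rho_q=0$.
\end{itemize}
The main obstacle is this integrality step; it is handled by total unimodularity. Alternatively, one could cite the assignment-game results of \citet{ShapleyShubik1971} applied to the agent-level game and then average the core payoffs by type.

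Finally I will verify the price statements with $p_q=c(q)+\rho_q$.
\begin{itemize}
\item For every $r$, we have $v_\theta(r)-p_r=s_{\theta r}-\rho_r\le u_\theta$ by dual feasibility.
\item Also $u_\theta\ge 0$, so $u_\theta\ge\max\{0,\max_r[v_\theta(r)-p_r]\}$.
\item If $y^*_{\theta q}>0$, equality in the dual constraint gives $v_\theta(q)-p_q=u_\theta$. This value attains the inner maximum, so the displayed identity holds.
\item A seller trading in market $q$ receives $p_q-c(q)=\rho_q\ge 0$.
\end{itemize}
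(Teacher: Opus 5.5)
Your proposal is correct and follows essentially the same route as the paper. Both arguments take the linear relaxation, get an integral optimum from the bipartite-incidence (totally unimodular) structure, and obtain $(u,\rho)$ from strong duality and complementary slackness; both then check the price identities directly from dual feasibility, complementary slackness, and $u_\theta\geq 0$.
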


The numbers $u_\theta$ and $\rho_q$ divide the surplus supported by the assignment prices. If a type-$\theta$ buyer is assigned to quality $q$, then $v_\theta(q)-p_q=u_\theta$. No other quality gives that buyer more than $u_\theta$. Positive $\rho_q$ means that every good of quality $q$ must be sold in the supported allocation. Likewise, positive $u_\theta$ means that every buyer of type $\theta$ must be served. These two observations determine the clearing requirements below.

For every quality $q$, the organizer announces a \emph{price market} carrying the label $q$ and the transaction price $p_q$. The quality label matters when two qualities have the same price, as it preserves them as separate certified markets with separate capacities.

Each seller chooses one price market. Let
\[
        \widetilde n_q \coloneqq \#\{ i\in\Sellers \mid  \text{seller }i\text{ chooses market }q \}
\]
be the number of sellers choosing market $q$. The vector $(\widetilde n_q)_{q\in\Qualities}$ records sellers' market choices, while $(n_q)_{q\in\Qualities}$ gives the certified number of sellers of each quality.

For buyers, define
\[
        U_\theta(q) \coloneqq v_\theta(q)-p_q
\]
and
\[
        U_\theta^* \coloneqq \max\left\{ 0,\max_{q\in\Qualities}U_\theta(q) \right\}.
\]
Thus, $U_\theta^*$ is the highest payoff available to a type-$\theta$ buyer, including the outside option.

Each buyer application carries one of two designations. Under the intended strategy, a buyer designates an application to market $q$ as \emph{strong} if $U_\theta(q)=U_\theta^*>0$, and as \emph{standby} if $U_\theta(q)=U_\theta^*=0$. A strong application seeks a market that gives the buyer her highest available payoff and a strict gain from trade. A standby application records willingness to trade when the buyer is indifferent between trading and remaining unmatched. These designations are messages submitted by buyers, but not certificates of buyer type.

The posted-price clearing protocol proceeds as follows.
\begin{enumerate}[label=\textbf{\arabic*.},leftmargin=2em]
    \item The organizer announces the price markets $(q,p_q)_{q\in\Qualities}$ and seller capacities $(n_q)_{q\in\Qualities}$;

    \item Each seller chooses one price market. Trade proceeds only if $\widetilde n_q=n_q$ for every $q\in\Qualities$. If this condition fails, the market closes and no trade occurs;

    \item If the market opens, each buyer submits a finite list of applications, designating each one as either \emph{strong} or \emph{standby};

    \item The organizer seeks a feasible matching with the following properties:
    \begin{enumerate}[label=\textup{(\roman*)}]
        \item every buyer who submitted a strong application is assigned to a market to which she applied strongly;

        \item every seller in a market with $p_q>c(q)$ is assigned to a buyer who submitted either a strong or a standby application to that market;

        \item no buyer is assigned to a market to which she did not apply.
    \end{enumerate}
    If several such matchings exist, the organizer first keeps those with the smallest number of trades and then uses a fixed public tie-breaking rule. If none exists, no trade occurs. Every matched buyer pays the price of the market to which she is assigned.
\end{enumerate}

The intended seller strategy is to choose the market bearing the seller's true quality label. The intended buyer strategy is as follows. If $U_\theta^*>0$, the buyer applies strongly to every market that gives her $U_\theta^*$ and submits no other application. If $U_\theta^*=0$, she submits standby applications to every market that gives her zero payoff. She does not apply to a market that gives a negative payoff.

The minimum-trade convention matters only when both sides of a possible trade receive zero. A standby buyer who trades in a market with $p_q=c(q)$ receives zero, as does the seller, and the trade creates no surplus. The clearing rule omits such trades.

\begin{theorem}
\label{thm:posted}
Suppose that the organizer observes a correctly certified market census and announces the prices and seller capacities constructed above. Then,
\begin{enumerate}[label=(\roman*)]
    \item The prescribed seller choices and buyer applications form an ex post equilibrium of the posted-price clearing protocol;
    \item  The full census itself need not be disclosed to market participants;
    \item At every realized type profile, all equilibrium trades occur at date zero, the resulting matching maximizes total surplus and contains no zero-surplus trade, and every participant receives a nonnegative payoff;
    \item For any common prior over type profiles and any commonly known rule for selecting supporting prices at each census, the same strategies can be completed with beliefs to form a pure-strategy perfect Bayesian equilibrium.
\end{enumerate}
\end{theorem}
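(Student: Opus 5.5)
The proof proceeds in three stages. First, I will show that on the equilibrium path the clearing problem in step~4 has a solution and that the selected matching is efficient. Second, I will verify that no unilateral seller or buyer deviation is profitable. Third, I will handle disclosure and the Bayesian completion.

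\textbf{Stage one.} Fix the true profile and census $C$, and take $y^*$, $(u_\theta)$, $(\rho_q)$ from Lemma~\ref{lem:support}. Under the prescribed seller strategy, $\widetilde n_q=n_q$, so the market opens. By Lemma~\ref{lem:support}, every buyer type $\theta$ with $y^*_{\theta q}>0$ satisfies $U_\theta(q)=u_\theta=U^*_\theta$. Hence a buyer who applies under the intended strategy has applied to every market to which $y^*$ could send her type.

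To build a matching satisfying (i)--(iii), I will assign buyers to sellers according to $y^*$. I will drop pairs with $s_{\theta q}=0$ and $u_\theta=\rho_q=0$, which can only help the minimum-trade criterion.

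Condition (i) holds because $u_\theta>0$ forces all $m_\theta$ type-$\theta$ buyers to be assigned. Condition (ii) holds because $\rho_q>0$ forces all $n_q$ goods to be sold.

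Next I will show that every matching satisfying (i)--(iii) is efficient. Each pair it contains is a $(\theta,q)$ pair at which $u_\theta+\rho_q=s_{\theta q}$, so total surplus equals the sum of $u_\theta$ over matched buyers plus the sum of $\rho_q$ over matched sellers. By (i), this includes every buyer with $u_\theta>0$; by (ii), every seller with $\rho_q>0$. It therefore equals $\sum_\theta m_\theta u_\theta+\sum_q n_q\rho_q$. By LP duality for the type-class assignment problem, this is the optimal value, and by integrality it equals $W^*$.

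The minimum-trade rule then removes zero-surplus trades. A trade with $s_{\theta q}=0$ requires $u_\theta=\rho_q=0$, so neither (i) nor (ii) forces it, and it can be deleted without violating any condition. Payoffs are $u_\theta\ge0$ for buyers and $\rho_q\ge0$ for sellers, all realized at date zero. This gives (iii).

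\textbf{Stage two: deviations.} A seller who switches to another market makes some $\widetilde n_q\neq n_q$, which closes the market and yields her $0\le\rho_q$. So seller deviations are unprofitable.

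A buyer who deviates can only be assigned to a market she applied to, at a posted price. Her payoff is therefore at most $U^*_\theta$, which is what she obtains on path whenever she is assigned. The main obstacle is the case where she is unassigned on path. That can only happen when $u_\theta=0$, so $U^*_\theta=0$ and she obtains her maximum anyway. Thus no deviation gains, whatever the deviation does to feasibility, since failed clearing gives her zero.

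Because each agent's best-response argument uses only the realized prices, not beliefs about others' types, the profile is an ex post equilibrium. This gives (i).

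\textbf{Stage three.} For (ii), the announcements consist of $p_q$, $n_q$ and the rule. The buyer counts $m_\theta$ enter only through the prices, and many buyer censuses induce the same announcement, so the full census need not be disclosed.

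For (iv), fix a common prior and a price-selection rule. Beliefs at the information sets reached on path follow from Bayes' rule given the announcement. Off-path information sets arise only after the market has closed, where no decisions remain, or after buyer-side deviations, where beliefs can be chosen arbitrarily. Ex post optimality then implies sequential rationality under any beliefs, which completes the perfect Bayesian equilibrium.
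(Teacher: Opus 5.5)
Your stages one and two follow the paper's proof of (i) and (iii) closely. Part (iv), however, has a genuine gap. You claim that off-path information sets arise only after the market closes or after buyer deviations, and that ex post optimality then yields sequential rationality under any beliefs. Buyers move simultaneously and clearing is mechanical, so nothing strategic follows a buyer deviation. The off-path histories that matter are coordinated seller deviations that preserve every count, e.g.\ a seller of quality $q$ and a seller of quality $q'$ swapping markets. Then $\widetilde n_q=n_q$ for all $q$, the market opens, and buyers must still apply. If buyers observe seller choices and the prior gives the swapped assignment zero probability, Bayes' rule does not determine their beliefs there.

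Your ex post argument does not reach these histories, because it values assignment to market $q$ at $v_\theta(q)-p_q$, which presupposes that the sellers there have quality $q$. Under arbitrary beliefs the prescribed applications can fail. Take one seller of each of two qualities, zero costs, one buyer valuing them at $10$ and $1$, and supporting prices $p_H=5$, $p_L=0$. A buyer who believes the seller in market $H$ is the low-quality one expects $1-5<0$ from her prescribed strong application, and prefers to withdraw it, which yields $0$. The paper closes this by specifying that at such histories buyers believe every seller in market $q$ has quality $q$. Since the clearing requirements depend only on counts and applications, the on-path argument then applies verbatim. If you intend buyers to observe only whether the capacity check passed, these histories are pooled with the equilibrium path and Bayes' rule delivers exactly these beliefs. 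That has to be stated, though, not replaced by ``any beliefs.''

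Two smaller points. First, your efficiency step (each selected pair satisfies $u_\theta+\rho_q=s_{\theta q}$) and your buyer-deviation step both use $U^*_\theta=u_\theta$ for every type present. Lemma~\ref{lem:support} gives this only for types assigned under $y^*$. For a type with $m_\theta>0$ that $y^*$ leaves entirely unassigned, add the paper's line: the slack buyer constraint forces $u_\theta=0$, and dual feasibility gives $v_\theta(q)-p_q=s_{\theta q}-\rho_q\le 0$ for all $q$, so $U^*_\theta=0=u_\theta$. Second, in (ii), the assertion that many buyer censuses induce the same announcement is unnecessary and false in general. With one seller and one buyer, a seller-optimal selection of dual prices can make the posted price reveal the buyer's type. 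What (ii) needs is only that your equilibrium verification uses nothing beyond the announced prices, capacities, clearing rule, and each agent's own type.
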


The role of the clearing rule is worth emphasizing. At the prices supplied by Lemma~\ref{lem:support}, buyers with positive gains apply only to markets that give them their highest payoff. This does not ensure that their independent applications respect capacity. The clearing rule chooses among their acceptable markets so that all buyers with positive supported payoffs are served and all sellers with positive supported payoffs trade. Any matching with these properties attains the first-best surplus.

The theorem separates the information needed to construct the market from the information disclosed to participants. Lemma~\ref{lem:support} uses both the seller and buyer counts to calculate an efficient type-class allocation and supporting prices. The organizer must have access to those counts in this construction. Market participants, however, need observe only the resulting prices, seller capacities, and clearing rule.

For the next result, consider this alternative information arrangement. The organizer has no aggregate information beyond a public signal of the census, $\psi(C)$. Prices and capacities must be determined from that signal, which must contain enough information to support the announcements made by the organizer. The result gives a necessary condition for announcing the correct seller capacities. However, it is not a complete characterization of the information needed to calculate supporting prices.

Let $\psi:\Domain\to\mathcal Z$ be a public signal. After observing $z=\psi(C)$, the organizer announces a seller-capacity vector $\overline n(z) = (\overline n_q(z))_{q\in\Qualities}$, where $\sum_{q\in\Qualities}\overline n_q(z)=N$. The market bearing label $q$ is intended for sellers of quality $q$, and trade proceeds only when the number of sellers choosing each market equals the announced capacity of that market.

\begin{proposition}
\label{prop:seller_capacity_lower_bound}
Suppose that $\psi(C)$ is the organizer's only aggregate information about the seller side of the market. If a quality-labelled posted-price protocol uses the capacity check described above and implements trade at every $C\in\Domain$, then
\[
n(C)\neq n(C') \implies \psi(C)\neq\psi(C')
\]
for every $C,C'\in\Domain$, where $n(C)=(n_q(C))_{q\in\Qualities}$ is the seller count vector.

On the unrestricted domain, the public signal must have at least
\[
\binom{N+K-1}{K-1}
\]
possible values.
\end{proposition}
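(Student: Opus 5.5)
The argument is a contrapositive. Suppose two censuses $C,C'\in\Domain$ have $n(C)\neq n(C')$ but $\psi(C)=\psi(C')=z$. Since $\psi(C)$ is the organizer's only aggregate information about the seller side, the announced capacity vector $\overline n(z)$ must be the same at both censuses. I will show that implementing trade at every census forces $\overline n(\psi(C))=n(C)$ at every $C\in\Domain$. Given that, $n(C)=\overline n(z)=n(C')$, which is a contradiction.

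The key step is to establish $\overline n(\psi(C))=n(C)$. Under the intended seller strategy, each seller chooses the market bearing her true quality label. At any realized profile with census $C$, the choice counts are therefore $\widetilde n_q=n_q(C)$ for every $q$. The capacity check lets trade proceed only if $\widetilde n_q=\overline n_q(z)$ for all $q$, so if $\overline n(z)\neq n(C)$ the market closes and no trade occurs. I read ``implements trade at every $C$'' as requiring that the protocol open the market under the intended quality-labelled seller behaviour. This is the natural reading, since the labels exist precisely so that market $q$ hosts the quality-$q$ sellers.

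One caveat applies when the efficient allocation at $C$ is empty. There a closed market still yields the ``right'' outcome, and a degenerate reading could escape the argument. I would handle this by taking implementation to mean that the capacity check passes under truthful market choices, matching the protocol described in Theorem~\ref{thm:posted}. The main obstacle is making this interpretive point explicit. Once the interpretation is fixed, the rest of the argument is immediate.

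For the counting bound, note that on the unrestricted domain the seller count vector $n$ ranges over all of $\{n\in\mathbb Z_+^K:\sum_r n_r=N\}$. Pairing each such $n$ with any fixed buyer census gives an element of $\Census_{N,M}$. By the first part, $\psi$ is injective on the resulting set of censuses, which has one element for each seller count vector. By stars and bars, the number of such vectors is $\binom{N+K-1}{K-1}$, so $\psi$ must have at least that many values.
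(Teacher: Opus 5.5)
Your proof is correct and follows essentially the same route as the paper: under the intended quality-labelled seller choices the market counts equal $n(C)$, so the capacity check forces $\overline n(\psi(C))=n(C)$, which makes $\psi$ distinguish distinct seller count vectors, and stars and bars gives the bound. Your caveat about censuses whose efficient allocation is empty makes explicit an interpretation the paper leaves implicit when it asserts that implementation at $C$ requires $\overline n(\psi(C))=n(C)$.
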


Proposition~\ref{prop:seller_capacity_lower_bound} gives a seller-side requirement. The result does not imply that the full market census must be public, nor does it characterize all the information needed to calculate prices. If the organizer privately observes the census, she can calculate prices and capacities before announcing the resulting market structure.

If the public signal is the organizer's only aggregate information, the signal must also permit her to choose suitable prices. Lemma~\ref{lem:support} shows that supporting prices may depend on both seller and buyer composition. However, this does not mean that the signal must always reveal the buyer census, as the same prices may support efficient choices at several buyer censuses. Appendix~\ref{subsec:buyer_self_selection} gives sufficient conditions for that possibility.

The tally in Theorem~\ref{thm:main_countcheck} generally does not contain enough information to run the posted-price protocol, as it was designed to expose a unilateral inconsistency, but not to recover seller capacities or calculate prices. Report checking and market clearing place different demands on information.


\subsection{Information for incentives and for coordination}
\label{sec:incentives_coordination}

The preceding results give public information two different roles. In the count-check mechanism, the tally tests whether individual reports are consistent with the composition of the market. In the posted-price protocol, information is also used to determine prices and capacities, while the clearing rule allocates buyers among the available sellers. The first task is one of incentives, whereas the second is one of coordination.

The distinction can be seen in a one-shot buyer-choice institution. At date zero, sellers are publicly available on announced terms. After observing the public information, each buyer simultaneously chooses one seller or the outside option. Buyers receive no private recommendations. If exactly one buyer chooses a seller, they trade. If several buyers choose the same seller, at most one trades, according to a fixed tie-breaking rule, and the others remain unmatched. An unmatched buyer cannot turn to another seller within the same trading round.

I will refer to \emph{congestion} as the event in which several buyers choose the same seller while another seller who could support a valuable trade remains unused. The institution ends after the date-zero choices. One could allow an unsuccessful buyer to search again at a later date, but this would not restore immediate efficiency, as it would reduce surplus because $\delta<1$. The one-shot formulation isolates the coordination problem before delay is allowed to correct it.

Figure~\ref{fig:anonymous_coordination} gives the core of the coordination problem. The census permits three different pairs of high-valuation buyers, but there are only two sellers. Whatever seller choices are assigned to high-valuation buyers, some possible pair must be assigned to the same seller. Randomization can alter the probability of congestion, but it cannot rule it out at every profile consistent with the census.

\begin{figure}[t]
    \centering
    \includegraphics[width=0.8\textwidth]{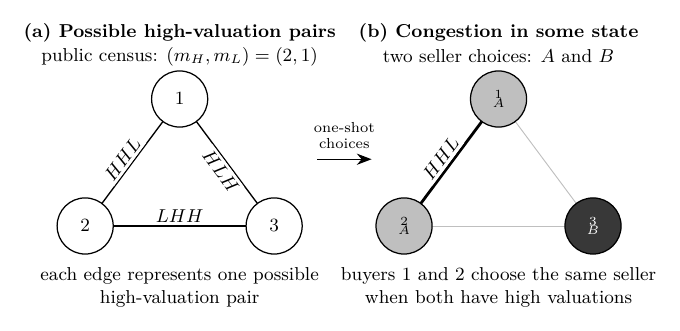}
    \caption{\textit{Why an anonymous census does not coordinate buyers.} Panel (a) shows the three possible high-valuation pairs consistent with a census containing two high-valuation buyers and one low-valuation buyer. In panel (b), the letter inside each node denotes the seller chosen by that buyer when she has a high valuation. With three potential high-valuation buyers and only two sellers, some pair must choose the same seller. If that pair has high valuations, congestion leaves the other seller unused.}
    \label{fig:anonymous_coordination}
\end{figure}

\begin{proposition}
\label{prop:incentive_coordination_main}
There is a finite market in which a one-bit tally is sufficient for unilateral count checking, but even the full market census does not permit the one-shot buyer-choice institution to guarantee immediate efficient trade at every type profile consistent with that census.
\end{proposition}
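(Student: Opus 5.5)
The proof proceeds by explicit example, following Figure~\ref{fig:anonymous_coordination}. Take a binary market with $K=L=2$, $N=2$ sellers and $M=3$ buyers. All goods have the same quality, or more simply both qualities share valuations, so what matters is buyer type. Set $c(q)=0$, $v_{\theta^H}(q)=1$ and $v_{\theta^L}(q)=0$ for both qualities. On the unrestricted domain $\Census_{2,3}$, Corollary~\ref{cor:binary} gives that the one-bit tally $\phi(C)=(n_H+m_H)\bmod 2$ is separating. By Theorem~\ref{thm:main_countcheck}(i)--(ii), it therefore suffices for unilateral count checking. This disposes of the first half of the statement.

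For the second half, fix the census with $m_H=2$ and $m_L=1$, and suppose it is fully disclosed to buyers. The type profiles consistent with it are the three ways of choosing which two of the three buyers are high-valuation. In each such profile the first-best surplus is $W^*=2$: both sellers trade with the two high buyers. Hence efficient trade at date zero requires that the two high buyers choose \emph{distinct} sellers. Any trade with the low buyer has zero surplus, and it cannot substitute for a high buyer because there are only two sellers.

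The key step is a pigeonhole argument over pure strategies. Buyers receive no private recommendations, so buyer $j$'s choice can depend only on the public information (the census) and her own type. When high, buyer $j$ chooses some $s_j\in\{1,2,\text{outside}\}$. If some high buyer chooses the outside option, then in a profile where she is high, surplus falls below $2$. Otherwise $s_1,s_2,s_3\in\{1,2\}$, so two buyers $j\neq j'$ share a seller. At the profile where exactly $j,j'$ are high, at most one of them trades, and only one seller serves a high buyer. Surplus is then at most $1<2$. Remaining unmatched is final in the one-shot round, and delay is excluded by construction, so efficiency fails at that profile.

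The step I expect to need most care is mixed strategies, together with the meaning of ``guarantee.'' Under mixing, the event that $j$ and $j'$ pick the same seller has positive probability for at least one pair, since some pair's supports overlap or someone places weight on the outside option. So efficient trade cannot occur with probability one at every consistent profile. I would state the result as: no strategy profile, pure or mixed, yields a first-best matching almost surely at all three profiles. Finally, I would note that the argument never uses the announced terms, because prices cannot break the symmetry among anonymous high buyers. This makes the conclusion independent of the pricing rule.
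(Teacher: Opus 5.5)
Your proof is correct and is essentially the paper's: the same two-seller, three-buyer example with a census of two high and one low buyer, a one-bit tally justified by $\max\{K,L\}=2$, and a pigeonhole argument over the three possible high-type pairs. Your $K=2$ with payoff-identical qualities is a cosmetic variant of the paper's single quality, and you add a treatment of the outside option that the paper omits by restricting choices to $\{A,B\}$. One refinement concerns randomization: your overlapping-supports argument needs independent mixing, whereas the paper notes that the three events $X_i\neq X_j$ cannot all hold with probability one because their intersection is empty, and this also covers public or correlated randomization.
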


The example on which Proposition~\ref{prop:incentive_coordination_main} is based on has two identical sellers and three buyers. The public census reveals that two buyers have high valuations and that both goods should be sold to them. After learning her own type, a high-type buyer still cannot identify which of the other two buyers is also a high type. She must choose a seller using only her own type, her identity, and the information common to the market. A low-type buyer can infer that the other two buyers are high types, but she takes the outside option and does not resolve their coordination problem.

This is not enough to avoid congestion in every state. Any rule that assigns a seller to each of the three possible high-type buyer identities must direct at least two identities to the same seller. If those two buyers are the high types, they meet at one seller while the other seller remains unused. Randomization cannot remove this possibility for every pair of high types. The difficulty here lies on the absence of information or instructions that divide the relevant buyers between the two sellers.

A clearing rule provides those instructions after observing the applications. It can assign buyers across markets while respecting the choices they find acceptable. A private recommendation based on the realized applications, a queue, or a direct assignment could perform a similar function. If unsuccessful buyers are allowed to search again, the market uses delay to repair the initial congestion rather than achieving efficiency at date zero.

The example underlying the above proposition does not imply that decentralized choice must always fail. Prices may sometimes separate buyers in a way that also respects the available capacities. Appendix~\ref{app:decentralized_prices} gives conditions under which this is possible. The proposition establishes only that aggregate disclosure, even when it reveals the full census, does not generally supply the coordination required for immediate efficient trade.


\subsection{Economizing on public disclosure}
\label{sec:privacy_disclosure}

The count-check result separates the information an institution may need to verify from the information it must reveal publicly. An organizer may observe the census in order to certify the tally, but the tally itself can disclose far less than the census.

One way to measure the difference between tallies and censuses is that of counting the number of announcements that may occur. On the unrestricted domain, the seller side of the census can take
\[
        \binom{N+K-1}{K-1}
\]
different values, while the buyer side can take
\[
        \binom{M+L-1}{L-1}
\]
different values. Full disclosure must distinguish every combination of these two count vectors.

\begin{lemma}
\label{prop:disclosure_gap_main}
On the unrestricted domain, disclosure of the full market census requires
\begin{equation}
        |\Census_{N,M}| = \binom{N+K-1}{K-1} \binom{M+L-1}{L-1}
        \label{eq:full_census_messages}
\end{equation}
possible public announcements. A tally sufficient to reject every unilateral false report requires only $\max\{K,L\}$ announcements. For fixed $K$ and $L$, the number required by the tally does not vary with market size. The number of possible censuses grows with $N$ whenever $K\geq2$ and with $M$ whenever $L\geq2$.
\end{lemma}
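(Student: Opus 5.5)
The plan has four steps: count the census, argue that full disclosure must use that many messages, invoke Theorem~\ref{thm:main_countcheck}(iii) for the tally, and show the census count grows.

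First, count $\Census_{N,M}$. By definition it is the Cartesian product of the seller count vectors and the buyer count vectors. The seller vectors are $\{n\in\mathbb Z_+^K:\sum_r n_r=N\}$ and the buyer vectors are $\{m\in\mathbb Z_+^L:\sum_\ell m_\ell=M\}$. I will count each factor by stars and bars. Nonnegative integer solutions of $n_0+\dots+n_{K-1}=N$ correspond bijectively to arrangements of $N$ stars and $K-1$ bars. There are $\binom{N+K-1}{K-1}$ of these, and the buyer side is analogous. The product gives \eqref{eq:full_census_messages}.

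Second, argue that full disclosure needs exactly this many messages. A tally reveals the full census if it is one-to-one. It is onto by convention. Hence $|\mathcal Z|=|\Domain|=|\Census_{N,M}|$ on the unrestricted domain.

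Third, the tally claim follows directly from Theorem~\ref{thm:main_countcheck}. Part (ii) shows that rejecting every unilateral false report is equivalent to separation. Part (iii) gives $\kappa^{\mathrm{cc}}(\Census_{N,M})=\max\{K,L\}$, attained by the modular tally \eqref{eq:modular_tally}. This bound involves only $K$ and $L$, so it is independent of $N$ and $M$.

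Fourth, show growth in market size. For $K\ge2$, write $\binom{N+K-1}{K-1}=\prod_{i=1}^{K-1}\frac{N+i}{i}$. Each factor with $i\ge1$ is strictly increasing in $N$, and there is at least one such factor. Hence the seller count strictly increases with $N$. Since the buyer factor is a positive integer not depending on $N$, the product increases with $N$. The same argument applies to $M$ when $L\ge2$.

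The only mildly delicate point is the stars-and-bars bijection, and that is standard. One could also add that for $K\ge2$ the count grows at least linearly, like $N+1$, while the tally stays constant. So the gap is unbounded, but that is not needed for the statement as written.
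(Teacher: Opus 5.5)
Your proof is correct and follows the paper's argument: count each side by stars and bars, multiply because the two sides vary independently on the unrestricted domain, note that a one-to-one tally must distinguish every census, and cite Theorem~\ref{thm:main_countcheck} for the $\max\{K,L\}$ bound. Your product formula $\binom{N+K-1}{K-1}=\prod_{i=1}^{K-1}\frac{N+i}{i}$ proves the growth claim explicitly, which the paper leaves as evident from the binomial expression.
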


The comparison concerns public disclosure, not the information collected by the organizer. Producing the tally may require access to individual records or to the full census. The lemma says that these records need not all be revealed to market participants. An intermediary can use detailed information for verification while publishing only the announcement required for the count check.

Nor should the result be read as a general privacy guarantee. A tally still reveals information about the market, and its implications may depend on what participants already know. Furthermore, the lemma uses a prior-free criterion, as it counts the number of announcements that the institution may have to make. By constrast, it does not model privacy preferences, limit statistical inference, or protect individual information under every possible prior.

If a prior over market compositions is available, disclosure can instead be measured by the entropy of the public announcement. The institution would then choose, among tallies that distinguish one-agent neighbors, the one whose announcements carry the least expected information under that prior. This criterion may favor an uneven use of announcements, assigning a common announcement to likely censuses whenever the incentive requirement permits. Appendix~\ref{subsec:privacy_objective} develops this prior-dependent formulation.

The two criteria answer different questions. Counting announcements asks how many public outcomes may be required in the worst case. Entropy asks how much information the tally is expected to convey under a specified distribution of market compositions. The main results use the former because they do not assume a prior.


\section{Discussion and extensions}
\label{sec:discussion_extensions}

The minimal-tally result is tied to a particular institution. In the count-check mechanism, public information is used to compare individual reports with an independently certified feature of the market. For that purpose, the tally need only change when one participant changes her reported type. Other institutions may use prices, menus, audits, or individual evidence to discourage the same deviation. They may consequently require different public information.

Appendix~\ref{app:beyond_count_checks} studies this broader question. It asks which market censuses may share a public announcement when the mechanism is allowed to use transfers and allocations, rather than rejection alone, to provide incentives. In general, two neighboring censuses may sometimes be combined because prices or transfers make the relevant false report unattractive. The lower bound from the count-check mechanism does not apply to every possible institution. It extends to more general mechanisms under an additional condition. That is, neighboring censuses must be impossible to combine while preserving efficiency, individual rationality, budget balance, and ex post incentive compatibility. The appendix states this condition and gives economic circumstances under which it holds.

Certification raises a separate institutional question. The main analysis assumes that the public tally has an informational anchor, but it does not require any particular form of certification. Appendix~\ref{app:certification_extensions} considers certification by witnesses and by audits, and then allows the public certificate itself to be imperfect. These arrangements differ from self-certification because information or discipline enters from outside the unrestricted reports being checked.

The count check is also limited in its treatment of coordinated deviations. A stronger tally can detect a coalition when the coalition changes the reported census. Appendix~\ref{app:coalitions} studies how the required number of announcements varies with the largest coalition to be detected. No anonymous statistic of the census, however, can expose a permutation of reports that leaves every type count unchanged.

The appendix separates this informational limitation from the coalition's incentives. Under a symmetric outcome rule, a census-preserving permutation is profitable precisely when agents' preferences over trading positions form a cycle in which every participant weakly gains and at least one strictly gains. Strict self-selection rules out such cycles. These conditions strengthen the truthful equilibrium against coordinated deviations, but they do not provide full implementation. An unprofitable deviation from truth may still constitute another equilibrium once the corresponding false report profile has been reached.

This observation is closely related to the distinction between weak and full implementation. The count-check mechanism supports truthful reporting as an ex post equilibrium, but it may admit two other kinds of equilibrium. An accepted false profile is an equilibrium whenever it gives every agent a nonnegative true payoff, because any unilateral change causes rejection. A rejected profile may also be an equilibrium when no agent can restore consistency alone. Appendix~\ref{sec:weak_full_implementation} establishes both results and gives an accepted false profile that produces an inefficient allocation even when the entire census is public.

Ruling out profitable report permutations does not suffice for full implementation. Eliminating every inefficient equilibrium requires an additional instrument, such as identity-linked verification, audits, legal penalties, a procedure for revising inconsistent reports, or transfers that destabilize false report profiles.

The role of buyer information likewise depends on the institution. Buyer counts enter the tally because buyers report private valuation types in the direct mechanism. If buyer types are publicly known, that part of the tally is unnecessary. Buyer counts may also become unnecessary when posted prices separate buyer types without requiring reports. Appendix~\ref{app:buyer_side} develops these cases. Appendix~\ref{app:decentralized_prices} goes further and gives conditions under which posted prices guide buyers to an efficient allocation without a clearing rule. Those conditions require not only that buyers prefer the appropriate markets, but also that their choices respect capacity and that sellers do not benefit from entering markets intended for other qualities.

The finite nature of the argument is equally important. In a finite market, one false report changes an integer count and can change the tally. In a continuum economy, a single agent has measure zero and does not alter the aggregate distribution. Appendix~\ref{subsec:large_finite_continuum} develops this distinction and considers what survives along sequences of increasingly large finite markets. The number of announcements needed to detect a unilateral change remains bounded when the sets of qualities and buyer types are fixed, but the ability to detect that change disappears in the nonatomic limit.

Finally, the informational and institutional requirements can be viewed as substitutes. A direct mechanism can operate with limited public disclosure but takes an active role in collecting reports and assigning trades. A posted-price institution resembles decentralized exchange more closely, but generally requires certified prices, capacities, and a means of clearing applications. Appendix~\ref{subsec:privacy_objective} considers alternative measures of disclosure and explains why disclosure alone does not determine which institution is preferable. The amount of public information required for efficient trade depends on the work that the institution asks that information to perform.


\section{Conclusion}
\label{sec:conclusion}

Markets often use inefficient delay to learn what participants know privately. This paper considers an alternative procedure: certify a limited feature of market composition, and use it to discipline individual reports. In the count-check mechanism, the full census need not be disclosed. To reach efficient trade, it is enough for the public tally to change whenever one participant changes her reported type. Among tallies required to reject every unilateral false report, the least number of announcements is $\max\{K,L\}$ when every composition is possible, regardless of the number of buyers and sellers.

The result separates verification from disclosure. An institution may need detailed records to produce a trustworthy tally, but it need not reveal those records to market participants. The tally cannot, however, be reconstructed from the reports it is meant to check. It must rest on an independent source of information or discipline.

Limited disclosure also has limits. A tally may support truthful reporting without telling buyers how to divide themselves among scarce trading opportunities. Posted-price trade may require certified capacities and a clearing rule even when the incentive problem has been addressed. Information that verifies reports and information that organizes trade perform different tasks.

The informational needs of a market cannot be separated from its institutional design. A small public certificate may be enough when the institution collects reports and assigns trades. A more decentralized arrangement may require richer information or greater coordination. What must be disclosed depends on what the market asks the disclosure to accomplish.


\newpage
\appendix


\section{Proofs}
\label{app:proofs_main}


\subsection{Count-check mechanisms}
\begin{proof}[Proof of Theorem~\ref{thm:main_countcheck}]
 
First, suppose $\phi$ is separating. Fix a realized type profile $(q,\theta)$ with census $C_0=C(q,\theta)\in\Domain$, and certified tally $z_0=\phi(C_0)$. If every agent reports truthfully, the reported census is $\widehat C=C_0$, so the count check passes. The mechanism chooses $x^*(q,\theta)$ at date zero. Transfers are balanced pair by pair.

Every matched pair in $x^*(q,\theta)$ generates strictly positive surplus by construction of the selection rule. Hence, if $(i,j)\in x^*(q,\theta)$, then $v_{\theta_j}(q_i)-c(q_i)>0$. At the midpoint price, seller $i$ receives 
\[
\frac{v_{\theta_j}(q_i)+c(q_i)}{2}-c(q_i)=\frac{v_{\theta_j}(q_i)-c(q_i)}{2}>0,
\]
and buyer $j$ receives the same strictly positive surplus share. Unmatched agents receive zero.

Now consider a unilateral false report by a seller. If seller $i$ changes her report from $q^r$ to $q^s\neq q^r$, while all other agents report truthfully, the reported census either lies outside $\Domain$, in which case it is rejected, or is a one-agent neighbor of $C_0$. In the latter case, separation gives $\phi(\widehat C)\neq\phi(C_0)=z_0$, so the check fails. The deviating seller receives zero. Truthful reporting gives her a positive payoff if she is matched, and gives her zero otherwise. Thus, she cannot profitably deviate. The buyer argument is identical. It follows that truthful reporting is an ex post equilibrium, and the equilibrium outcome is immediate, efficient, and budget balanced.

Suppose now that a tally makes every unilateral false report fail the count check, but is not separating. Then, there exist one-agent neighboring censuses $C,C'\in\Domain$ with $\phi(C)=\phi(C')$. Because they are one-agent neighbors, there is a type profile with census $C$ and a unilateral false report that changes the reported census to $C'$. Under the certified tally $\phi(C)$, this false report passes the check because $\phi(C')=\phi(C)$. This contradicts the requirement that every unilateral false report fails. Hence, the tally must be separating.

It remains to prove the formula for the unrestricted domain. Let $R\coloneqq\max\{K,L\}$ and define $\phi$ by \eqref{eq:modular_tally}. If one seller changes her report from $q^r$ to $q^s$, with $r\neq s$, the tally changes by $s-r$ modulo $R$. Since $r,s\in\{0,\ldots,K-1\}$ and $R\geq  K$, the nonzero integer $s-r$ has absolute value strictly less than $R$ and is not congruent to zero modulo $R$. The same argument applies to a buyer who changes her report from $\theta^\ell$ to $\theta^h$. In this case, the tally changes by $h-\ell$, which is nonzero modulo $R$ because $R\geq  L$. Thus, $R$ labels are sufficient.

For necessity, fix any buyer census. On the seller side, take a base vector with $N-1$ sellers of quality $q^0$. For each $r\in\{0,\ldots,K-1\}$, add one seller of quality $q^r$. Any two of the resulting $K$ censuses differ only by the type of that one seller. A separating tally must assign all $K$ censuses different labels. Thus, at least $K$ labels are necessary. Similarly, fixing any seller census and fixing $M-1$ buyers at a base type gives $L$ buyer censuses that are pairwise one-agent neighbors. At least $L$ labels are necessary. It follows that at least $\max\{K,L\}$ labels are necessary, and the modular tally attains the bound.
   
\end{proof}


\begin{proof}[Proof of Corollary~\ref{cor:binary}]
A unilateral change in a seller's report changes $n_H$ by one, while a unilateral change in a buyer's report changes $m_H$ by one. Either change reverses the parity of $n_H+m_H$. The displayed tally separates every pair of one-agent neighbors and uses at most two announcements.

If $\Domain$ contains a pair of one-agent neighbors, those two censuses must receive different announcements, so at least two are necessary. If $\Domain$ contains no such pair, a constant tally is separating and one announcement is enough. The unrestricted domain contains one-agent neighbors. Hence, two announcements are necessary and sufficient on that domain.
\end{proof}


\begin{proof}[Proof of Proposition~\ref{prop:no_self_certification_main}]
Because $\phi$ is nonconstant, there are two profiles $\omega,\omega'\in\Omega_{\Domain}$ such that $\phi(C(\omega)) \neq \phi(C(\omega'))$. Truthful acceptance at $\omega'$ requires $A(m^T(\omega'))=1$.
The same message profile can be submitted when the true profile is $\omega$. Since the procedure observes only the messages, it must make the same decision and accept. At $\omega$, however, the message profile $m^T(\omega')$ reports a tally different from $\phi(C(\omega))$. Thus, it is tally-inconsistent. The procedure cannot reject it at $\omega$ while accepting the identical message at $\omega'$.
\end{proof}


\begin{proof}[Proof of Corollary~\ref{cor:self_generated_vacuous_main}]
The acceptance condition becomes $\phi(C(\hat\omega)) = \phi(C(\hat\omega))$, which holds for every $\hat\omega\in\Omega_{\Domain}$.
\end{proof}


\subsection{Posted-price clearing}
\label{app:proof_posted_main}

\begin{proof}[Proof of Lemma~\ref{lem:support}]
Fix a census $C$. Consider the linear relaxation of the type-class assignment
problem:
\[
        \max_{y\geq0} \sum_{\theta\in\Types} \sum_{q\in\Qualities} y_{\theta q}s_{\theta q}
\]
subject to
\[
        \sum_{\theta\in\Types}y_{\theta q}\leq n_q \quad\text{for every }q\in\Qualities
\]
and
\[
        \sum_{q\in\Qualities}y_{\theta q}\leq m_\theta \quad\text{for every }\theta\in\Types.
\]

The constraint matrix is the incidence matrix of a bipartite graph. Since the capacity vectors $n$ and $m$ have integer entries, the problem has an optimal solution with integer entries \citep[see, e.g.,][]{Kuhn1955,BurkardDellAmicoMartello2009}. Let $y^*$ be such a solution. It is feasible for the type-class assignment problem and maximizes total surplus.

The dual problem is
\[
        \min_{u,\rho\geq0} \left\{ \sum_{\theta\in\Types}m_\theta u_\theta + \sum_{q\in\Qualities}n_q\rho_q \right\}
\]
subject to
\[
        u_\theta+\rho_q\geq s_{\theta q} \quad \text{for every }(\theta,q)\in\Types\times\Qualities.
\]
Let $(u,\rho)$ be an optimal dual solution. Strong duality gives
\begin{equation}
        \sum_{\theta,q}y^*_{\theta q}s_{\theta q} = \sum_\theta m_\theta u_\theta + \sum_q n_q\rho_q.
        \label{eq:assignment_strong_duality}
\end{equation}

Complementary slackness gives
\[
        y^*_{\theta q}>0 \implies u_\theta+\rho_q=s_{\theta q}.
\]
It also gives
\[
        \sum_q y^*_{\theta q}<m_\theta \implies u_\theta=0
\]
and
\[
        \sum_\theta y^*_{\theta q}<n_q \implies \rho_q=0.
\]

Set $p_q=c(q)+\rho_q$. For every buyer type $\theta$ and quality $q$,
\[
v_\theta(q)-p_q= v_\theta(q)-c(q)-\rho_q = s_{\theta q}-\rho_q\leq u_\theta.
\]
If $y^*_{\theta q}>0$, complementary slackness makes the inequality an equality. Since $u_\theta\geq0$, an assigned type-$\theta$ buyer obtains her highest payoff among all qualities and the outside option. A seller who trades in market $q$ receives $p_q-c(q)=\rho_q\geq0$. This proves the result.
\end{proof}

\begin{proof}[Proof of Theorem~\ref{thm:posted}]
Fix a realized profile $(q,\theta)$ with certified census $C$, and let $(y^*,u,\rho,p)$ be supplied by Lemma~\ref{lem:support}. The organizer observes $C$ and publicly announces the resulting prices and seller capacities. Consider the strategy profile in which every seller of quality $q$ chooses market $q$ and every buyer uses the application rule stated in the main text.

Under the prescribed seller strategy, the number of sellers choosing market $q$ is $n_q$. Thus, the seller-capacity check passes.

For each buyer $j$, let
\[
        U_j \coloneqq \max\left\{ 0,\max_{q\in\Qualities} \left[v_{\theta_j}(q)-p_q\right] \right\}.
\]
We first show that $U_j=u_{\theta_j}$. If buyers of type $\theta_j$ are assigned under $y^*$, complementary slackness gives equality between $u_{\theta_j}$ and their payoff in every market to which they are assigned. The dual inequalities show that no other market gives them a greater payoff. If no buyer of that type is assigned, the corresponding buyer constraint is slack, so $u_{\theta_j}=0$. The dual inequalities show that every market gives a weakly negative payoff. In either case, $U_j=u_{\theta_j}$.

We next show that the clearing rule can find a matching satisfying its requirements. Since $y^*$ has integer entries, it can be implemented by an individual matching between buyers and sellers. Consider any such matching.

If $y^*_{\theta q}>0$, then $v_\theta(q)-p_q=u_\theta$. A buyer assigned to quality $q$ therefore applies to market $q$. Her application is strong when $u_\theta>0$ and standby when $u_\theta=0$. If $u_\theta>0$, complementary slackness implies
\[
        \sum_{q\in\Qualities}y^*_{\theta q}=m_\theta.
\]
All buyers of type $\theta$ are consequently matched under $y^*$. Thus, every buyer who submits a strong application can be assigned to a market to which she applies strongly.

Similarly, if $p_q>c(q)$, then $\rho_q>0$, and complementary slackness implies
\[
        \sum_{\theta\in\Types}y^*_{\theta q}=n_q.
\]
Every seller in market $q$ is matched. The individual matching that implements $y^*$ satisfies all the clearing requirements, so the clearing rule can proceed.

We now show that every matching selected by the clearing rule is efficient. For any feasible matching $x$,
\[
\begin{aligned}
        W(x\mid q,\theta)
        &=
        \sum_{(i,j)\in x}
        \left[v_{\theta_j}(q_i)-p_{q_i}\right]
        +
        \sum_{(i,j)\in x}
        \left[p_{q_i}-c(q_i)\right] \\
        &\leq
        \sum_{j\in\Buyers}U_j
        +
        \sum_{i\in\Sellers}\rho_{q_i}.
\end{aligned}
\]
The inequality follows because no buyer can obtain more than $U_j$ at the posted prices and every $\rho_q$ is nonnegative. Since $U_j=u_{\theta_j}$, the right-hand side is
\[
        \sum_{\theta\in\Types}m_\theta u_\theta + \sum_{q\in\Qualities}n_q\rho_q.
\]
By \eqref{eq:assignment_strong_duality}, this is the first-best surplus.

Under the intended application strategy, every matched buyer $j$ receives $U_j$, and every buyer with $U_j>0$ is matched. Likewise, every seller with $\rho_q>0$ is matched. Buyers and sellers with zero supported payoffs contribute nothing whether or not they trade. The matching selected by the clearing rule attains the first-best surplus, and every participant receives a nonnegative payoff.

The minimum-trade convention also rules out zero-surplus trades. Such a trade would give both the buyer and the seller zero. The buyer's application would be standby, and the seller's market would satisfy $p_q=c(q)$. Removing the pair would leave every strong applicant assigned and every seller with $p_q>c(q)$ matched. The remaining matching would still satisfy the clearing requirements but would contain one fewer trade. This contradicts the minimum-trade convention. Hence, the selected matching contains no zero-surplus trade.

It remains to verify incentives. A seller who follows the prescribed strategy receives a nonnegative payoff. If she alone chooses a market bearing a different quality label, the number of sellers in her original market falls by one and the number in the new market rises by one. The capacity check fails, trade is cancelled, and she receives zero. Such a deviation is not profitable.

Under the prescribed strategy, buyer $j$ receives $U_j$. Consider any deviation in her applications or in their designations. If the deviation makes clearing impossible, trade is cancelled and she receives zero. If clearing remains possible, she is either unmatched or assigned to some market $q$. Her payoff is then at most
\[
        \max\{0,v_{\theta_j}(q)-p_q\}\leq U_j.
\]
No buyer has a profitable deviation. Since these arguments hold at every realized profile, the prescribed strategies form an ex post equilibrium.

Finally, fix any common prior over type profiles and suppose that the organizer's rule for selecting prices and capacities from each census is common knowledge. After observing the public price and capacity announcement, participants update their beliefs by Bayes' rule wherever possible. On the equilibrium path, seller choices reveal the quality labels prescribed by the strategy. The prescribed buyer applications are optimal at every profile consistent with the announcement.

Consider an off-path seller-choice history at which the capacity check passes. The number of sellers in every market then agrees with the announced seller capacities. Whenever Bayes' rule does not determine beliefs, buyers may believe that every seller choosing market $q$ has quality $q$. Given these beliefs, the prescribed buyer applications remain sequentially optimal. At histories where the capacity check fails, trade is cancelled and beliefs do not affect payoffs. The strategies and these beliefs form a pure-strategy perfect Bayesian equilibrium.
\end{proof}


\begin{proof}[Proof of Proposition~\ref{prop:seller_capacity_lower_bound}]
Fix $C\in\Domain$. If sellers follow the intended strategy, precisely the sellers of quality $q$ choose market $q$. The number of sellers choosing each market is thus $n(C)$. Since the announced capacity depends only on the public signal, implementation at $C$ requires $\overline n(\psi(C))=n(C)$.

Now suppose that $\psi(C)=\psi(C')$. The organizer must announce the same capacity vector at the two censuses. Implementation requires this vector to equal both $n(C)$ and $n(C')$, so $n(C)=n(C')$. Censuses with different seller count vectors must consequently produce different signals. On the unrestricted domain, there are
\[
        \binom{N+K-1}{K-1}
\]
nonnegative integer seller count vectors whose entries sum to $N$. Each must produce a different public signal. This completes the proof.
\end{proof}


\subsection{Incentives, coordination, and privacy}

\begin{proof}[Proof of Proposition~\ref{prop:incentive_coordination_main}]
There are two sellers, $A$ and $B$, each with a good of the same publicly known quality $q$ and cost $c(q)=0$. Both goods are offered at the same price $p\in(0,1)$. There are three buyers, indexed by $1,2,3$, and two buyer types, $\theta_L$ and $\theta_H$, with $v_{\theta_L}(q)=0$ and $v_{\theta_H}(q)=1$. Consider the census with two high-valuation buyers and one low-valuation buyer. Efficiency requires the two high-valuation buyers to purchase the two goods, one from each seller, while the low-valuation buyer takes the outside option.

There is only one seller quality and there are two buyer types. By Theorem~\ref{thm:main_countcheck}, a one-bit tally is sufficient for the count-check mechanism. In particular,
\[
        \phi(C)=m_{\theta_H}\bmod 2
\]
changes whenever one buyer changes her reported type.

Now disclose the full census. Every buyer knows that two of the three buyers have high valuations. A high-type buyer knows that one of the other two buyers is also a high type, but she does not know which one. A low-type buyer can instead infer that both other buyers are high types. The argument below concerns the choices of high-type buyers, whose uncertainty remains. 

Consider first deterministic behavior. Let $f(k)\in\{A,B\}$ be the seller chosen by buyer $k$ when she is a high type. The choice may depend on her identity, her own type, and the public census. For immediate efficiency at every assignment of the two high types, any two distinct buyer identities $k$ and $\ell$ must satisfy $f(k)\neq f(\ell)$. Otherwise, when $k$ and $\ell$ are the two high types, they choose the same seller and leave the other seller unused. But three buyer identities cannot be assigned pairwise different choices from the two-element set $\{A,B\}$. As a result, some assignment of high types produces congestion.

Randomization does not remove the problem. Let $X_k\in\{A,B\}$ be the possibly random seller chosen by buyer $k$ when she is a high type. The variables may depend on public or private randomization, but there is no recommendation conditioned on the realized identities of the two high types. If every possible pair of high-type buyers were separated with probability one, then $X_i\neq X_j$, for $i\neq j$ and $i,j\in\{1,2,3\}$, would all hold with probability one. This would require three pairwise different values in the set $\{A,B\}$, which is impossible. For at least one assignment of the two high types, congestion occurs with positive probability.
\end{proof}


\begin{proof}[Proof of Lemma~\ref{prop:disclosure_gap_main}]
A seller census is a vector of $K$ nonnegative integers whose entries sum to $N$. The number of such vectors is
\[
        \binom{N+K-1}{K-1}.
\]
Similarly, the number of buyer count vectors over $L$ types is
\[
        \binom{M+L-1}{L-1}.
\]
The two sides can vary independently on the unrestricted domain, so the number of possible market censuses is the product in \eqref{eq:full_census_messages}. Full census disclosure must distinguish all of them. Theorem~\ref{thm:main_countcheck} shows that a separating tally requires only $\max\{K,L\}$ announcements.
\end{proof}



\section{Extensions}

This appendix develops extensions of the main model. It considers alternative sources of certification, public information outside the count-check mechanism, coordinated deviations, and the distinction between finite and nonatomic markets. It also examines the roles of buyer information, full implementation, decentralized posted prices, and imperfect certification. None of these extensions is needed for the results in the main text. Their purpose is to clarify which conclusions depend on the count-check institution and which extend to other informational and trading arrangements.

\subsection{Certification extensions}
\label{app:certification_extensions}

Section~\ref{sec:certification} shows that a tally cannot acquire evidentiary force from the reports it is meant to check. This subsection considers two well-known sources of outside discipline, i.e., witnesses who observe the tally, and audits that verify individual reports. Neither arrangement is intended as a general theory of certification. Their purpose is to illustrate how an informational anchor may enter the mechanism.

\paragraph{Witnesses.} Suppose that two or more witnesses observe the true tally, $ z^0=\phi(C^0)$. Each witness reports one tally value. If all reports agree, the common value is announced and the count-check mechanism is run. If the reports disagree, no trade occurs. A witness receives a reward $r>0$ when unanimity is reached and nothing otherwise.

The reward is treated as a certification cost paid by the organizer. It is separate from the transfers between buyers and sellers.

\begin{lemma}
\label{lem:witness_certification}
Suppose that $\phi$ is separating on $\Domain$. The witness mechanism has an ex post equilibrium in which every witness reports $z^0$, market participants report their types truthfully, and the efficient matching is carried out at date zero.
\end{lemma}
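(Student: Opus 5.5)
The plan is to verify separately the incentives of witnesses and of market participants, holding fixed the prescribed strategies of everyone else, and then to invoke Theorem~\ref{thm:main_countcheck}(i) for the second stage. The game is two-stage. Witnesses first report tally values. Then, if the reports are unanimous, the count-check mechanism runs with the announced value. I will fix an arbitrary realized profile $(q,\theta)$ with census $C^0\in\Domain$ and consider the strategy profile in which every witness reports $z^0=\phi(C^0)$ and every seller and buyer reports truthfully after any announcement. Ex post equilibrium requires that, at this fixed profile, no single player gains by a unilateral deviation. I will check this player by player.

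\emph{Market participants.} Along the prescribed path, unanimity holds and the announced value is the true tally $z^0$. The continuation game is then exactly the count-check mechanism with a correct certificate. By Theorem~\ref{thm:main_countcheck}(i), since $\phi$ is separating, truthful reporting is an ex post equilibrium there. It carries out $x^*(q,\theta)$ at date zero with nonnegative payoffs. So no seller or buyer has a profitable unilateral deviation. Participants move after witnesses and cannot affect the witness stage, so nothing further is needed for them on the path.

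\emph{Witnesses.} If all other witnesses report $z^0$, a witness who also reports $z^0$ secures unanimity and receives $r>0$. Any report $z\neq z^0$ breaks unanimity, since there are at least two witnesses. That yields no trade and gives the deviating witness zero. Hence reporting $z^0$ is strictly optimal. This is the step where the requirement of at least two witnesses matters. With a single witness, a deviation would not break unanimity, and the witness would be indifferent among values. I also note that witness payoffs do not depend on market outcomes, so there is no channel through which a witness could benefit from distorting trade.

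The main subtlety is the definition of ex post equilibrium in a sequential setting, specifically whether strategies must also be specified off path (for example, after an incorrect unanimous announcement). I would prescribe truthful type reports after every announcement. Optimality is only required against the realized profile with others following the prescribed strategies, so off-path announcements never arise from a unilateral participant deviation. The conclusion then follows: witnesses report $z^0$, participants report truthfully, and the efficient matching $x^*(q,\theta)$ is carried out at date zero. The reward $r$ is paid by the organizer, so transfers between traders remain balanced.
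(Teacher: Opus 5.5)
Your proposal is correct and follows essentially the same argument as the paper: a lone witness deviation breaks unanimity and forfeits $r$, while participants' incentives follow from Theorem~\ref{thm:main_countcheck}(i). One small point is that the paper explicitly allows a witness to also be a market participant, so your remark that witness payoffs do not depend on market outcomes is neither assumed nor needed. Your argument covers that case anyway, since breaking unanimity cancels trade and gives zero, while reporting $z^0$ gives $r$ plus a nonnegative trading payoff.
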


\begin{proof}
Consider the strategy profile in which every witness reports $z^0$ and, after its announcement, every market participant reports truthfully. The witness reports agree, and each witness receives $r$. Because $\phi$ is separating, Theorem~\ref{thm:main_countcheck} supports truthful reporting by market participants.

Fix one witness and suppose that all other witnesses report $z^0$. Reporting $z^0$ preserves unanimity and gives the witness the reward $r$, together with any nonnegative trading payoff she receives if she is also a market participant. Reporting another value breaks unanimity. The mechanism then stops, and the witness receives neither the reward nor a trading payoff. The deviation is not profitable. Deviations by market participants are unprofitable by Theorem~\ref{thm:main_countcheck}.
\end{proof}

Unanimity supports truthful certification, but it does not select truth on its own. The witnesses may also coordinate on a common false announcement. The reward encourages agreement, not accuracy. Credible witness certification requires some further reason for witnesses to report what they observe, such as legal liability, reputation, occasional verification, conflicting interests, or institutional separation. The lemma establishes the existence of a truthful equilibrium but not its uniqueness.

\paragraph{Audits.} A second possibility is to verify individual reports directly. This is a stronger intervention than aggregate certification, but it provides a simple way to obtain reliable information about the census.

\begin{assumption}
\label{ass:audit}
After agents report their types, every false report is detected with probability at least $\alpha\in(0,1]$. A detected agent is excluded from trade and pays a penalty $F\geq0$. If the false report is not detected, its gain relative to truthful reporting is at most $G\geq0$.
\end{assumption}

The bound $G$ exists whenever the set of types and reports is finite and the mechanism places finite bounds on transfers. The following condition will also be used in later audit extensions,
\begin{equation}
        \alpha F\geq(1-\alpha)G.
        \label{eq:audit_dominance}
\end{equation}

\begin{lemma}
\label{lem:audit_truthfulness}
Under Assumption~\ref{ass:audit} and \eqref{eq:audit_dominance}, truthful reporting is an ex post equilibrium of the audited reporting stage.
\end{lemma}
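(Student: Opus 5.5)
The argument is a direct one-deviation comparison. Fix a realized profile, fix one agent, and suppose every other agent reports truthfully. The truthful payoff of the fixed agent is her payoff in the underlying mechanism under truthful reports; call it $\pi^T$. Consider any false report $\hat t$ by this agent. The audited reporting stage then produces a lottery with two branches. With probability $\alpha'\geq\alpha$ the lie is detected; the agent is excluded from trade and pays $F$, so her payoff is $-F$. With the complementary probability $1-\alpha'$ the lie goes undetected, and by Assumption~\ref{ass:audit} her payoff is at most $\pi^T+G$.

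The first step is to bound the expected payoff from the deviation:
\[
\alpha'(-F)+(1-\alpha')(\pi^T+G).
\]
I need this to be at most $\pi^T$, which is equivalent to
\[
\alpha'(\pi^T+F)\geq(1-\alpha')G.
\]
The second step uses the fact that truthful payoffs are nonnegative. In the count-check mechanism this holds by Theorem~\ref{thm:main_countcheck}(i): matched agents receive positive surplus shares and unmatched agents receive zero. It then suffices to show $\alpha' F\geq(1-\alpha')G$. The third step is monotonicity in the detection probability. The left side $\alpha' F$ is increasing in $\alpha'$, and the right side $(1-\alpha')G$ is decreasing in $\alpha'$. Since $\alpha'\geq\alpha$, condition \eqref{eq:audit_dominance} yields the inequality at $\alpha'$. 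No deviation is therefore profitable in expectation at any realized profile, so truthful reporting is an ex post equilibrium.

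The main obstacle is pinning down what "gain relative to truthful reporting" means in Assumption~\ref{ass:audit}, together with the sign of $\pi^T$. If exclusion replaces the trading payoff, the detected branch yields $-F$ rather than $\pi^T-F$, and I then need $\pi^T\geq 0$. I will state this explicitly, relying on individual rationality of the truthful outcome in the underlying reporting mechanism, which holds for the count-check mechanism. If the underlying mechanism could give negative truthful payoffs, I would instead read the detected payoff as a loss of at least $F$ relative to truth, and the same algebra goes through. Finally, "ex post" is interpreted as expectation over audit randomness only, with the type profile held fixed.
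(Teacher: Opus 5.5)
Your proposal is correct and follows the paper's argument: the paper also takes the truthful payoff $u^T\geq 0$, bounds the detected branch's gain by $-F-u^T\leq -F$ and the undetected branch's gain by $G$, and uses $\beta\geq\alpha$ to get $(1-\beta)G-\beta F\leq(1-\alpha)G-\alpha F\leq 0$. The one difference is that you justify $u^T\geq 0$ through individual rationality of the count-check mechanism, whereas the paper simply posits it.
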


\begin{proof}
Fix an agent, a realized type profile, and truthful reports by all other agents. Let $u^T\geq0$ be the agent's payoff from reporting truthfully. Consider a false report, and let $\beta\geq\alpha$ be its probability of detection.

If the false report is not detected, its payoff gain relative to truth is at most $G$. If it is detected, the agent receives $-F$, so its gain relative to truth is $-F-u^T\leq-F$. The expected gain from the false report is at most
\[
        (1-\beta)G-\beta F \leq (1-\alpha)G-\alpha F \leq0,
\]
where the last inequality follows from \eqref{eq:audit_dominance}. No false report is profitable.
\end{proof}

In the truthful equilibrium, the reports reveal the census and allow the organizer to calculate any desired tally. The source of discipline is the audit rather than the tally. This arrangement should be understood as one way to produce reliable aggregate information, not as an alternative proof that a self-generated tally can discipline unrestricted reports.

The two results, while already well-known, expose different limitations. Witness certification requires an equilibrium selection or truth-telling institution in addition to unanimity. Audit certification requires a verification technology and penalties large enough to discourage false reports. Neither source is costless, but both supply information or discipline that is absent from the reports alone.


\paragraph{Noisy certificates.} The main results assume that the public tally is certified correctly. Suppose instead that certification is imperfect. At the true census $C$, let the announced certificate be a public random variable $\widetilde z\in\mathcal Z$ satisfying
\begin{equation}
        \Pr\{ \widetilde z=\phi(C) \mid C \} \geq1-\varepsilon.
        \label{eq:noisy_certificate}
\end{equation}
No restriction is placed on the announcement made with the remaining probability.

Consider the following audited count-check mechanism. Agents report their types, producing a reported census $\widehat C$. If $\phi(\widehat C)\neq\widetilde z$, trade is cancelled and no transfers are made. If $\phi(\widehat C)=\widetilde z$, the mechanism applies the usual count-check allocation and transfers. Any false report that passes the tally comparison is subject to the audit technology in Assumption~\ref{ass:audit}.

Let $G$ be the bound in that assumption, taken uniformly over true profiles, reported profiles, and realizations of $\widetilde z$. Retain the audit condition $\alpha F\geq(1-\alpha)G$ from \eqref{eq:audit_dominance}.

\begin{proposition}
\label{prop:imperfect_certification}
Under Assumption~\ref{ass:audit} and \eqref{eq:audit_dominance}, truthful reporting is an ex post equilibrium after every realization of the public certificate. In the truthful equilibrium, the efficient allocation is carried out at date zero with probability at least $1-\varepsilon$.
\end{proposition}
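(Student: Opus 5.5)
The plan is to fix an arbitrary realization $z$ of $\widetilde z$ and show that truthful reporting is a best response for every agent, conditional on $z$ and on truthful reports by all others. Then I will show that the truthful outcome coincides with the efficient count-check outcome whenever $z=\phi(C)$. Because the certificate is public and announced before reports, conditioning on $z$ is legitimate. An ex post equilibrium after every realization then only requires a pointwise argument in $(q,\theta,z)$.

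For the incentive step, fix $(q,\theta)$ with census $C$, a realization $z$, and an agent $a$, and let $u^T(z)\geq0$ denote $a$'s payoff from truth given $z$. There are two cases.

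\emph{Case 1: $\phi(C)\neq z$.} Truthful reports fail the check, so $u^T=0$. A false report either also fails the check, giving payoff $0$, or passes it. In the latter case it is a false report subject to the audit in Assumption~\ref{ass:audit}. Its gain relative to truth is at most $G$ if undetected and at most $-F-u^T\leq -F$ if detected.

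\emph{Case 2: $\phi(C)=z$.} Truth passes and yields the count-check payoff $u^T\geq0$, by the argument in part~(i) of Theorem~\ref{thm:main_countcheck}, which does not use separation for individual rationality. A false report either fails the check, giving gain $-u^T\leq0$, or passes and is audited as above.

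In every subcase where a false report passes, the expected gain is bounded by $(1-\beta)G-\beta F\leq(1-\alpha)G-\alpha F\leq0$, with $\beta\geq\alpha$, exactly as in Lemma~\ref{lem:audit_truthfulness} using \eqref{eq:audit_dominance}. So no deviation is profitable. Separation of $\phi$ is not needed here, because the audit disciplines any false report that survives the tally comparison.

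For efficiency, I will note that under truthful reporting trade occurs and $x^*(q,\theta)$ is implemented at date zero exactly on the event $\widetilde z=\phi(C)$. By \eqref{eq:noisy_certificate}, this event has probability at least $1-\varepsilon$ conditional on $C$. Since no truthful agent is ever audited as a false reporter, there are no exclusions or penalties on the equilibrium path.

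The main obstacle is interpretive: making sure $G$ correctly bounds the gain of a passing false report when $z$ is wrong and truth yields zero. In that case the gain is the deviator's payoff in the accepted false profile, and this is covered because $G$ is assumed uniform over realizations of $\widetilde z$ and over reported profiles. I would state this explicitly and also check that a detected agent's exclusion gives payoff $-F$ regardless of how the others are rematched.
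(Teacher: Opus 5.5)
Your proposal is correct and follows essentially the same route as the paper. You fix the realized certificate and note that truth gives a nonnegative payoff (zero when the certificate is wrong). A false report that fails the comparison cannot gain, and one that passes is controlled by the audit bound $(1-\beta)G-\beta F\leq(1-\alpha)G-\alpha F\leq 0$ from \eqref{eq:audit_dominance}; efficiency then holds on the event $\widetilde z=\phi(C)$, which has probability at least $1-\varepsilon$ by \eqref{eq:noisy_certificate}. Your worry about $G$ is resolved as you suggest, since the paper explicitly takes $G$ uniform over true profiles, reported profiles, and realizations of $\widetilde z$.
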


\begin{proof}
Fix a realized type profile with census $C$ and a realized certificate $\widetilde z$. If agents report truthfully and $\widetilde z=\phi(C)$, the tally comparison passes and the count-check mechanism carries out the efficient allocation. If $\widetilde z\neq\phi(C)$, truthful reports fail the comparison and every agent receives zero. Truthful reporting gives every agent a nonnegative payoff for every certificate realization.

Now consider a false report by one agent while all other agents report truthfully. If the false report fails the tally comparison, the agent receives zero and does not improve on truthful reporting. If it passes, the report is subject to the audit. Lemma~\ref{lem:audit_truthfulness} and condition~\eqref{eq:audit_dominance} imply that its expected gain relative to truthful reporting is nonpositive. No false report is profitable.

Thus, truthful reporting is an ex post equilibrium for every realized certificate. By \eqref{eq:noisy_certificate}, the certificate agrees with the true tally with probability at least $1-\varepsilon$. On that event, truthful reports pass the comparison and the efficient allocation is carried out at date zero.
\end{proof}

Certification accuracy and reporting incentives play different roles. The error probability $\varepsilon$ determines how often truthful reports are accepted. The audit condition determines whether an agent wishes to exploit an incorrect certificate. This is why \eqref{eq:audit_dominance} does not depend on $\varepsilon$, as it controls the payoff from a false report conditional on that report passing the tally comparison.

Without audits or another source of discipline, an incorrect certificate may create an incentive to misreport. If it coincides with the tally of a neighboring census, a false report may pass when truthful reports fail. An agent whose truthful payoff is zero may then benefit from changing her report. The audit condition removes this gain but does not restore reliability. Efficiency still depends on the certificate being correct.


\subsection{Minimal public information beyond count checks}
\label{app:beyond_count_checks}

Theorem~\ref{thm:main_countcheck} characterizes the information required by the count-check mechanism. A broader question is whether another institution could implement efficient trade with less public information by using prices or transfers to discourage false reports.

Throughout this subsection, the public signal is a certified function $\psi:\Domain\to\mathcal Z$ of the true market census. For each announcement $z\in\mathcal Z$, define its \emph{signal cell} by
\[
        \Domain_z = \{C\in\Domain\mid \psi(C)=z\}.
\]
Two censuses are \emph{pooled} when they belong to the same signal cell and produce the same public announcement.

After observing $z$, the institution may ask agents to report their types and may condition allocations and transfers on those reports. It has no independent evidence of individual types. Reports that imply a census outside $\Domain_z$ can be rejected because they are inconsistent with the public signal. Reports that remain within $\Domain_z$ must instead be disciplined by the allocation and transfer rule.

Write $\Omega \coloneqq \Qualities^\Sellers\times\Types^\Buyers$ for the set of type profiles, with a typical profile denoted by $\omega=(q,\theta)$. For any set of censuses $\mathcal E\subseteq\Domain$, let
\[
        \Omega(\mathcal E) \coloneqq \{\omega\in\Omega \mid C(\omega)\in\mathcal E\}.
\]

An outcome is a pair $o=(x,\tau)$, where $x\in\Matchings$ is a feasible matching and
\[
        \tau \coloneqq (\tau_h)_{h\in\Sellers\cup\Buyers} \in\mathbb R^{\Sellers\cup\Buyers}
\]
is a vector of transfers. A positive transfer is received by the agent. At profile $\omega=(q,\theta)$, seller $i$ receives
\[
        u_i(o\mid\omega) = \tau_i - \mathds{1}\{i\text{ is matched in }x\}c(q_i),
\]
and buyer $j$ receives
\[
        u_j(o\mid\omega) = \tau_j + \sum_{\{i \,\mid\, (i,j)\in x\}}v_{\theta_j}(q_i).
\]
The outcome is budget balanced if
\[
        \sum_{h\in\Sellers\cup\Buyers}\tau_h=0,
\]
and it is ex post individually rational at $\omega$ if $u_h(o\mid\omega)\geq0$ for every $h\in\Sellers\cup\Buyers$.

Let $\mathcal O^*(\omega)$ be the set of outcomes that maximize total surplus at $\omega$, balance the budget, and satisfy ex post individual rationality. This set is nonempty, as the selected efficient matching together with bilateral midpoint transfers belongs to it.

\begin{definition}
\label{def:implementable_block}
A nonempty set of censuses $\mathcal E\subseteq\Domain$ is an \textbf{implementable census block} if there is a direct outcome rule
\[
        f_{\mathcal E}:\Omega(\mathcal E) \to  \Matchings\times \mathbb R^{\Sellers\cup\Buyers}
\]
with the following properties,

\begin{enumerate}[label=\textup{(\roman*)},leftmargin=2em]
    \item $f_{\mathcal E}(\omega)\in\mathcal O^*(\omega)$ for every $\omega\in\Omega(\mathcal E)$;

    \item for every agent $h$, every $\omega\in\Omega(\mathcal E)$, and every alternative report $\hat\omega_h$ satisfying $C(\hat\omega_h,\omega_{-h})\in\mathcal E$, one has
    \begin{equation}
            u_h(f_{\mathcal E}(\omega)\mid\omega) \geq u_h( f_{\mathcal E}(\hat\omega_h,\omega_{-h}) \mid\omega).
            \label{eq:IC_E}
    \end{equation}
\end{enumerate}
\end{definition}

The first condition requires efficient, balanced, and individually rational outcomes throughout the block. The second requires truthful reporting against every unilateral deviation that remains compatible with the same public announcement. A deviation that leaves the block can be rejected by comparing its reported census with the public signal.

The next result identifies the public information required within this class of direct mechanisms.

\begin{lemma}
\label{lem:block_characterization}
A public signal $\psi:\Domain\to\mathcal Z$ supports immediate efficient trade with budget balance and ex post individual rationality through a direct no-verification mechanism if and only if every signal cell $\Domain_z$ is an implementable census block.

As a result, the least number of public announcements in this class is
\begin{equation}
        \kappa(\Domain) = \min_{\mathcal P}|\mathcal P|,
        \label{eq:kappa_partition}
\end{equation}
where the minimum is taken over all partitions $\mathcal P$ of $\Domain$ into implementable census blocks.
\end{lemma}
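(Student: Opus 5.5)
The plan is to prove the two directions of the equivalence by direct construction, and then to read off \eqref{eq:kappa_partition} as a restatement. First I make precise what it means for $\psi$ to support trade through a direct no-verification mechanism. After $z=\psi(C)$ is announced, agents report a profile $\hat\omega$. Reports with $C(\hat\omega)\notin\Domain_z$ are rejected: no trade and zero transfers. Reports inside the cell are mapped to an outcome by some rule $g_z$. We require two things. At every true $\omega$ with $C(\omega)\in\Domain_z$, truthful reporting must be an ex post equilibrium. The truthful outcome $g_z(\omega)$ must also lie in $\mathcal O^*(\omega)$.

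\emph{Sufficiency.} Suppose each cell $\Domain_z$ is an implementable census block with rule $f_{\Domain_z}$. Set $g_z=f_{\Domain_z}$ on $\Omega(\Domain_z)$ and reject outside. Condition (i) of Definition~\ref{def:implementable_block} gives efficiency, budget balance and ex post individual rationality under truthful reports. For a unilateral deviation $\hat\omega_h$ there are two cases.
\begin{enumerate}[label=\textup{(\alph*)},leftmargin=2em]
\item If $C(\hat\omega_h,\omega_{-h})\in\Domain_z$, inequality \eqref{eq:IC_E} makes the deviation unprofitable.
\item If $C(\hat\omega_h,\omega_{-h})\notin\Domain_z$, the report is rejected. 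The deviator gets $0\le u_h(f_{\Domain_z}(\omega)\mid\omega)$ by individual rationality, so the deviation is again unprofitable.
\end{enumerate}
This argument is pointwise in $\omega$, so the equilibrium is ex post.

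\emph{Necessity.} Given a supporting mechanism, define $f_{\Domain_z}\coloneqq g_z$ restricted to $\Omega(\Domain_z)$. Condition (i) is exactly the requirement on truthful outcomes. For condition (ii), take any deviation with $C(\hat\omega_h,\omega_{-h})\in\Domain_z$. The mechanism then produces $g_z(\hat\omega_h,\omega_{-h})=f_{\Domain_z}(\hat\omega_h,\omega_{-h})$, and the ex post equilibrium property yields \eqref{eq:IC_E}.

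The main obstacle is this necessity step when the mechanism's message spaces are not literally type reports, or when its behaviour on in-cell reports is not a function of the reported profile alone. To handle that, I would invoke the revelation principle relative to the fixed information structure: the mechanism's information is $z$ plus the messages, as stressed after Proposition~\ref{prop:no_self_certification_main}. Composing the equilibrium strategies with the outcome function gives a direct rule on $\Omega(\Domain_z)$. That rule inherits the ex post inequalities for every deviation to a type that some other type would report. Since any in-cell report $\hat\omega_h$ is the truthful message of type $\hat\omega_h$, those inequalities cover exactly the deviations appearing in (ii). If the class is understood as direct mechanisms from the outset, as ``direct'' in the statement suggests, this step is immediate.

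\emph{Counting.} The cells $\{\Domain_z\}_{z\in\mathcal Z}$ of an onto $\psi$ partition $\Domain$ into $|\mathcal Z|$ nonempty sets. Conversely, any partition $\mathcal P$ into implementable blocks defines a signal $\psi$ that sends each census to its block, with $|\mathcal Z|=|\mathcal P|$. By the equivalence, the supporting signals correspond exactly to partitions into implementable blocks. Minimizing $|\mathcal Z|$ is therefore the same as minimizing $|\mathcal P|$, which gives \eqref{eq:kappa_partition}. The minimum exists because $\Domain$ is finite and the partition into singletons works. A singleton $\{C\}$ is implementable: any in-block deviation leaves the census at $C$, so it must be a census-preserving change, and a unilateral change of type always alters the census. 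Hence \eqref{eq:IC_E} is vacuous, and midpoint transfers satisfy (i).
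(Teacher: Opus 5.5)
Your proof is correct and follows the paper's argument essentially step for step. Necessity restricts the truthful-equilibrium outcome rule $g_z$ to $\Omega(\Domain_z)$. Sufficiency runs $f_z$ inside each cell and rejects reports outside it, with ex post individual rationality making rejection unattractive. The formula \eqref{eq:kappa_partition} comes from identifying signals with partitions of $\Domain$ into their cells.

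You also add two points the paper leaves implicit, and both are valid. The first is the revelation-principle remark for indirect message spaces, which holds for pure strategies and deterministic outcomes. The second is that the singleton partition is admissible, so the minimum is taken over a nonempty finite set.
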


\begin{proof}
Suppose first that $\psi$ supports the stated implementation. Fix an announcement $z$, and let $g_z(\omega)$ be the truthful equilibrium outcome at each $\omega\in\Omega(\Domain_z)$. Implementation requires $g_z(\omega)\in\mathcal O^*(\omega)$.

Consider an agent $h$, a profile $\omega\in\Omega(\Domain_z)$, and an alternative report $\hat\omega_h$ whose implied census also belongs to $\Domain_z$. Since the public announcement is unchanged, ex post incentive compatibility requires $u_h(g_z(\omega)\mid\omega) \geq u_h(g_z(\hat\omega_h,\omega_{-h})\mid\omega)$. Thus, the restriction of $g_z$ to $\Omega(\Domain_z)$ satisfies Definition~\ref{def:implementable_block}, and $\Domain_z$ is implementable.

Conversely, suppose every signal cell is implementable, and let $f_z$ be an implementing rule for $\Domain_z$. After observing $z$, agents report their types. If the reported census lies outside $\Domain_z$, the mechanism selects no trade and makes no transfers. If it lies in $\Domain_z$, the mechanism applies $f_z$.

Truthful reporting produces an outcome in $\mathcal O^*(\omega)$. A false report that remains within $\Domain_z$ is unprofitable by \eqref{eq:IC_E}. A false report that leaves $\Domain_z$ produces zero payoff, which cannot improve on the nonnegative truthful payoff. Truthful reporting is an ex post equilibrium, and its outcome is immediate, efficient, budget balanced, and ex post individually rational.

Finally, a public signal divides $\Domain$ into its signal cells. Conversely, any partition of $\Domain$ into implementable blocks can be used as a public signal. Minimizing the number of cells gives \eqref{eq:kappa_partition}.
\end{proof}

Lemma~\ref{lem:block_characterization} also clarifies the scope of Theorem~\ref{thm:main_countcheck}. The count-check mechanism rejects every unilateral report that moves the census away from the public announcement. A general mechanism does not need to reject every such report because prices or transfers may make some of them unprofitable even when the corresponding censuses receive the same announcement.

Recall from Section~\ref{sec:certified_count_checks} that $C\sim_1C'$ denotes two one-agent neighboring censuses.

\begin{definition}
\label{def:local_nonpoolability}
A pair of one-agent neighbors $C,C'\in\Domain$ is \textbf{locally nonpoolable} if $\{C,C'\}$ is not an implementable census block. The domain satisfies \textbf{local nonpoolability} if every pair of one-agent neighbors in $\Domain$ is locally nonpoolable.
\end{definition}

Local nonpoolability is a condition on the economic environment, not on the count-check mechanism. It says that if two neighboring censuses receive the same public announcement, no allocation and transfer rule can implement efficient trade at both while preserving budget balance, ex post individual rationality, and ex post incentive compatibility. Under this condition, the count-check requirement also becomes necessary for the broader class of direct no-verification mechanisms.

\begin{proposition}
\label{prop:neighbor_separation_general}
Suppose that $\Domain$ satisfies local nonpoolability. A public signal $\psi:\Domain\to\mathcal Z$ supports immediate efficient trade with budget balance and ex post individual rationality through a direct no-verification mechanism if and only if it separates one-agent neighbors, i.e.,
\begin{equation*}
        C\sim_1C' \implies \psi(C)\neq\psi(C') \qquad \text{for every }C,C'\in\Domain.
\end{equation*}
On the unrestricted domain, the least number of public announcements is $\max\{K,L\}$.
\end{proposition}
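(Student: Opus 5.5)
The plan is to reduce Proposition~\ref{prop:neighbor_separation_general} to Lemma~\ref{lem:block_characterization}, using local nonpoolability to link implementability of signal cells with separation of one-agent neighbors. The count bound then comes from the construction and counting argument in Theorem~\ref{thm:main_countcheck}(iii).

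\emph{Necessity.} Suppose $\psi$ supports the stated implementation. By Lemma~\ref{lem:block_characterization}, every signal cell $\Domain_z$ is an implementable census block. Suppose, for contradiction, that $C\sim_1 C'$ and $\psi(C)=\psi(C')=z$. Let $f_z$ implement $\Domain_z$. I will show that implementability is inherited by nonempty subsets: the restriction of $f_z$ to $\Omega(\{C,C'\})$ implements $\{C,C'\}$. Condition (i) of Definition~\ref{def:implementable_block} is pointwise and therefore survives restriction. For condition (ii), note that the deviations allowed for the block $\{C,C'\}$, namely those whose implied census stays in $\{C,C'\}$, form a subset of the deviations allowed for $\Domain_z$. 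Hence \eqref{eq:IC_E} for $\Domain_z$ implies it for $\{C,C'\}$. This contradicts local nonpoolability, so $\psi$ must separate one-agent neighbors.

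\emph{Sufficiency.} Suppose $\psi$ separates one-agent neighbors. Fix a cell $\Domain_z$. I will build the implementing rule from the count-check outcome: the efficient matching $x^*$ with midpoint transfers, expressed as individual transfers $\tau$. For each $\omega\in\Omega(\Domain_z)$, this outcome lies in $\mathcal O^*(\omega)$, as already noted in the text. Now take a unilateral deviation $\hat\omega_h$ whose implied census lies in $\Domain_z$. The census either stays equal to $C(\omega)$ or moves to a one-agent neighbor. A neighbor inside the same cell is impossible by separation, so the census is unchanged. This is the delicate case: a deviation that preserves the census passes \eqref{eq:IC_E} only if no agent gains from it.

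Here I expect the main obstacle. A single agent changing her type necessarily changes the census, because one count falls by one and another rises by one. So a unilateral report change always moves the census, and the census-unchanged case occurs only when $\hat\omega_h=\omega_h$. The incentive condition is therefore vacuous within a separated cell, and $\Domain_z$ is implementable. Lemma~\ref{lem:block_characterization} then gives support. I will verify this point carefully, since the whole sufficiency direction rests on it.

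\emph{Counting.} By the equivalence just shown, the least number of announcements in this class equals $\kappa^{\mathrm{cc}}(\Census_{N,M})$. Theorem~\ref{thm:main_countcheck}(iii) states that this quantity is $\max\{K,L\}$, achieved by the modular tally \eqref{eq:modular_tally}, which is a valid public signal.
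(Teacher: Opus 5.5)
Your proposal is correct and follows the paper's proof. Necessity restricts the cell's implementing rule to the neighboring pair and contradicts local nonpoolability, and the count comes from the modular tally and the one-agent lower bound in Theorem~\ref{thm:main_countcheck}. For sufficiency, you show each separated cell is an implementable block (incentive compatibility is vacuous because any real unilateral type change moves the census to a neighbor in another cell) and apply the converse of Lemma~\ref{lem:block_characterization}. The paper instead cites Theorem~\ref{thm:main_countcheck} directly, but the mechanism your route builds is exactly the count-check mechanism, so the two arguments coincide.
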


\begin{proof}
For necessity, suppose that $\psi(C)=\psi(C')$ for some $C\sim_1C'$. The two censuses belong to the same signal cell. By Lemma~\ref{lem:block_characterization}, that cell must be implementable. Any subset of an implementable block is implementable by restricting its outcome rule, so $\{C,C'\}$ would also be implementable. This contradicts local nonpoolability.

For sufficiency, suppose that $\psi$ separates one-agent neighbors. Use $\psi$ as the certified tally in the count-check mechanism. A unilateral false report either produces a census outside $\Domain$ or moves the census to a one-agent neighbor with a different public announcement. In either case the report is rejected. Theorem~\ref{thm:main_countcheck} gives a truthful ex post equilibrium with immediate efficient trade, budget balance, and ex post individual rationality.

On the unrestricted domain, equation \eqref{eq:modular_tally} supplies a separating tally with $\max\{K,L\}$ announcements. The seller-side and buyer-side arguments in Theorem~\ref{thm:main_countcheck} show that fewer announcements cannot separate every neighboring pair.
\end{proof}

The local nonpoolability condition can be checked by comparing the payoff from imitating a neighboring type with the greatest payoff available under truthful efficient implementation. For an agent $h$, define
\[
        \overline U_h^*(\omega) = \sup_{o\in\mathcal O^*(\omega)} u_h(o\mid\omega).
\]
Since the sum of agents' utilities in any balanced efficient outcome is $W^*(\omega)$ and all utilities are nonnegative, we have that
\[
        \overline U_h^*(\omega)\leq W^*(\omega)<\infty.
\]

\begin{lemma}
\label{lem:spoofing_test}
Let $\omega,\omega'\in\Omega(\Domain)$ differ only in the type of agent $h$. If
\begin{equation}
        \inf_{o\in\mathcal O^*(\omega)} u_h(o\mid\omega') > \overline U_h^*(\omega'),
        \label{eq:spoofing_test}
\end{equation}
then the censuses $C(\omega)$ and $C(\omega')$ are locally nonpoolable.
\end{lemma}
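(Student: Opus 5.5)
We argue by contradiction. Suppose $\{C(\omega),C(\omega')\}$ is an implementable census block, and let $f$ be an implementing rule in the sense of Definition~\ref{def:implementable_block}. Both $\omega$ and $\omega'$ belong to $\Omega(\{C(\omega),C(\omega')\})$. They differ only in agent $h$'s type, so at the true profile $\omega'$ the unilateral report $\hat\omega_h=\omega_h$ yields the profile $(\omega_h,\omega'_{-h})=\omega$. Its census $C(\omega)$ lies in the block. The incentive constraint \eqref{eq:IC_E} therefore applies to this deviation.

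The key steps are as follows. First, condition~(i) of Definition~\ref{def:implementable_block} gives $f(\omega')\in\mathcal O^*(\omega')$. By definition of the supremum,
\[
        u_h(f(\omega')\mid\omega')\leq \overline U_h^*(\omega').
\]
Second, condition~(i) also gives $f(\omega)\in\mathcal O^*(\omega)$. Hence
\[
        u_h(f(\omega)\mid\omega')\geq \inf_{o\in\mathcal O^*(\omega)}u_h(o\mid\omega').
\]
Third, combine these two bounds with the hypothesis \eqref{eq:spoofing_test}:
\[
        u_h(f(\omega)\mid\omega') > \overline U_h^*(\omega') \geq u_h(f(\omega')\mid\omega').
\]
This says that type $\omega'_h$ strictly gains by reporting $\omega_h$. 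That violates \eqref{eq:IC_E} at the true profile $\omega'$. So no implementing rule exists, and the pair is locally nonpoolable in the sense of Definition~\ref{def:local_nonpoolability}.

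The only points requiring care are bookkeeping. First, the censuses must be distinct one-agent neighbors so that Definition~\ref{def:local_nonpoolability} applies. If $C(\omega)=C(\omega')$, the block is a singleton and the same contradiction still rules out implementability, so the argument goes through either way. Second, the infimum and supremum must be finite. The supremum is bounded by $W^*(\omega')$, as noted before the lemma. The set $\mathcal O^*(\omega)$ is nonempty, so the infimum is not $+\infty$. The strict inequality in \eqref{eq:spoofing_test} then makes the comparison meaningful.

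I expect no real obstacle. The main subtlety is checking that the deviation is admissible under \eqref{eq:IC_E}, which only requires the deviation's implied census to lie in the block, and here it does.
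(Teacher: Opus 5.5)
Your proof is correct and follows the paper's argument: assume an implementing rule for the pair of censuses, bound $h$'s truthful payoff at $\omega'$ above by $\overline U_h^*(\omega')$ and her payoff from imitating her type at $\omega$ below by the infimum, then use the strict inequality to contradict \eqref{eq:IC_E}. Your side case $C(\omega)=C(\omega')$ is vacuous, since changing one agent's type always moves a unit between two counts, so $C(\omega)\sim_1 C(\omega')$ automatically.
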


\begin{proof}
Suppose instead that $\{C(\omega),C(\omega')\}$ is implementable, and let $f$ be an implementing rule. Truthful implementation at $\omega'$ gives agent $h$ no more than $\overline U_h^*(\omega')$. If that agent instead reports the type she has at $\omega$, she obtains $f(\omega)$ evaluated at her true type in $\omega'$. Her payoff from doing so is at least
\[
        \inf_{o\in\mathcal O^*(\omega)} u_h(o\mid\omega'),
\]
which is strictly greater by \eqref{eq:spoofing_test}. This violates ex post incentive compatibility.
\end{proof}

For sellers, the condition has a familiar adverse-selection interpretation. Let $\omega^H$ and $\omega^L$ differ only in seller $i$'s quality, with $c(q_H)>c(q_L)$. Suppose seller $i$ is matched in every efficient outcome at $\omega^H$. Ex post individual rationality then requires her transfer at $\omega^H$ to be at least $c(q_H)$. If the low-cost seller at $\omega^L$ imitates the high-cost seller, her payoff is at least $c(q_H)-c(q_L)$. It follows from Lemma~\ref{lem:spoofing_test} that the two censuses are locally nonpoolable whenever
\begin{equation}
        c(q_H)-c(q_L) > \overline U_i^*(\omega^L).
        \label{eq:seller_nonpoolability}
\end{equation}

\begin{corollary}
\label{cor:seller_neighbor_lower_bound}
Suppose that, for every pair of seller-side one-agent neighbors in $\Domain$, there are profiles $\omega^H,\omega^L$ with those censuses that satisfy \eqref{eq:seller_nonpoolability}. Then, any public signal supporting immediate efficient trade through a direct no-verification mechanism must separate every seller-side one-agent neighbor. If the seller census domain is unrestricted, at least $K$ public announcements are necessary.
\end{corollary}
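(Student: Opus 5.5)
The plan is to derive the corollary from Lemma~\ref{lem:spoofing_test} and then argue exactly as in the necessity half of Proposition~\ref{prop:neighbor_separation_general}, but only for seller-side neighbors.

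First, I will verify the seller version of the spoofing test. Fix a pair of seller-side one-agent neighbors $C,C'\in\Domain$. Take the profiles $\omega^H,\omega^L$ supplied by the hypothesis: they carry those two censuses, differ only in seller $i$'s quality, and satisfy \eqref{eq:seller_nonpoolability}. Under the standing hypothesis of the preceding discussion, seller $i$ is matched in every efficient outcome at $\omega^H$. I will apply Lemma~\ref{lem:spoofing_test} with $\omega=\omega^H$ and $\omega'=\omega^L$. For any $o\in\mathcal O^*(\omega^H)$, ex post individual rationality at $\omega^H$ together with matching gives $\tau_i-c(q_H)\geq0$. Seller $i$'s payoff from the same outcome at $\omega^L$ is therefore $\tau_i-c(q_L)\geq c(q_H)-c(q_L)$. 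Taking the infimum over $\mathcal O^*(\omega^H)$ and using \eqref{eq:seller_nonpoolability} yields \eqref{eq:spoofing_test}. Hence $\{C,C'\}$ is locally nonpoolable.

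Second, I will show that separation is necessary. Suppose a public signal $\psi$ supports the implementation but $\psi(C)=\psi(C')$ for some seller-side neighbors. By Lemma~\ref{lem:block_characterization}, the common signal cell is an implementable census block. Restricting its outcome rule to $\Omega(\{C,C'\})$ keeps condition (i) of Definition~\ref{def:implementable_block}. It also keeps condition (ii), because every deviation whose census stays in $\{C,C'\}$ already stays in the larger cell. So $\{C,C'\}$ is implementable, which contradicts the first step.

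Third, for the counting bound on an unrestricted seller census domain, I will reuse the construction in the proof of Theorem~\ref{thm:main_countcheck}. Fix a buyer census, place $N-1$ sellers at $q^0$, and let the remaining seller take each of the $K$ qualities. This produces $K$ pairwise seller-side neighbors. By the second step they all need distinct announcements, so at least $K$ announcements are necessary.

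The main obstacle is the matching hypothesis. The inequality \eqref{eq:seller_nonpoolability} alone does not control the imitation payoff unless seller $i$ is matched in every efficient outcome at $\omega^H$. I will therefore read the corollary's hypothesis as including the setup of the preceding paragraph: $c(q_H)>c(q_L)$ and seller $i$ matched at $\omega^H$. The orientation, that is, which neighbor plays the role of $H$, is chosen per pair as the hypothesis provides. Because Lemma~\ref{lem:spoofing_test} is symmetric in which census is imitated, one deviation direction suffices for nonpoolability.
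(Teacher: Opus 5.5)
Your proposal is correct and follows the paper's proof: the spoofing test (Lemma~\ref{lem:spoofing_test}) combined with the individual-rationality bound $\tau_i\geq c(q_H)$ gives local nonpoolability of each seller-side pair, Lemma~\ref{lem:block_characterization} together with restricting the cell's rule to the pair forces separation, and the one-free-seller construction yields the bound of $K$. Your reading of the hypothesis as including $c(q_H)>c(q_L)$ and seller $i$ being matched in every efficient outcome at $\omega^H$ is the setup that the paper's short proof uses implicitly.
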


\begin{proof}
Equation~\eqref{eq:seller_nonpoolability} and Lemma~\ref{lem:spoofing_test} imply that every seller-side neighboring pair is locally nonpoolable. Such a pair cannot belong to one implementable signal cell by Lemma~\ref{lem:block_characterization}.

For the lower bound, fix all buyers and all but one seller. Allow the remaining seller to have any of the $K$ qualities. The resulting $K$ censuses are pairwise seller-side one-agent neighbors and must receive different announcements.
\end{proof}

Without local nonpoolability, prices or menus may screen types even when neighboring censuses receive the same public announcement.

\begin{proposition}
\label{prop:no_unconditional_lower_bound}
There are environments in which immediate efficient trade is implementable with one public announcement even though the census domain contains one-agent neighbors.
\end{proposition}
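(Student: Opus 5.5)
Since the statement is existential, I will prove it with an explicit example. The example should have one-agent neighbors in the domain, yet a single direct mechanism, used with a constant public signal, should implement efficient, balanced, ex post individually rational trade. By Lemma~\ref{lem:block_characterization}, it is enough to show that the whole domain $\Domain$ is one implementable census block. That is, I need an outcome rule $f$ on $\Omega(\Domain)$ with $f(\omega)\in\mathcal O^*(\omega)$ and no profitable unilateral misreport.

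The simplest environment I would try has a single seller quality ($K=1$, so sellers have no private information), $N=1$ seller with cost $c(q)=0$, $M=1$ buyer, and $L=2$ buyer types with $v_{\theta^0}(q)=1$ and $v_{\theta^1}(q)=2$. The unrestricted domain then has two censuses, and they are one-agent neighbors. Both types have positive surplus, so at either profile the efficient matching trades the good. The rule $f$ always trades at a fixed price $p=1$. The buyer's transfer is $-1$ and the seller's is $+1$, which balances the budget. Ex post individual rationality holds because the seller gets $1\geq 0$ and the buyer gets $v_\theta(q)-1\in\{0,1\}$. For incentive compatibility, only the buyer has a report, and the outcome does not depend on it, so misreporting cannot help. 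Since the outcome is efficient at both profiles, $\Domain$ is an implementable block, and the constant signal works, giving $\kappa(\Domain)=1$.

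I would also note briefly that this is consistent with Proposition~\ref{prop:neighbor_separation_general}: local nonpoolability fails here, because a report-independent price screens nothing yet preserves efficiency. If one wants a less degenerate example in which the efficient matching actually depends on reports, I would take two buyer types with values $1$ and $3$ and a single seller with cost $0$. The efficient rule assigns the good to any high buyer if one exists, with ties broken publicly, and otherwise to a low buyer. The fixed price is $p=1$, with a random (or fixed-order) choice among reported high types. This needs a second check that a low buyer claiming to be high gains nothing. She would pay $1$ for value $1$ and get $0$. A high buyer claiming low weakly loses. Both checks are routine.

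The main obstacle is the individual rationality and budget-balance bookkeeping in $\mathcal O^*(\omega)$, in particular making sure that the chosen price works for every type that trades. The fix is to set the price at the lowest positive-surplus valuation, which makes the report-independent price satisfy individual rationality for all types.
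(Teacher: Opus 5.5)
Your proof is correct and follows the same template as the paper's proof. Both use one seller of publicly known quality, one buyer with two possible types, a constant signal and a fixed price, so the two neighboring censuses share one announcement. Your reduction to showing that $\Domain$ is an implementable block via Lemma~\ref{lem:block_characterization} is equivalent to the paper's direct verification.

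The substantive difference is in the parameters. The paper sets $c(q)=1$, $v_{\theta_L}(q)=0$ and $v_{\theta_H}(q)=2$. Efficiency then requires trade only for $\theta_H$, so the allocation must respond to the report. The price $3/2$ does real screening: the high type gets $1/2$ from truth, and the low type would get $-3/2$ from imitating $\theta_H$.

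In your main example both types have positive surplus, so the efficient outcome is the same at both profiles. Incentive compatibility then holds vacuously, because the report is never used. That is enough for the existential statement. However, it shows pooling only where there is nothing to screen. The paper's example supports the claim made around the proposition, that prices can discipline reports that actually matter for the allocation.

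Your second example does make the allocation depend on reports. With $p=1$, though, the low type is only indifferent to imitating the high type rather than strictly deterred.
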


\begin{proof}
Let $N=M=1$. There is one publicly known seller quality $q$ with $c(q)=1$, and the buyer has type $\theta_L$ or $\theta_H$, with $v_{\theta_L}(q)=0$ and $v_{\theta_H}(q)=2$. The two buyer censuses are one-agent neighbors. Use a constant public signal and offer trade at price $p=3/2$. Equivalently, in a direct mechanism, trade at this price if the buyer reports $\theta_H$ and do not trade if she reports $\theta_L$. The high type reports truthfully and receives $1/2$ from trade. The low type also reports truthfully, since falsely reporting $\theta_H$ would give her payoff $-3/2$. The seller receives $1/2$ whenever trade occurs.

The mechanism is efficient, budget balanced, and ex post individually rational, and truthful reporting is an ex post equilibrium. Thus, the two neighboring buyer censuses can share one public announcement because the posted price screens the buyer types.
\end{proof}

The general public-information problem is a problem of finding which censuses can share an announcement while remaining jointly implementable. The partition formula in Lemma~\ref{lem:block_characterization} gives the answer for direct no-verification mechanisms. Separation of one-agent neighbors is always sufficient through the count check, but it is necessary for the broader class only when neighboring censuses cannot be combined by other incentive instruments.


\subsection{Coalitional detectability}
\label{app:coalitions}

The main theorem considers a false report by one agent. This extension asks what a tally can detect when several agents coordinate their reports. A \emph{coalition} is a set of agents who choose their reports jointly.

As in the unilateral case, detectability is a stronger requirement than equilibrium alone. The analysis below asks the tally to expose every census-changing joint false report by a coalition of the specified size. Some of these reports may be unprofitable. A strong-equilibrium criterion would consider only deviations that make every coalition member weakly better off and at least one member strictly better off, while coalition-proofness would add the requirement that the deviation be self-enforcing. The results below provide sufficient protection against census-changing coalitional deviations, but they do not characterize the least public information required for strong or coalition-proof implementation.

Let $C=(n,m)$ and $C'=(n',m')$ be two censuses with the same total numbers of sellers and buyers. Define the seller-side and buyer-side distances
\[
        d_S(C,C') = \frac12 \sum_{q\in\Qualities}|n_q-n'_q|
\]
and
\[
        d_B(C,C') = \frac12 \sum_{\theta\in\Types}|m_\theta-m'_\theta|.
\]
Their sum, $d(C,C')=d_S(C,C')+d_B(C,C')$, is the smallest number of individual type reports that must be changed to transform census $C$ into census $C'$. In particular, we obtain $d(C,C')=1$ if and only if $C$ and $C'$ are one-agent neighbors.

\begin{definition}
\label{def:r_separating}
Fix an integer $r\geq1$. A tally $\phi:\Domain\to\mathcal Z$ is \textbf{$r$-separating} if
\[
        1\leq d(C,C')\leq r \implies \phi(C)\neq\phi(C')
\]
for every $C,C'\in\Domain$.
\end{definition}
The separating tallies of Section~\ref{sec:certified_count_checks} are the special case where $r=1$.

\begin{lemma}
\label{lem:coalitional_count_checks}
A tally rejects every census-changing false-report profile submitted by a coalition of at most $r$ agents if and only if it is $r$-separating. 

When an $r$-separating tally is used in the count-check mechanism, no census-changing deviation by a coalition of at most $r$ agents can make every coalition member weakly better off and at least one member strictly better off.
\end{lemma}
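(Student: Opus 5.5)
The plan mirrors the proof of Theorem~\ref{thm:main_countcheck}(ii), with $d(C,C')\le r$ in place of the one-agent neighbor relation. The first step is a realizability claim. A coalition of at most $r$ agents changing its reports from a true profile with census $C$ can only produce a census $C'$ with $d(C,C')\le r$. The reason is that each changed report moves one unit between two counts on one side, so it changes $d_S$ or $d_B$ by at most one. Conversely, if $1\le d(C,C')\le r$, we can exhibit a true profile with census $C$ and a coalition of exactly $d(C,C')$ agents whose joint false report yields $C'$. For this, pick $d_S(C,C')$ sellers whose qualities lie in categories where $n_q>n'_q$, and reassign them to categories where $n'_q>n_q$. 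Do the same on the buyer side. This is possible because the positive and negative parts of $n-n'$ each sum to $d_S$, and likewise for $m-m'$. The realizability claim is where I expect the only real care to be needed. In particular, the converse direction requires the transport-style assignment to match excess units to deficit units, and it needs $d$ to be exactly the minimal number of changed reports, which the paper asserts and which this construction confirms.

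Given the claim, the equivalence follows. Suppose $\phi$ is $r$-separating. Then any census-changing coalitional report lands at some $C'$ with $1\le d(C,C')\le r$. Either $C'\notin\Domain$ and the report is rejected as infeasible, or $\phi(C')\ne\phi(C)=z_0$ and it fails the check. Conversely, suppose $\phi(C)=\phi(C')$ for some $C,C'\in\Domain$ with $1\le d\le r$. The constructed coalition of size $d\le r$ at a profile with census $C$ then passes the check, contradicting universal rejection.

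For the second part, fix the true profile and let all agents outside the coalition report truthfully. Any census-changing coalitional deviation is rejected, by the first part, so every coalition member receives zero. Under truthful reporting each agent receives a nonnegative payoff, as shown in Theorem~\ref{thm:main_countcheck}(i). Hence no member strictly gains. The deviation therefore cannot make all members weakly better off with at least one strictly better off.
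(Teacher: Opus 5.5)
Your proposal is correct and follows the paper's proof essentially step by step. You show that a coalition of at most $r$ agents can reach exactly the censuses within distance $r$, derive both directions of the equivalence from that, and prove the second part by comparing the zero payoff after rejection with the nonnegative truthful payoff. The only differences are welcome refinements: you spell out the transport construction showing that $d(C,C')$ changed reports suffice, which the paper takes directly from the definition of $d$, and you explicitly treat deviations whose census falls outside $\Domain$.
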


\begin{proof}
Suppose first that $\phi$ is $r$-separating. Let the true census be $C$, and consider a coalition $G$ with $|G|\leq r$. If only members of $G$ change their reports, the resulting census $C'$ satisfies $d(C,C')\leq|G|\leq r$. If the deviation changes the census, then $C'\neq C$, and $r$-separation gives $\phi(C')\neq\phi(C)$. The count check fails, so trade is cancelled and all transfers are zero.

Conversely, suppose that $\phi$ is not $r$-separating. There are distinct censuses $C,C'\in\Domain$ such that $d(C,C')\leq r$ and $\phi(C)=\phi(C')$. By the definition of $d(C,C')$, a coalition of at most $r$ agents can change its reports at a profile with census $C$ so that the reported census becomes $C'$. The tally is unchanged, and the false reports pass the count check.

For the final statement, a detected deviation gives every coalition member zero. Under truthful reporting, every agent receives a nonnegative payoff. The deviation cannot make all members weakly better off and one strictly better off. This completes the proof.
\end{proof}

A simple construction gives protection against any fixed coalition size. Choose an arbitrary base seller quality $q^0$ and an arbitrary base buyer type $\theta^0$. For every other category, announce its count after taking the remainder upon division by $r+1$.

\begin{proposition}
\label{prop:residue_tallies}
For every $r\geq1$, the tally
\begin{equation*}
        \phi_r(C) = \left( (n_q\bmod(r+1))_{q\in\Qualities\setminus\{q^0\}}, (m_\theta\bmod(r+1))_{\theta\in\Types\setminus\{\theta^0\}} \right)
\end{equation*}
is $r$-separating. It uses at most $(r+1)^{K+L-2}$ public announcements.
\end{proposition}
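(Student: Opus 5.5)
The plan is to prove the contrapositive. Suppose $C\neq C'$ belong to $\Domain$, satisfy $d(C,C')\leq r$, and yet $\phi_r(C)=\phi_r(C')$. I will show this forces $C=C'$. The seller and buyer sides decouple, so I treat the seller side and note that the buyer side is identical.

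First I record a basic bound. For each seller quality $q$,
\[
|n_q-n'_q|\leq \sum_{q'\in\Qualities}|n_{q'}-n'_{q'}| = 2d_S(C,C')\leq 2r.
\]
This bound is too weak for residues modulo $r+1$, since a difference such as $r+1$ is compatible with equal residues. The step that needs care is therefore a sharper bound. I will use the fact that $\sum_q (n_q-n'_q)=0$. This means the positive and negative parts of the differences each sum to exactly $d_S(C,C')$. Hence every single coordinate satisfies
\[
|n_q-n'_q|\leq d_S(C,C')\leq d(C,C')\leq r.
\]
Indeed, a positive difference is at most the total positive mass, which is $d_S$, and similarly for a negative one. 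With this bound, equality of residues modulo $r+1$ at a non-base quality $q\neq q^0$ forces $n_q=n'_q$, because $r+1$ divides $n_q-n'_q$ and $|n_q-n'_q|<r+1$.

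It remains to handle the base category, which is omitted from the tally. Its count is recovered from the fixed total:
\[
n_{q^0}=N-\sum_{q\neq q^0}n_q=n'_{q^0}.
\]
So the seller count vectors coincide. The same argument with $M$ and $\theta^0$ gives $m=m'$, and therefore $C=C'$, a contradiction. This establishes $r$-separation.

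Finally, the announcement count. Each of the $K-1$ non-base seller coordinates and the $L-1$ non-base buyer coordinates takes values in $\{0,\ldots,r\}$. The range therefore has at most $(r+1)^{K+L-2}$ elements. Under the convention that $\phi$ is onto its image, the statement ``at most'' is exactly what is needed.

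The only genuine obstacle is the coordinate bound $|n_q-n'_q|\leq d_S$ rather than $2d_S$. Once that is secured via the zero-sum observation, the rest is routine.
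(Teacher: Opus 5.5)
Your proof is correct and is essentially the paper's argument: both combine divisibility by $r+1$ with the zero-sum constraint $\sum_q (n_q-n'_q)=0$. The paper shows that every difference, including the base one, is a multiple of $r+1$, so a nonzero difference vector has $\ell_1$-norm at least $2(r+1)$; you instead bound each coordinate difference by $d_S\le r$ and recover the base count from the total, which is the same observation run in the other direction.
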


\begin{proof}
Suppose that $\phi_r(C)=\phi_r(C')$. For every nonbase seller quality $q\neq q^0$, the difference $n_q-n'_q$ is a multiple of $r+1$. Because the seller counts sum to $N$, the difference in the base coordinate, $n_{q^0}-n'_{q^0}$, is also a multiple of $r+1$.

If the seller count vectors differ, at least one coordinate rises and another falls. Their $\ell_1$-distance is at least $2(r+1)$, which implies $d_S(C,C')\geq r+1$. The same argument applies to the buyer counts. Hence,
\[
        C\neq C' \ \text{ and }\ \phi_r(C)=\phi_r(C') \implies d(C,C')\geq r+1.
\]
No two censuses at distance at most $r$ receive the same announcement, so $\phi_r$ is $r$-separating.

There are $K+L-2$ announced coordinates, and each has at most $r+1$ possible remainders. Therefore, the number of possible announcements is at most $(r+1)^{K+L-2}$.
\end{proof}

The construction is not claimed to minimize the number of announcements when $r>1$. Its significance is that, for fixed $r$, $K$, and $L$, the number does not grow with the numbers of buyers and sellers. Moreover, even a tally that minimizes the number of announcements among $r$-separating tallies need not minimize the information required to deter profitable coalitional deviations. The latter form a subset of the deviations covered by $r$-separation and depend on the allocation and transfer rule.

The preceding results concern coalitions whose reports change the census. Aggregate information faces a more fundamental limitation when a coalition preserves all type counts.

\begin{lemma}
\label{prop:census_preserving_swaps}
Suppose that, at some feasible profile, two agents on the same side of the market have different types. Those agents can exchange their reports without changing the market census. No tally that depends only on the census can detect this deviation, even if the tally reveals the full census.
\end{lemma}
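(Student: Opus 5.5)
The argument is a direct computation of the census before and after the exchange, followed by the observation that any census-based tally is blind to it. I will treat sellers and buyers separately; the two cases are symmetric, so I write out the seller case and note that the buyer case follows by relabelling.

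Fix a feasible profile $\omega=(q,\theta)$ with $C(\omega)\in\Domain$. Take two sellers $i\neq i'$ with $q_i=q^r\neq q^s=q_{i'}$. Consider the joint deviation in which seller $i$ reports $q^s$ and seller $i'$ reports $q^r$, while all other agents report truthfully, and call the resulting reported profile $\hat\omega$.

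First I verify that $C(\hat\omega)=C(\omega)$. Only the counts $n_r$ and $n_s$ can be affected. In the reported profile, $n_r$ loses seller $i$ but gains seller $i'$, and $n_s$ loses seller $i'$ but gains seller $i$. Both counts are therefore unchanged, the buyer counts are untouched, and in particular $\widehat C\in\Domain$.

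Next, let $\phi:\Domain\to\mathcal Z$ be any tally. Because $\phi$ depends only on the census, $\phi(C(\hat\omega))=\phi(C(\omega))=z_0$. Hence the count check in the count-check mechanism, which compares $\phi(\widehat C)$ with $z_0$ and checks $\widehat C\in\Domain$, accepts $\hat\omega$ exactly as it accepts truthful reports.

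This conclusion holds even when $\phi$ is one-to-one, since a one-to-one tally is still a function of the census and the census has not changed. The same holds for any procedure whose only information about the realized market is the census. The remark separating this from Lemma~\ref{lem:coalitional_count_checks} is that $d(C(\omega),C(\hat\omega))=0$. The exchange is therefore not a census-changing deviation, and $r$-separation places no restriction on it for any $r$.

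There is no real obstacle here. The only point needing care is to state precisely what \emph{detect} means: a reported profile is detected when it fails the comparison or the domain check. The claim concerns the count check only; it says nothing about whether the exchange is profitable, which the paper treats separately.
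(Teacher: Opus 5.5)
Your proof is correct and takes essentially the same route as the paper's. You show that swapping the two reports leaves every type count unchanged, so any function of the census, including a one-to-one tally, takes the same value, and the check accepts the exchanged reports just as it accepts truthful ones. Your extra remarks, that $d(C(\omega),C(\hat\omega))=0$ and on what ``detect'' means, are consistent with the paper but not needed for the argument.
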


\begin{proof}
Consider two sellers $i$ and $k$ with qualities $q_i\neq q_k$. The buyer case is the same. Let seller $i$ report $q_k$ and seller $k$ report $q_i$. One reported seller is removed from each of the two true quality categories and one is added back to each. Every seller count is unchanged. Since the reported census is the true census, every function of that census takes the same value before and after the exchange of reports. A census-based check that accepts truthful reports must also accept the exchanged reports.
\end{proof}

Lemma~\ref{prop:census_preserving_swaps} is a statement about detectability, not profitability. The exchange of reports may be unattractive under a particular allocation or transfer rule. But if it is profitable, or if it supports another equilibrium, no anonymous statistic of market composition can expose it. Deterring such deviations requires identity-linked evidence, individual audits, or incentives supplied by the trading rule itself.

\paragraph{Profitable report permutations.} The previous lemma shows that a permutation of reports on one side of the market may leave every count unchanged. Whether such a permutation is attractive depends on the allocation and transfers that the reports produce. This paragraph holds the count-check outcome rule fixed and characterizes profitable permutations.

Use the notation of Appendix~\ref{app:beyond_count_checks}, and let $f^{\mathrm{cc}}(\widehat\omega)$ denote the allocation and transfers selected by the count-check mechanism when the report profile $\widehat\omega$ is accepted. Fix a true profile $\omega$ and one side of the market, $A\in\{\Sellers,\Buyers\}$. For an agent $h\in A$, write $t_h$ for her type, $t_h=q_h$ on the seller side and $t_h=\theta_h$ on the buyer side.

Let $G\subseteq A$ be a coalition, and let $\sigma:G\to G$ be a permutation. The associated report profile $\widehat\omega^\sigma$ is defined by
\[
    \widehat t_h^\sigma=t_{\sigma(h)} \quad\text{for }h\in G, 
\]
\[
 \widehat t_h^\sigma=t_h \quad\text{for }h\notin G.
\]
Without loss, $G$ contains only agents whose reports change, so that $t_{\sigma(h)}\neq t_h$ for every $h\in G$. Because $\sigma$ merely rearranges the reports within one side of the market, we have that $C(\widehat\omega^\sigma)=C(\omega)$. The permutation passes every census-based count check that accepts truthful reports.

\begin{definition}
\label{def:profitable_report_permutation}
A census-preserving report permutation $\sigma$ is {\bf profitable} at
$\omega$ if
\[
    u_h\!\left(f^{\mathrm{cc}}(\widehat\omega^\sigma)\mid\omega\right) \geq u_h\!\left(f^{\mathrm{cc}}(\omega)\mid\omega\right),
\]
for every $h\in G$, with a strict inequality for at least one member of $G$.
\end{definition}

The following characterization is useful when the outcome rule treats agents symmetrically. At the truthful profile, let $s_h(\omega)$ denote agent $h$'s \emph{trading position}, that is, her assignment, including the outside option, together with her transfer. Let $U_h(s_k(\omega)\mid\omega)$ be agent $h$'s true payoff from occupying the position assigned to agent $k$.

\begin{definition}
    The outcome rule is {\bf position-equivariant} on side $A$ at $\omega$ if a permutation $\sigma$ of reports assigns each $h\in G$ the position $s_{\sigma(h)}(\omega)$. Thus, for every $h\in G$,
\[
    u_h\!\left(f^{\mathrm{cc}}(\widehat\omega^\sigma)\mid\omega\right) = U_h(s_{\sigma(h)}(\omega)\mid\omega).
\]
\end{definition}

This condition isolates report permutations that exchange complete trading positions. When it is not imposed, Definition~\ref{def:profitable_report_permutation} continues to apply, but the cycle characterization below need not do so.

\begin{proposition}
\label{prop:profitable_permutation_cycles}
Suppose that the count-check outcome rule is position-equivariant on side $A$ at $\omega$. A profitable census-preserving report permutation exists if and only if there are distinct agents, $ h_1,\ldots,h_m\in A$, for some $m\geq 2$, such that, writing $h_{m+1}=h_1$, adjacent agents in the cycle have different types,
\begin{equation}
    t_{h_{\ell+1}}\neq t_{h_\ell} \qquad \text{for every }\ell=1,\ldots,m,
    \label{eq:type_change_report_cycle}
\end{equation}
and
\begin{equation}
    U_{h_\ell}(s_{h_{\ell+1}}(\omega)\mid\omega) \geq U_{h_\ell}(s_{h_\ell}(\omega)\mid\omega) \qquad \text{for every }\ell=1,\ldots,m,
    \label{eq:profitable_report_cycle}
\end{equation}
with at least one inequality in \eqref{eq:profitable_report_cycle} being strict.
\end{proposition}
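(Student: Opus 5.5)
The plan is to prove both directions through the cycle decomposition of permutations, using position-equivariance to turn payoffs under $\widehat\omega^\sigma$ into payoffs from occupying other agents' truthful positions.

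\emph{Only if.} Suppose $\sigma:G\to G$ is a profitable census-preserving report permutation at $\omega$. Recall the normalization that $t_{\sigma(h)}\neq t_h$ for every $h\in G$.

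1. No $h\in G$ is a fixed point of $\sigma$, since a fixed point would give $t_{\sigma(h)}=t_h$.
2. Hence $\sigma$ decomposes into disjoint cycles, each of length at least two.
3. By position-equivariance, $u_h(f^{\mathrm{cc}}(\widehat\omega^\sigma)\mid\omega)=U_h(s_{\sigma(h)}(\omega)\mid\omega)$. Truthful reporting gives $u_h(f^{\mathrm{cc}}(\omega)\mid\omega)=U_h(s_h(\omega)\mid\omega)$.
4. So the profitability condition says $U_h(s_{\sigma(h)}(\omega)\mid\omega)\geq U_h(s_h(\omega)\mid\omega)$ for every $h\in G$, with strict inequality for some $h^\ast$.
5. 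Take the cycle of $\sigma$ containing $h^\ast$ and write it as $h_1\mapsto h_2\mapsto\cdots\mapsto h_m\mapsto h_1$ with $m\geq 2$, so that $h_{\ell+1}=\sigma(h_\ell)$.
6. Then \eqref{eq:profitable_report_cycle} holds along this cycle, with strict inequality at $h^\ast$. Condition \eqref{eq:type_change_report_cycle} is exactly $t_{\sigma(h_\ell)}\neq t_{h_\ell}$.

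\emph{If.} Given a cycle $h_1,\ldots,h_m$ satisfying \eqref{eq:type_change_report_cycle} and \eqref{eq:profitable_report_cycle}, the steps are as follows.

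1. Set $G=\{h_1,\ldots,h_m\}$ and $\sigma(h_\ell)=h_{\ell+1}$. This is a permutation of $G$ because the agents are distinct and the indices wrap with $h_{m+1}=h_1$.
2. Every member's report changes, since $t_{\sigma(h_\ell)}\neq t_{h_\ell}$ by \eqref{eq:type_change_report_cycle}. So $\sigma$ meets the normalization in the setup.
3. The permutation only rearranges reports within side $A$, so $C(\widehat\omega^\sigma)=C(\omega)$.
4. Therefore $\phi(C(\widehat\omega^\sigma))=z_0$, and the reported census lies in $\Domain$ because the true census does. The count check passes and $f^{\mathrm{cc}}(\widehat\omega^\sigma)$ is implemented.
5. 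Position-equivariance turns \eqref{eq:profitable_report_cycle} into the inequalities of Definition~\ref{def:profitable_report_permutation}, with the required strict inequality. So $\sigma$ is profitable.

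The main point to check is that position-equivariance applies to the particular $\sigma$ constructed in the converse. The definition is phrased for "a permutation $\sigma$ of reports", so I will read it as holding for every permutation on side $A$ at $\omega$. If it were only assumed for some permutations, the converse could fail.

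A smaller point concerns the forward direction. I restrict to the cycle through $h^\ast$ rather than the whole coalition $G$. This is legitimate because the proposition only asks for the existence of such a cycle, and the cycle inherits both the type-change property and the weak and strict inequalities from $\sigma$.
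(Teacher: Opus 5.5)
Your proof is correct and follows the paper's argument. For sufficiency you let $h_\ell$ report $t_{h_{\ell+1}}$ and invoke position-equivariance, and for necessity you decompose $\sigma$ into disjoint cycles and keep the one carrying the strict improvement; your extra checks (no fixed points, the count check passing, reading position-equivariance as covering every permutation on side $A$) make explicit what the paper's proof uses implicitly.
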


\begin{proof}
Suppose first that the agents $h_1,\ldots,h_m$ satisfy \eqref{eq:type_change_report_cycle} and \eqref{eq:profitable_report_cycle}. Let $h_\ell$ report the type of $h_{\ell+1}$, with $h_{m+1}=h_1$. Because adjacent agents have different types, every agent in the cycle changes her report. The reports are merely rearranged, so the census is unchanged. Position equivariance assigns $h_\ell$ the position $s_{h_{\ell+1}}(\omega)$. Equation~\eqref{eq:profitable_report_cycle} makes every member weakly better off and at least one member strictly better off. The permutation is profitable.

Conversely, suppose that a profitable census-preserving report permutation exists. Its effective coalition contains only agents whose reports change, so $t_{\sigma(h)}\neq t_h$ for every member $h$. Decompose the permutation into disjoint cycles. Position equivariance and profitability imply that every agent in each cycle weakly prefers the next agent's position to her own. At least one cycle contains a strict improvement. That cycle satisfies both \eqref{eq:type_change_report_cycle} and \eqref{eq:profitable_report_cycle}.
\end{proof}

\begin{definition}
    The positions satisfy {\bf strict self-selection} on side $A$ at $\omega$ if
\begin{equation}
    U_h(s_h(\omega)\mid\omega) > U_h(s_k(\omega)\mid\omega)
    \label{eq:strict_position_self_selection}
\end{equation}
whenever $t_h\neq t_k$. Each agent then strictly prefers the trading position associated with her true report to every position associated with another type.
\end{definition}

\begin{corollary}
\label{cor:strict_self_selection_permutations}
Suppose that the count-check outcome rule is position-equivariant on side $A$ at $\omega$. If strict self-selection \eqref{eq:strict_position_self_selection} holds, no census-preserving report permutation on side $A$ is profitable at $\omega$.
\end{corollary}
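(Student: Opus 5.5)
The plan is to argue by contradiction and use Proposition~\ref{prop:profitable_permutation_cycles} directly. Fix $\omega$ and side $A$ at which the count-check outcome rule is position-equivariant and strict self-selection \eqref{eq:strict_position_self_selection} holds. Suppose, for contradiction, that some census-preserving report permutation on side $A$ is profitable at $\omega$. By the ``only if'' direction of Proposition~\ref{prop:profitable_permutation_cycles}, there is then a cycle of distinct agents $h_1,\ldots,h_m\in A$, with $m\geq2$, that satisfies both \eqref{eq:type_change_report_cycle} and \eqref{eq:profitable_report_cycle}.

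The key step is to confront \eqref{eq:profitable_report_cycle} with strict self-selection for a single link of the cycle. Take any $\ell\in\{1,\ldots,m\}$. By \eqref{eq:type_change_report_cycle}, $t_{h_{\ell+1}}\neq t_{h_\ell}$, so \eqref{eq:strict_position_self_selection} applies to the pair $h=h_\ell$ and $k=h_{\ell+1}$. It gives
\[
U_{h_\ell}(s_{h_\ell}(\omega)\mid\omega) > U_{h_\ell}(s_{h_{\ell+1}}(\omega)\mid\omega).
\]
This directly contradicts the weak inequality in \eqref{eq:profitable_report_cycle} for that same $\ell$. A single link therefore suffices, and the strictness clause in the proposition is not even needed. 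We conclude that no profitable permutation exists.

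There is no real obstacle here. The only point to check is bookkeeping: the cycle from Proposition~\ref{prop:profitable_permutation_cycles} consists of agents whose types differ from those of their successors, which is exactly the hypothesis under which strict self-selection bites. Alternatively, one could skip the cycle decomposition altogether. Under position-equivariance, each $h\in G$ receives $U_h(s_{\sigma(h)}(\omega)\mid\omega)$ with $t_{\sigma(h)}\neq t_h$. Strict self-selection then makes every coalition member strictly worse off, so Definition~\ref{def:profitable_report_permutation} fails whenever $G$ is nonempty.
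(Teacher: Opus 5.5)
Your argument is correct. Your main route invokes the ``only if'' direction of Proposition~\ref{prop:profitable_permutation_cycles} and then refutes a single link of the resulting cycle. The paper's proof is instead exactly your closing alternative. Under position-equivariance, each member $h$ of the effective coalition receives the position $s_{\sigma(h)}(\omega)$ with $t_{\sigma(h)}\neq t_h$, so \eqref{eq:strict_position_self_selection} makes every member strictly worse off, and Definition~\ref{def:profitable_report_permutation} fails.

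The detour through the cycle characterization is valid but buys nothing, because the converse of that proposition is itself proved by the same member-by-member comparison. The direct argument also gives a stronger conclusion: every coalition member strictly loses, not merely one link of some cycle violating the weak inequality.
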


\begin{proof}
Every nontrivial report permutation assigns each member of its effective coalition a position associated with a different type. By \eqref{eq:strict_position_self_selection}, each such agent strictly prefers her truthful position. Hence, the permutation cannot satisfy Definition~\ref{def:profitable_report_permutation}.
\end{proof}

Strict self-selection is sufficient but not necessary. Some agents may prefer positions associated with other types without these preferences forming a cycle that benefits all participants. Proposition~\ref{prop:profitable_permutation_cycles} shows that it is the completion of a profitable cycle, rather than any single preference for another position, that matters.

The result concerns coordinated deviations from the truthful equilibrium. It does not imply full implementation. A census-preserving false profile may be unattractive as a joint deviation from truth and nevertheless be stable against unilateral deviations once it has been reached. Appendix~\ref{sec:weak_full_implementation} returns to this distinction.

Definition~\ref{def:profitable_report_permutation} does not allow coalition members to make additional side payments. If such transfers are available, the relevant condition is whether the permutation raises the coalition's total payoff, since any increase can then be redistributed among its members.


\subsection{Large finite markets and the continuum limit}
\label{subsec:large_finite_continuum}

The count-check argument depends on individual reports having positive weight in the market census. This remains true in a large finite market, as changing one report still changes two integer counts. It ceases to be true in a continuum economy, where an individual agent has measure zero.

For every finite $N$ and $M$, Theorem~\ref{thm:main_countcheck} gives a tally with $\max\{K,L\}$ announcements that detects every unilateral change in the census. More generally, Proposition~\ref{prop:residue_tallies} gives, for any fixed coalition size $r$, a tally with at most $(r+1)^{K+L-2}$ announcements that detects every census-changing deviation by a coalition of at most $r$ agents. Neither bound grows with the number of market participants when $K$, $L$, and $r$ are fixed.

This invariance should not be confused with a continuum result. The finite tallies are functions of integer counts, and they remain sensitive to a change of one unit however large the market becomes. A continuum description retains only aggregate measures. It discards changes made on sets of measure zero.

\begin{proposition}
\label{prop:continuum_no_detection}
Suppose that sellers and buyers form atomless measure spaces and that a public signal depends only on the aggregate distributions of reported seller qualities and buyer types. No such signal can detect a false report by one agent through a count check.
\end{proposition}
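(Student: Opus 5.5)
The plan is to formalize the continuum setting so that the measure-zero argument becomes a one-line invariance. Model sellers as an atomless probability (or finite) measure space $(\Sellers,\Sigma_S,\mu)$ and buyers as $(\Buyers,\Sigma_B,\nu)$. A report profile is a measurable map $\hat q:\Sellers\to\Qualities$ together with $\hat\theta:\Buyers\to\Types$. The aggregate distributions are the image measures
\[
n_r(\hat q)=\mu\left(\hat q^{-1}(q^r)\right)
\]
and
\[
m_\ell(\hat\theta)=\nu\left(\hat\theta^{-1}(\theta^\ell)\right).
\]
A public signal depending only on these distributions is a function $\psi$ of the vector $\bigl((n_r)_r,(m_\ell)_\ell\bigr)$. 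A count check compares $\psi$ at the reported distribution with the certified value $z_0=\psi$ at the true distribution, and a false report is detected only if the two values differ.

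Fix the true profile $(q,\theta)$ and a single seller $i$, or buyer $j$, who changes her report while everyone else reports truthfully. I would take the deviating profile $\hat q$ to agree with $q$ off $\{i\}$. For a single agent's report to be meaningful, singletons must be measurable, so I will note this assumption; if they are not, one can replace $\{i\}$ by any null set containing $i$ and the argument is unchanged. The key step is that atomlessness gives $\mu(\{i\})=0$. Hence for every $r$ the preimages $\hat q^{-1}(q^r)$ and $q^{-1}(q^r)$ differ by a subset of $\{i\}$, and
\[
n_r(\hat q)=n_r(q).
\]
The buyer case is identical, with $\nu(\{j\})=0$. The reported aggregate distribution therefore equals the true one. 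Since $\psi$ depends only on that distribution, its reported value equals $z_0$, the check passes, and the deviation is undetected whatever $\psi$ is.

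I will also remark that the same argument applies to any deviation confined to a null set, so finite coalitions are covered too. This contrasts with the finite case, where a single report changes an integer count by one.

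The main obstacle is interpretive rather than technical: deciding what "one agent's false report" means when individual points carry no mass, and ensuring measurability of the deviated profile. I would handle it by defining the deviation as a modification on a null set. I would state explicitly that the conclusion holds for every signal $\psi$, including one that reveals the full distribution, so that no choice of tally can restore detection.
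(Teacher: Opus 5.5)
Your proposal is correct and follows essentially the same route as the paper. The deviating profile differs from the truthful one only on the deviator's singleton, atomlessness makes that singleton null, and so the reported type distribution, and hence any signal computed from it, is unchanged and the check passes. Your remarks on measurability of singletons and on deviations confined to null sets (e.g.\ finite coalitions) go slightly beyond what the paper states but do not alter the argument.
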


\begin{proof}
Consider the buyer side. The argument for sellers is the same. Let $(I,\nu)$ be an atomless population of buyers, and let $\theta:I\to \Types$ be a measurable type profile. The associated distribution of buyer types is the measure
\[
        \mu_\theta(A) = \nu\{i\in I \mid \theta(i)\in A\subseteq\Types\}.
\]

Suppose buyer $i_0$ changes her report from $\theta(i_0)$ to another type. The truthful and reported profiles differ only on the singleton $\{i_0\}$. Since the population is atomless, we have that $ \nu(\{i_0\})=0$. The change leaves the distribution $\mu_\theta$ unchanged. A public signal that depends only on this distribution is unchanged as well. The signal calculated from the reported distribution agrees with the signal of the true distribution, so the count check cannot detect the false report.
\end{proof}

The finite and continuum conclusions are compatible. Detection holds at every point in a sequence of finite markets because one participant always changes an integer count. It fails in the nonatomic model because the limiting aggregate distribution assigns no weight to that participant. Passing from finite counts to continuum distributions removes the feature on which the tally relies.

As a result, market tallies should be understood as instruments for finite institutions rather than as approximations to a nonatomic mechanism. Their most natural applications are settings in which participation or inventory remains individually countable, even when the market is large.

\subsection{Disclosure criteria and registry interpretation}
\label{subsec:privacy_objective}

The main analysis measures public disclosure by the number of announcements a tally may produce. For a separating tally $\phi:\Domain\to\mathcal Z$, this number is $|\phi(\Domain)|$. The criterion is prior-free, as every feasible census is considered, regardless of how likely it is to occur.

If the cost of disclosure is an increasing function of the number of possible announcements, the institution weakly prefers a separating tally with the smallest range. On the unrestricted domain, Theorem~\ref{thm:main_countcheck} gives
\[
        \min_{\phi\ \mathrm{separating}} |\phi(\Domain)| = \max\{K,L\}.
\]
The number of announcements is independent of $N$ and $M$ when the sets of seller qualities and buyer types are fixed.

\paragraph{Prior-dependent disclosure.} When a probability distribution over market censuses is available, one may instead measure how much information the public announcement is expected to convey. Let $\mu$ be a probability distribution on $\Domain$. For a tally $\phi:\Domain\to\mathcal Z$, define
\[
        \mu_\phi(z) = \sum_{\{C\in\Domain \,\mid\, \phi(C)=z\}}\mu(C).
\]
This is the probability that the public announcement is $z$. The entropy of the tally, measured in bits, is
\[
        H_\mu(\phi) = - \sum_{\{z\in\mathcal Z \,\mid\,\mu_\phi(z)>0\}} \mu_\phi(z)\log_2\mu_\phi(z).
\]
Because the tally is a deterministic function of the census, this entropy is also the mutual information between the census and its public announcement.

A prior-dependent disclosure policy solves
\begin{equation*}
        \min_{\phi}H_\mu(\phi)
\end{equation*}
subject to $C\sim_1C' \implies \phi(C)\neq\phi(C')$ for every $C,C'\in\Domain$. The constraint preserves the incentive requirement of the count-check mechanism. Even censuses assigned probability zero remain subject to separation if the mechanism is required to work throughout $\Domain$.

The two disclosure criteria need not select the same tally. Minimizing $|\phi(\Domain)|$ controls the number of announcements that may be required. Minimizing $H_\mu(\phi)$ gives greater weight to likely censuses and may favor an uneven distribution of probability across announcements. A tally may then assign one announcement to a large set of likely, mutually nonneighboring censuses while using less frequent announcements for the remainder of the domain.

The entropy criterion may also use more announcements than the smallest possible range. Additional announcements can be assigned to unlikely censuses if doing so permits more probability to be concentrated on one common announcement. Thus, the prior-dependent problem is not generally obtained by first minimizing the number of announcements and then choosing among the minimizers.

\paragraph{Registry interpretation.} A registry provides a natural informational anchor for either disclosure criterion. Suppose that it holds verified records from which it can determine the market census. Before reports are submitted to the trading mechanism, the registry commits to a public rule $\phi:\Domain\to\mathcal Z$ and announces $z=\phi(C)$. Market participants know the rule and can calculate the tally implied by their reports. If $\phi$ is separating, the registry's announcement supports the count check in Theorem~\ref{thm:main_countcheck}.

The registry may need individual records or the full census to produce the announcement. Those data need not be disclosed publicly. What matters is that the registry can authenticate the announcement and that its content is not derived solely from the reports being checked.

A small range does not by itself guarantee that no individual information can be inferred. The implication of an announcement depends on the domain and on what participants already know. The result is best understood as limiting public disclosure for a specified implementation task, and not as providing protection against every form of inference. The same distinction applies to the posted-price institution. An organizer may use detailed information internally to calculate prices, capacities, and clearing instructions while revealing only the certified market structure needed by participants.



\paragraph{Institutional comparison.} The mechanisms constructed in the paper differ in both their public information and their institutional demands. The count-check mechanism can use $\kappa^{\mathrm{cc}}(\Domain)$ public announcements, but it asks agents to report their types and assigns trades directly. The posted-price protocol asks agents to make market choices, but it requires certified prices, seller capacities, and a rule for clearing buyer applications.

These differences do not imply a general ranking. Such a ranking would require a model of the costs of collecting information, certifying it, communicating public announcements, processing reports, and clearing trades. The paper does not provide such a model.

Nor can the informational cost of posted-price clearing always be identified with full public census disclosure. If the public signal is the organizer's only information, Proposition~\ref{prop:seller_capacity_lower_bound} shows that the signal must at least identify seller capacities in the quality-labelled protocol. If the organizer has a private, independently verified record of the census, she may calculate prices and capacities internally and announce only the resulting market structure. The public information required by posted prices depends on what the organizer already knows.

The two mechanisms nevertheless illustrate a useful institutional comparison. Limited public disclosure may be paired with a more active reporting and assignment procedure. A market-choice institution may reduce direct reporting while placing greater demands on certification and coordination. Whether one arrangement is preferable depends on the costs of these activities, not on the number of public announcements alone.

The growing difference between the number of possible censuses and the number of announcements required by a separating tally remains informative about disclosure. It does not, by itself, establish that the count-check mechanism is socially less costly than posted-price clearing. Institutional costs may also vary with market size.


\subsection{Buyer-side composition}
\label{app:buyer_side}

The buyer-side term $L$ in Theorem~\ref{thm:main_countcheck} appears because buyers privately report payoff-relevant types. A false buyer report changes the buyer census, so the public tally must also discipline changes on that side of the market. Buyer counts need not play the same role when buyer types are public or when buyers reveal their information through market choices rather than reports.

\paragraph{A seller-only benchmark.} Suppose that buyer types are publicly known or that there is only one buyer type relevant to the mechanism. Only sellers submit private information. The set of possible seller censuses is
\[
        \mathcal N_N = \left\{ n\in\mathbb Z_+^K \; \text{ such that }\; \sum_{r=0}^{K-1}n_r=N \right\}.
\]
A seller tally is a function $\phi_S:\mathcal N_N\to \mathcal Z$. It separates unilateral seller reports if its announcement changes whenever one seller changes her reported quality.

On the unrestricted seller domain, Theorem~\ref{thm:main_countcheck}, applied with $L=1$, shows that the least number of announcements is $K$. One tally attaining this bound is
\[
        \phi_S(n) = \left( \sum_{r=0}^{K-1}r n_r \right)\bmod K.
\]
The buyer-side term disappears because no private buyer report is being checked.

\paragraph{Private buyer reports.} Now suppose that buyer types are private and are reported to the mechanism. If every buyer census over the $L$ types is feasible, at least $L$ public announcements are needed to reject every unilateral false buyer report.

To see this, fix the seller census and fix $M-1$ buyers at one buyer type. Allow the remaining buyer to have any of the $L$ types. The resulting $L$ censuses are pairwise one-agent neighbors and must receive different announcements. If two shared an announcement, the remaining buyer could change her report between the corresponding types without failing the count check.

Thus, buyer composition matters for the count-check mechanism because buyers submit private reports. It does not follow that every institution must disclose buyer counts.


\subsubsection{Buyer self-selection through prices}
\label{subsec:buyer_self_selection}

Buyer types may remain private even when buyers do not report them. Prices can sometimes induce different types to choose different actions. Fix a seller census $n=(n_q)_{q\in\Qualities}$ and a collection $\mathcal M$ of possible buyer censuses. Suppose there is one price vector $p=(p_q)_{q\in\Qualities}$ that is used for every $m\in\mathcal M$. Let the buyer action set be $\mathcal A=\Qualities\cup\{0\}$, where action $0$ denotes no trade. For a buyer of type $\theta$, define $U_\theta(0)=0$ and $U_\theta(q)=v_\theta(q)-p_q$ for $q\in\Qualities$. Suppose that every buyer type has a unique preferred action. Write $\chi(\theta)$ for that action, so
\begin{equation}
        \argmax_{a\in\mathcal A}U_\theta(a) = \{\chi(\theta)\}.
        \label{eq:unique_buyer_choice}
\end{equation}
For a buyer census $m\in\mathcal M$, the number of buyers who choose quality $q$ is
\[
        d_q(m) = \sum_{\{\theta \,\mid\,\chi(\theta)=q\}}m_\theta.
\]
Assume that these choices respect seller capacity, i.e., 
\begin{equation}
        d_q(m)\leq n_q \quad \text{for every }q\in\Qualities \text{ and every }m\in\mathcal M.
        \label{eq:self_selection_capacity}
\end{equation}

\begin{lemma}
\label{prop:buyer_census_not_needed_self_selection}
Suppose that \eqref{eq:unique_buyer_choice} and \eqref{eq:self_selection_capacity} hold, and that the allocation induced by $\chi$ maximizes total surplus for every $m\in\mathcal M$. Public disclosure of the realized buyer census is not needed to screen buyers. At the posted prices, every buyer chooses the action prescribed by $\chi$, and the resulting allocation is efficient for every $m\in\mathcal M$.
\end{lemma}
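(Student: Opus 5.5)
The argument is a direct verification that combines the uniqueness assumption \eqref{eq:unique_buyer_choice} with the capacity assumption \eqref{eq:self_selection_capacity}. The organizer fixes the price vector $p$ for every $m\in\mathcal M$; it is determined by the seller census $n$ and the collection $\mathcal M$, not by the realized buyer census. So the announcement, and hence every buyer's information about her own payoffs $U_\theta(a)$, is the same at every $m\in\mathcal M$. The first step is to use this observation to show that buyer behaviour cannot depend on the realized buyer census. Because \eqref{eq:unique_buyer_choice} gives each type a unique maximizer $\chi(\theta)$ of $U_\theta$ over $\mathcal A$, a type-$\theta$ buyer's choice depends only on her own type and the posted prices. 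Disclosing the realized buyer census would not change that choice.

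The second step is to check that these choices are actually realized and that no buyer wants to deviate. By \eqref{eq:self_selection_capacity}, the number $d_q(m)$ of buyers choosing quality $q$ never exceeds $n_q$. Each buyer who chooses $q$ can therefore be served by a distinct seller of quality $q$ at price $p_q$, and no rationing or clearing step is needed. Every buyer thus obtains her chosen action, which is her unique best action. Any other action gives a strictly lower payoff, or at worst a rationed one, which yields at most the outside option $0\le U_\theta(\chi(\theta))$. So following $\chi$ is a strict best response, ex post, at every $m\in\mathcal M$ and regardless of what others do.

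The third step is efficiency, which comes straight from the hypothesis that the allocation induced by $\chi$ maximizes total surplus for every $m\in\mathcal M$. The realized type-class allocation is $y_{\theta q}=m_\theta$ when $\chi(\theta)=q$ and $0$ otherwise. It is feasible: buyer constraints hold trivially, and seller constraints hold by \eqref{eq:self_selection_capacity}. By assumption this allocation is efficient.

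I expect the only delicate point to be the seller side, specifically how sellers are assigned to the markets bearing their qualities. The lemma is stated only for screening buyers at a fixed seller census $n$. I would treat seller placement as given: sellers are in their certified quality markets, as in the posted-price protocol, with the capacity check ensuring this. Buyers are then matched to any $d_q(m)$ of the $n_q$ sellers in market $q$, via a fixed tie-breaking rule. Seller incentives would be deferred to the later decentralized-price appendix, which the paper already states is where those conditions are imposed.
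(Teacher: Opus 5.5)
Your proposal is correct and follows essentially the same route as the paper: uniqueness in \eqref{eq:unique_buyer_choice} makes each buyer's choice depend only on her own type and the posted prices, \eqref{eq:self_selection_capacity} lets those demands be served without a clearing step, and efficiency is the stated hypothesis. Your explicit feasibility check of the induced type-class allocation is a harmless addition. One small overstatement: once you introduce rationing, following $\chi$ is a best response only given that the other buyers also follow $\chi$, not a strict best response ``regardless of what others do.'' If others crowd into market $\chi(\theta)$ she may be rationed, and when $\chi(\theta)=0$ a rationed deviation merely ties with the outside option, but the lemma does not need that stronger claim.
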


\begin{proof}
Equation~\eqref{eq:unique_buyer_choice} gives each buyer a strict best response that depends only on her own type and the posted prices. No buyer report or public buyer count enters that choice. Equation~\eqref{eq:self_selection_capacity} ensures that the resulting demands can be served. By the remaining hypothesis, the resulting allocation maximizes total surplus for every $m\in\mathcal M$.
\end{proof}

The assumptions are demanding. The same prices must screen buyers for every census under consideration, and the resulting choices must never exceed capacity. A simple example illustrates both the possibility and its limitation.

Let $\Qualities=\{q_H,q_L\}$ and $\Types=\{\theta_H,\theta_L\}$, with seller costs $c(q_H)=2$ and $c(q_L)=0$, and buyer values
\[
\begin{array}{c|cc}
        & q_H & q_L \\ \hline
\theta_H & 10 & 4 \\
\theta_L & 3  & 2
\end{array}.
\]
There is one good of each quality, and $\mathcal M$ contains the buyer censuses with at most one buyer of each type. At prices $p_H=6$ and $p_L=1$, a high type obtains payoffs $4$ from $q_H$ and $3$ from $q_L$, and therefore chooses $q_H$. A low type obtains payoffs $-3$ from $q_H$ and $1$ from $q_L$, and therefore chooses $q_L$.

These choices are feasible for every census in $\mathcal M$. They are also efficient, as the high type creates surplus $8$ with $q_H$ and $4$ with $q_L$, while the low type creates surplus $1$ with $q_H$ and $2$ with $q_L$. No buyer report or public buyer census is needed to produce this sorting.

If the domain instead allowed two high-type buyers and only one high-quality good, both buyers would choose $q_H$. Prices would still screen their preferences, but they would not resolve the capacity conflict. This is the reason the general posted-price result retains a clearing rule.

The role of buyer composition is needed when private buyer reports must be checked or when prices and capacities depend on the realized buyer census. It may be unnecessary when one price vector induces efficient and feasible buyer choices throughout the relevant domain.


\subsection{Weak and full implementation}
\label{sec:weak_full_implementation}

Theorem~\ref{thm:main_countcheck} establishes that truthful reporting is an ex post equilibrium. It does not claim that truthful reporting is the only equilibrium. This distinction matters because the count check verifies the composition of reports, not the identity of the agents submitting them.

The profitable-cycle condition in Proposition~\ref{prop:profitable_permutation_cycles} does not settle this question. It asks whether agents would jointly prefer a report permutation to truthful reporting. Full implementation asks instead whether a false report profile can itself be an equilibrium. A profile may fail the first test and still satisfy the second because the agents' incentives to leave it are evaluated after the false profile has been reached.

\begin{definition}
\label{def:weak_full_implementation}
A mechanism \textbf{weakly implements} immediate efficient trade if, at every type profile, it has an equilibrium whose outcome is a surplus-maximizing matching carried out at date zero.

A mechanism \textbf{fully implements} immediate efficient trade if, at every type profile, an equilibrium exists and every equilibrium outcome is a surplus-maximizing matching carried out at date zero.
\end{definition}


The absence of a profitable report permutation does not resolve the question of full implementation. Proposition~\ref{prop:profitable_permutation_cycles} asks whether a coalition would prefer a census-preserving permutation to truthful reporting. Full implementation asks whether a false report profile can itself be an equilibrium. A profile may be unattractive as a deviation from truth and nevertheless be stable once it has been reached.

There are two distinct sources of false equilibria. The first consists of false report profiles that pass the count check. The second consists of rejected profiles from which no agent can restore consistency by changing her report alone.

Fix a true profile $\omega$, and let $C_0=C(\omega)$ and $z_0=\phi(C_0)$, and let $f^{\mathrm{cc}}(\widehat\omega)$ denote the allocation and transfers selected by the count-check mechanism when the report profile $\widehat\omega$ is accepted.

\begin{proposition}
\label{prop:accepted_false_equilibrium}
Suppose that $\phi$ is separating. Let $\widehat\omega$ be any report profile satisfying $C(\widehat\omega)\in\Domain$ and $\phi(C(\widehat\omega))=z_0$. If every agent receives a nonnegative true payoff,
\begin{equation}
    u_h\!\left(f^{\mathrm{cc}}(\widehat\omega)\mid\omega\right) \geq 0 \qquad \text{for every }h\in\Sellers\cup\Buyers,
    \label{eq:accepted_profile_nonnegative}
\end{equation}
then $\widehat\omega$ is a Nash equilibrium of the reporting game following the announcement $z_0$.

As a result, if $\widehat\omega$ is false and its allocation is inefficient at $\omega$, the count-check mechanism does not fully implement immediate efficient trade.
\end{proposition}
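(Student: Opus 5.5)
The argument mirrors the unilateral-deviation step in the proof of Theorem~\ref{thm:main_countcheck}, now taken from the false profile $\widehat\omega$ rather than from the truthful profile. Fix the true profile $\omega$, the certified announcement $z_0$, and a report profile $\widehat\omega$ with $C(\widehat\omega)\in\Domain$ and $\phi(C(\widehat\omega))=z_0$. By construction of the count-check mechanism, $\widehat\omega$ passes the check and each agent $h$ obtains the true payoff $u_h(f^{\mathrm{cc}}(\widehat\omega)\mid\omega)$, which is nonnegative by \eqref{eq:accepted_profile_nonnegative}. It therefore suffices to show that every unilateral change of report from $\widehat\omega$ yields agent $h$ exactly zero.

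To do so, fix an agent $h$ and an alternative report $\hat\omega_h'\neq\hat\omega_h$, and set $\widehat C'=C(\hat\omega_h',\widehat\omega_{-h})$. Changing one report either leaves the census unchanged or moves it to a one-agent neighbor. The first case needs attention: if $\hat\omega_h'$ and $\hat\omega_h$ are the same type, the reports coincide. Since a report is a type and a different report is a different type, $\widehat C'\neq C(\widehat\omega)$ always, and $\widehat C'$ is obtained by moving one unit between two counts on $h$'s side. So either $\widehat C'\notin\Domain$, and the mechanism rejects the report as infeasible, or $\widehat C'\sim_1 C(\widehat\omega)$. In the latter case separation gives $\phi(\widehat C')\neq\phi(C(\widehat\omega))=z_0$, so the check fails. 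In both cases trade is cancelled with no transfers, and $h$ receives $0\leq u_h(f^{\mathrm{cc}}(\widehat\omega)\mid\omega)$. No unilateral deviation is profitable, so $\widehat\omega$ is a Nash equilibrium of the reporting game following $z_0$.

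The second statement then follows from Definition~\ref{def:weak_full_implementation}. If $\widehat\omega$ is false and $f^{\mathrm{cc}}(\widehat\omega)$ selects a matching whose true surplus $W(\cdot\mid\omega)$ falls short of $W^*(\omega)$, then we have an equilibrium at $\omega$ whose outcome is not surplus-maximizing, and full implementation fails.

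The only delicate point is the observation that the relevant comparison is against $C(\widehat\omega)$ rather than $C_0$: a deviation from $\widehat\omega$ moves the census away from $C(\widehat\omega)$, whose tally equals $z_0$, so separation is applied at $C(\widehat\omega)$. This is legitimate because separation is a property of every neighboring pair in $\Domain$, not only of pairs involving the true census. I expect no other obstacle; the payoff argument is routine once nonnegativity \eqref{eq:accepted_profile_nonnegative} is assumed.
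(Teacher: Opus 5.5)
Your proof is correct and follows essentially the same route as the paper: from the accepted profile $\widehat\omega$, any unilateral change of report either produces a census outside $\Domain$ or moves to a one-agent neighbor of $C(\widehat\omega)$, which separation rejects, so the deviator's zero payoff cannot exceed her nonnegative payoff at $\widehat\omega$. Your explicit derivation of the failure of full implementation from its definition just spells out a step the paper leaves implicit.
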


\begin{proof}
Fix an agent $h$ and hold all other reports at $\widehat\omega_{-h}$. If agent $h$ submits the same report, the outcome is unchanged. Any different type report either produces a census outside $\Domain$ or changes $C(\widehat\omega)$ to a one-agent neighbor.

In the first case, the report is rejected. In the second, separation gives
\[
    \phi(C(\widetilde\omega_h,\widehat\omega_{-h})) \neq \phi(C(\widehat\omega)) = z_0,
\]
so the report is again rejected. The deviating agent receives zero. By \eqref{eq:accepted_profile_nonnegative}, her payoff at $\widehat\omega$ is nonnegative. No unilateral deviation is profitable, and $\widehat\omega$ is a Nash equilibrium.
\end{proof}

The proposition reveals an unusual feature of the rejection threat. Starting from truth, it discourages a unilateral false report. Starting from an accepted false profile, the same threat may discourage an agent from changing her report. Every accepted profile that gives all agents nonnegative true payoffs is locally stable, whether or not its reports or allocation are correct.

False equilibria need not pass the count check. Let 
\[
    \Domain(z_0) = \{C\in\Domain \mid\phi(C)=z_0\}
\]
be the set of censuses accepted after the announcement $z_0$.

\begin{definition}
\label{def:locally_inescapable_rejection}
A rejected report profile $\widehat\omega$, with reported census $\widehat C=C(\widehat\omega)\in\Domain$, is {\bf locally inescapable} relative to $z_0$ if $\widehat C\notin\Domain(z_0)$ and no one-agent neighbor of $\widehat C$ belongs to $\Domain(z_0)$. Thus, for every $C'\in\Domain$,
\[
    C'\sim_1\widehat C \implies \phi(C')\neq z_0.
\]
\end{definition}

A locally inescapable profile fails the count check, and no agent can make the check pass by changing her report alone.

\begin{lemma}
\label{prop:locally_inescapable_equilibrium}
Every locally inescapable rejected report profile is a no-trade Nash equilibrium of the reporting game following the announcement $z_0$.
\end{lemma}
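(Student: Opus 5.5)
The plan is to check the Nash condition directly, agent by agent. The argument mirrors the proof of Proposition~\ref{prop:accepted_false_equilibrium}, but the roles of acceptance and rejection are reversed.

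Fix the true profile $\omega$, the announcement $z_0$, and a locally inescapable rejected profile $\widehat\omega$ with reported census $\widehat C\in\Domain$. Since $\phi(\widehat C)\neq z_0$, the count-check mechanism cancels trade at $\widehat\omega$ and makes no transfers. Every agent therefore receives exactly zero.

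Next, take any agent $h$ and any alternative report $\widetilde\omega_h\neq\widehat\omega_h$, holding $\widehat\omega_{-h}$ fixed. Let $C'=C(\widetilde\omega_h,\widehat\omega_{-h})$. There are three cases.
\begin{enumerate}[label=\textup{(\alph*)},leftmargin=2em]
    \item The new report is of the same type as $\widehat\omega_h$. The census is unchanged, so the profile is still rejected.
    \item The new report changes $h$'s reported type and $C'\notin\Domain$. The report is rejected as infeasible.
    \item The new report changes $h$'s reported type and $C'\in\Domain$. Moving one unit between two counts on one side of the market gives $C'\sim_1\widehat C$. Local inescapability then yields $\phi(C')\neq z_0$, so the check fails and trade is cancelled.
\end{enumerate}
In every case the deviator again receives zero. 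No unilateral deviation is strictly profitable, so $\widehat\omega$ is a Nash equilibrium, and its outcome is no trade.

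The only step needing care is case (c): one must confirm that a unilateral change of type within $\Domain$ always produces a one-agent neighbor in the sense of the definition, and that the only other outcome is infeasibility. This follows from the description of a single false report given before the definition of $\sim_1$. Unlike Proposition~\ref{prop:accepted_false_equilibrium}, no separation or nonnegativity hypothesis is needed here, because every payoff involved is zero.
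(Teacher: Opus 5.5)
Your proof is correct and matches the paper's argument: the rejected profile gives every agent zero, and any unilateral change either leaves $\Domain$ or lands on a one-agent neighbor that local inescapability rejects, so every deviation also yields zero. Your case (a) is vacuous because reports are types, so $\widetilde\omega_h\neq\widehat\omega_h$ already forces a type change, but including it does no harm.
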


\begin{proof}
At the rejected profile, the mechanism cancels trade and every agent receives zero. A unilateral change either produces a census outside $\Domain$ or moves the reported census to a one-agent neighbor. In the first case, the new report is rejected by feasibility. In the second, it is rejected by Definition~\ref{def:locally_inescapable_rejection}. Every unilateral deviation also gives zero. No agent can obtain a strict improvement, so the rejected report profile is a Nash equilibrium.
\end{proof}

The problem is particularly transparent when the full census is public. Recall the census distance $d$ defined in Appendix~\ref{app:coalitions}.

\begin{corollary}
\label{cor:full_census_rejection_equilibrium}
Suppose that $\phi$ reveals the full census. If a report profile $\widehat\omega$ satisfies $d(C(\widehat\omega),C_0)\geq 2$, then $\widehat\omega$ is a no-trade Nash equilibrium following the announcement of the true census. If $W^*(\omega)>0$, this equilibrium is inefficient, and the mechanism does not fully implement immediate efficient trade.
\end{corollary}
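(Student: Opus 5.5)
The plan is to reduce the corollary to Lemma~\ref{prop:locally_inescapable_equilibrium}. I will show that when $\phi$ is one-to-one and the reported census lies at distance at least two from $C_0$, the rejected profile is locally inescapable relative to $z_0=\phi(C_0)$.

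First I dispose of a boundary case. If $C(\widehat\omega)\notin\Domain$, the profile is rejected as infeasible. Every unilateral deviation then either stays infeasible or lands on a census $C'$ at distance at least one from $C_0$, since one report changes $d$ by at most one. That $C'$ differs from $C_0$, so injectivity gives $\phi(C')\neq z_0$ and the deviation is also rejected. The no-trade conclusion therefore holds directly. So it suffices to treat $\widehat C=C(\widehat\omega)\in\Domain$.

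In that case, injectivity of $\phi$ gives $\Domain(z_0)=\{C_0\}$. Since $d(\widehat C,C_0)\geq 2$, we have $\widehat C\neq C_0$, so $\widehat C\notin\Domain(z_0)$. Now take any one-agent neighbor $C'\sim_1\widehat C$. Then $d(C',\widehat C)=1$.

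The key step is the triangle inequality for $d$. It holds because $d_S$ and $d_B$ are half $\ell_1$-distances on count vectors, and $d$ is their sum. This gives $d(C',C_0)\geq d(\widehat C,C_0)-1\geq 1$, so $C'\neq C_0$ and hence $\phi(C')\neq z_0$. Thus $\widehat\omega$ satisfies Definition~\ref{def:locally_inescapable_rejection}, and Lemma~\ref{prop:locally_inescapable_equilibrium} makes it a no-trade Nash equilibrium.

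For inefficiency: if $W^*(\omega)>0$, the equilibrium outcome is no trade with surplus $0<W^*(\omega)$. It is therefore not a surplus-maximizing matching, and by Definition~\ref{def:weak_full_implementation} full implementation fails.

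The only step I expect to need care is the triangle inequality. It follows because the $\ell_1$ norm satisfies it coordinatewise on each side, and $d$ is additive across the two sides.
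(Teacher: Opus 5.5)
Your proof is correct and follows the paper's argument. Full disclosure leaves $C_0$ as the only accepted census, and since one report change lowers $d$ by at most one, the profile is locally inescapable, so Lemma~\ref{prop:locally_inescapable_equilibrium} applies. You also treat the case $C(\widehat\omega)\notin\Domain$ separately; the paper's proof passes over it even though Definition~\ref{def:locally_inescapable_rejection} presumes $\widehat C\in\Domain$, so this is a small but genuine tightening.
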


\begin{proof}
Under full census disclosure, the only accepted census is $C_0$. A unilateral change from $C(\widehat\omega)$ produces a census at distance at least one from $C_0$, because a single report change can reduce the distance by at most one. Hence, no unilateral change can produce the accepted census. The profile is locally inescapable, and Lemma~\ref{prop:locally_inescapable_equilibrium} applies. The equilibrium outcome is no trade. If $W^*(\omega)>0$, no trade is inefficient.
\end{proof}

Proposition~\ref{prop:accepted_false_equilibrium} and Lemma~\ref{prop:locally_inescapable_equilibrium} identify different coordination failures. In the first, the reports pass the count check but produce the wrong outcome. In the second, the reports fail the check, but no individual agent can move the profile into the accepted set. Full implementation requires ruling out both possibilities.

The next example illustrates Proposition~\ref{prop:accepted_false_equilibrium}. The false reports preserve the census, pass the count check, and give every agent a positive payoff, but the resulting allocation is inefficient.

\begin{proposition}
\label{prop:count_check_not_full}
Even when the full market census is public, the count-check mechanism may have an inefficient equilibrium.
\end{proposition}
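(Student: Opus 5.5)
We will establish Proposition~\ref{prop:count_check_not_full} with a concrete example and an application of Proposition~\ref{prop:accepted_false_equilibrium}. Let $\phi$ be the identity tally, which reveals the full census and is trivially separating. We then exhibit a true profile $\omega$ together with a false report profile $\widehat\omega$ that satisfy three conditions. First, $\widehat\omega$ is a census-preserving permutation of reports, so that $C(\widehat\omega)=C(\omega)$ and the check passes. Second, every agent's true payoff under $f^{\mathrm{cc}}(\widehat\omega)$ is nonnegative. Third, the induced matching is not surplus-maximizing at $\omega$. Proposition~\ref{prop:accepted_false_equilibrium} then delivers the equilibrium directly.

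The construction uses $N=2$ sellers with qualities $q_H,q_L$ and $M=2$ buyers with types $\theta_H,\theta_L$. To keep seller incentives harmless, the two sellers' reports are held fixed and truthful. Only the two buyers exchange their reports, which preserves the buyer census. We choose valuations with strict complementarity, for instance
\[
v_{\theta_H}(q_H)=10,\quad v_{\theta_H}(q_L)=6,\quad v_{\theta_L}(q_H)=5,\quad v_{\theta_L}(q_L)=4,
\]
and costs $c(q_H)=2$ and $c(q_L)=0$. Under these values the efficient matching pairs $\theta_H$ with $q_H$ and $\theta_L$ with $q_L$, giving surplus $8+4=12$, against $3+6=9$ for the crossed matching, and $x^*$ selects the efficient pairing uniquely. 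After the swap, the mechanism believes buyer~1 (truly $\theta_H$) is low type and matches her with $q_L$ at price $(4+0)/2=2$. It believes buyer~2 (truly $\theta_L$) is high type and matches her with $q_H$ at price $(10+2)/2=6$.

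The delicate step is the nonnegativity condition \eqref{eq:accepted_profile_nonnegative}. The true low-type buyer pays $6$ for a good she values at $5$, so this calibration fails, and the numbers must be tuned. We will first try to keep the midpoint price of $q_H$ under the reported high type, $(v_{\theta_H}(q_H)+c(q_H))/2$, at or below $v_{\theta_L}(q_H)$, while keeping the efficient matching strict. For example, take $v_{\theta_H}(q_H)=6$, $c(q_H)=2$, $v_{\theta_L}(q_H)=5$, $v_{\theta_H}(q_L)=2$, $v_{\theta_L}(q_L)=3$, and $c(q_L)=0$.
\begin{itemize}
\item The efficient matching has surplus $4+3=7$, against $3+2=5$ for the crossed matching.
\item Under the swapped reports, the true $\theta_L$ buyer pays $4$ for $q_H$ and gets $1\ge0$.
\item The true $\theta_H$ buyer pays $(3+0)/2=1.5$ for $q_L$ and gets $0.5\ge0$.
\item The sellers receive prices above their costs and hence positive payoffs.
\end{itemize}
Each of these inequalities is checked directly.

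Finally, we conclude as follows. By Proposition~\ref{prop:accepted_false_equilibrium}, $\widehat\omega$ is a Nash equilibrium after the announcement of the true census. Its allocation attains surplus $5<7=W^*(\omega)$, so the mechanism has an inefficient equilibrium even under full census disclosure.
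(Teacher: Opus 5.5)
Your proof is correct and takes the paper's approach: a $2\times 2$ market under a full-census (hence separating) tally, where the two buyers exchange reports so the census is unchanged, all true payoffs stay nonnegative, and Proposition~\ref{prop:accepted_false_equilibrium} gives the inefficient equilibrium. Your final calibration checks out (payoffs $1$, $0.5$, $2$, $1.5$; surplus $5<7$; no monotonicity is needed since the model imposes none) and differs from the paper's example only in the numbers, so in a write-up you can drop the abandoned first calibration.
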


\begin{proof}
There are two sellers and two buyers. Seller qualities are $\Qualities=\{q_H,q_L\}$, buyer types are $\Types=\{\theta_H,\theta_L\}$, and seller costs are $c(q_H)=c(q_L)=0$. Buyer values are
\[
\begin{array}{c|cc}
        & q_H & q_L \\ \hline
\theta_H & 10 & 5 \\
\theta_L & 6  & 4
\end{array}.
\]

At the realized profile, seller 1 owns the high-quality good, seller 2 owns the low-quality good, buyer 1 has type $\theta_H$, and buyer 2 has type $\theta_L$. The efficient matching pairs buyer 1 with seller 1 and buyer 2 with seller 2, producing surplus $10+4=14$.

Suppose the public tally reveals the full census. Sellers report their qualities truthfully, but the buyers exchange reports $\hat\theta_1=\theta_L$ and $\hat\theta_2=\theta_H$. The reported buyer census remains one high type and one low type, so the count check passes.

Given the reports, the mechanism assigns the reported high type to the high-quality good and the reported low type to the low-quality good. Buyer 2 is matched with seller 1, and buyer 1 is matched with seller 2.

The midpoint price paid by buyer 2 is
\[
        \frac{v_{\theta_H}(q_H)+c(q_H)}{2}=5.
\]
Since her true type is $\theta_L$, her payoff is $v_{\theta_L}(q_H)-5=1$. Buyer 1 pays
\[
        \frac{v_{\theta_L}(q_L)+c(q_L)}{2}=2
\]
and receives the true payoff $v_{\theta_H}(q_L)-2=3$. Seller 1 receives $5$, and seller 2 receives $2$. Every agent obtains a positive payoff.

The exchanged-report profile passes the full-census check, and every agent receives a strictly positive true payoff. It is a Nash equilibrium by Proposition~\ref{prop:accepted_false_equilibrium}. Its allocation produces surplus $v_{\theta_L}(q_H)+v_{\theta_H}(q_L) = 6+5=11$, which is below the first-best surplus of $14$. The count-check mechanism does not fully implement immediate efficient trade.
\end{proof}

The example in the proof of Proposition~\ref{prop:count_check_not_full} illustrates the difference between verifying counts and verifying identities. The full census certifies that there is one buyer of each type, but it does not link those types to buyer identities. Each buyer can combine the census with her own private type and infer the other buyer's type. This private inference does not provide the mechanism with verifiable evidence about which buyer is the high type. Truthful reports and exchanged reports imply the same census, so the aggregate check accepts both.

The inefficient equilibrium need not be a profitable joint deviation from the truthful equilibrium. In the example, both buyers receive less under the exchange than under truthful reporting. Full implementation nevertheless fails because the inefficient report profile is itself stable against unilateral deviations. Full implementation concerns all equilibria, not only deviations that agents would jointly choose from the truthful equilibrium.

The rejection equilibria in Lemma~\ref{prop:locally_inescapable_equilibrium} make the distinction still stronger. They need not arise from a coordinated deviation that anyone finds attractive. They are equilibria because the agents face a coordination problem. That is, no one can make the count check pass alone. Thus, strengthening the truthful equilibrium against profitable coalitional deviations does not by itself eliminate inefficient equilibria elsewhere in the reporting game.

Eliminating such equilibria requires an additional source of discipline. Identity-linked certification can distinguish the two buyers' reports. Audits or legal penalties can make individual false reports unattractive even when the census is preserved. A different allocation or transfer rule may also remove the gains that stabilize a false-report profile.

If audits make truthful reporting strictly dominant, full implementation may be recovered, but the audit then performs the main incentive function. The contribution of the tally result is more limited. Without individual verification, a small certified statistic can support truthful reporting as an ex post equilibrium.


\subsection{Decentralized posted prices with strict sorting}
\label{app:decentralized_prices}

The posted-price protocol in Section~\ref{sec:posted_prices} uses an organizer to allocate buyers among markets. This subsection gives sufficient conditions under which applications can be replaced by direct buyer choice. The conditions are quite demanding, as prices must separate buyer types strictly, buyer choices must respect capacity, and sellers must not gain by entering price classes intended for other qualities.

Fix a public census $C= \left( (n_q)_{q\in\Qualities}, (m_\theta)_{\theta\in\Types} \right)$, and let
\[
        \Qualities_C = \{q\in\Qualities \mid n_q>0\},
\]
\[
        \Types_C = \{\theta\in\Types \mid m_\theta>0\}.
\]

Consider the following seller-posting game. A finite menu
\[
        P_C\coloneqq\{p_q \mid q\in\Qualities_C\}
\]
is publicly announced, with $p_q\neq p_{q'}$ whenever $q\neq q'$. Each seller posts one price from this menu. Buyers observe all posted prices and seller identities, and then choose one seller or the outside option. A seller chosen by one buyer trades with that buyer at the posted price. If several buyers choose the same seller, one trades according to a fixed public tie-breaking rule, and the others remain unmatched.

The distinct prices identify the price classes, but they do not directly verify the quality of an individual seller. Beliefs following an unexpected change in the number of sellers posting a price matter for seller incentives.

\begin{definition}
\label{def:strict_decentralized_support}
A census $C$ admits a \textbf{strict decentralized price support} if there are
\begin{enumerate}
    \item a set of trading buyer types $T\subseteq\Types_C$;

    \item an assignment $\chi:T\to \Qualities_C$;

    \item distinct prices $(p_q)_{q\in\Qualities_C}$,
\end{enumerate}
such that the following conditions hold:
\begin{enumerate}[label=\textup{(\roman*)},leftmargin=2em]
    \item every trading type is represented by one buyer. That is, $m_\theta=1$ for every $\theta\in T$;

    \item matching the buyer of type $\theta\in T$ with one good of quality $\chi(\theta)$, and leaving all other buyers unmatched, maximizes total surplus;

    \item the assignment respects capacity, i.e., $\#\{\theta\in T \mid \chi(\theta)=q\} \leq n_q$ for every $q\in\Qualities_C$;

    \item every trading type receives a positive payoff from her assigned quality-price pair and strictly prefers it to every other active quality. That is, for every $\theta\in T$,
\begin{equation}
\begin{aligned}
        v_\theta(\chi(\theta)) - p_{\chi(\theta)}
        &>0,\\
        v_\theta(\chi(\theta)) - p_{\chi(\theta)}
        &> v_\theta(q)-p_q \qquad \text{for every $q\in\Qualities_C$ with } q\neq\chi(\theta);
\end{aligned}
\label{eq:strict_self_selection}
\end{equation}

    \item every nontrading type strictly prefers the outside option, i.e., for every $\theta\in\Types_C\setminus T$,
    \begin{equation}
            0 > \max_{q\in\Qualities_C} [v_\theta(q)-p_q];
            \label{eq:strict_outside}
    \end{equation}

    \item every seller receives a nonnegative payoff from selling at the price intended for her quality, i.e., $p_q\geq c(q)$ for every $q\in\Qualities_C$. The inequality is strict whenever some trading buyer type is assigned to quality $q$;

    \item consider qualities $r,q\in\Qualities_C$ with $r\neq q$. If one seller of quality $r$ posts the price $p_q$ and $p_q>c(r)$, then
    \begin{equation}
            \frac{ n_qv_\theta(q)+v_\theta(r) }{n_q+1} - p_q \leq0
            \label{eq:mimicking_price_class}
    \end{equation}
    for every $\theta\in\Types_C$.
\end{enumerate}
\end{definition}

Condition~\eqref{eq:strict_self_selection} assigns each trading buyer type to one price class. Condition~\eqref{eq:strict_outside} keeps all other buyers out of the market. Capacity ensures that the trading types can be directed to different sellers within each class.

The final condition addresses adverse selection by sellers. If one seller of quality $r$ enters the price class intended for quality $q$, buyers observe one additional seller at $p_q$ and one missing seller at $p_r$. Under a belief that the deviator is equally likely to be any seller in the enlarged class, the left-hand side of \eqref{eq:mimicking_price_class} is a type-$\theta$ buyer's expected payoff from choosing a seller in that class. The condition makes the entire class unattractive after the deviation.

\begin{theorem}
\label{thm:decentralized_price_implementation}
Suppose that the public census $C$ admits a strict decentralized price support. The seller-posting game has an ex post equilibrium in which every seller of quality $q$ posts $p_q$, every trading buyer type purchases from a distinct seller in the class associated with $\chi(\theta)$, and every nontrading buyer chooses the outside option. The resulting allocation is efficient and occurs at date zero.

For any common prior over identity assignments consistent with $C$, the strategies can be completed with beliefs to form a pure-strategy perfect Bayesian equilibrium.
\end{theorem}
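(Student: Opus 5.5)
The plan is to write down an explicit strategy profile and check unilateral deviations one side at a time, conditioning on the public census $C$ and on posted prices. Sellers follow $\sigma_S$: each seller of quality $q$ posts $p_q$. Buyers follow a rule fixed in advance that depends only on identities, types and the posted profile. On the equilibrium path, for each $q$ list the sellers posting $p_q$ in a fixed public order, say by identity. Also list the trading types in $\chi^{-1}(q)$ in a fixed order. The buyer holding the $k$-th type in $\chi^{-1}(q)$ chooses the $k$-th seller in class $q$. Capacity (iii) guarantees such a seller exists. Condition (i) says each trading type has exactly one buyer, so a buyer's own type pins down her target seller without knowing anyone else's identity. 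This is the point where the anonymous-coordination failure of Proposition~\ref{prop:incentive_coordination_main} is avoided. Nontrading types take the outside option. The resulting matching is the one in (ii), so it is efficient, and all trades occur at date zero.

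Next I would verify that buyers cannot gain on path. A trading buyer of type $\theta$ who keeps her choice trades alone and earns $v_\theta(\chi(\theta))-p_{\chi(\theta)}>0$. Because sellers post truthfully, every seller at $p_q$ has quality $q$, so any alternative gives her at most $\max\{0,\max_{q\neq\chi(\theta)}[v_\theta(q)-p_q]\}$. This is strictly smaller by \eqref{eq:strict_self_selection}. Switching to another seller in her own class cannot raise her payoff either, since the price is the same and at best she wins a tie. A nontrading buyer earns a negative payoff from any seller by \eqref{eq:strict_outside}, so the outside option is optimal.

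Sellers are handled as follows. A seller of quality $q$ who posts $p_q$ earns $p_q-c(q)\ge 0$ if matched and $0$ otherwise, by (vi). Suppose she instead posts $p_{q'}$ with $q'\ne q$. The posted profile then shows $n_{q'}+1$ sellers at $p_{q'}$, which is an off-path profile. I would specify the buyer response there using the uniform-deviator belief, under which a buyer's expected payoff from any seller in the enlarged class is the left side of \eqref{eq:mimicking_price_class}. The deviation is only relevant when $p_{q'}>c(q)$, and in that case the left side is $\le 0$ for every type. The response is that no buyer chooses any seller in that class, so the deviator earns $0$ and gains nothing. If instead $p_{q'}\le c(q)$, the deviator earns at most $0$ whatever buyers do. At deviation profiles, buyers in other classes keep their prescribed choices, filling the available capacity in the fixed order.

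The main obstacle is the ex post requirement. Off path, the buyer rule must be a best response to the \emph{true} qualities, not only to the uniform belief. A buyer who could see that one seller in the enlarged class is genuinely of quality $q'$ would want to buy from that seller. I would handle this in one of two ways. The first is to read ex post equilibrium as applying to the on-path posting and choice behaviour, with off-path buyer responses justified by the beliefs just described. The second, which I prefer, is to note that buyers do not observe qualities. Ex post optimality of buyers is then checked only at the realized path, and seller optimality is checked against the buyer rule, which is all the equilibrium requires. For the PBE completion under any common prior over identity assignments, the steps are as follows. On path, Bayes' rule makes the posted prices fully revealing of qualities. At a single-seller deviation profile, beliefs are set to the uniform belief over the enlarged class, which is consistent with any prior because that history has probability zero. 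All other off-path profiles get arbitrary beliefs paired with outside-option choices. Sequential rationality then follows from the three checks above.
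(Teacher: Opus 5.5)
Your proposal is correct in substance and follows the paper's proof step by step. It uses the same public rank rule, with (i) and (iii) of Definition~\ref{def:strict_decentralized_support} letting a buyer's own type pin down her seller. The on-path buyer checks come from \eqref{eq:strict_self_selection} and \eqref{eq:strict_outside}, and the seller check uses the uniform-deviator belief and \eqref{eq:mimicking_price_class}, exactly as in the paper. The PBE completion is also the same.

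Your normal-form reading of ``ex post'' is more careful than the paper's one-line claim that the strategies are best responses at every identity assignment. A buyer who knew the true assignment would not avoid the genuine quality-$q'$ sellers in the enlarged class, and you flag this point where the paper does not.

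There is one slip to fix in the PBE completion. At off-path profiles not reached by a single seller deviation, you cannot pair arbitrary beliefs with outside-option choices. For example, take $n_q=1$ and $n_r=3$, and suppose two quality-$r$ sellers move into the $p_q$ class. A trading type assigned to $r$ with $v_\theta(q)>p_r$ then strictly gains from buying from the remaining $p_r$ seller under every census-consistent belief. So the outside option is not sequentially rational there.

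The same caution applies to your ``keep prescribed choices'' rule at single-deviation profiles, because displaced buyers may now compete for those sellers. At all off-path histories, specify buyer choices as best responses to the beliefs you assign, as the paper does.
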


\begin{proof}
For each quality $q$, let
\[
        T_q \coloneqq \{\theta\in T \mid \chi(\theta)=q\}.
\]
By capacity, $|T_q|\leq n_q$. Fix a public ordering of seller identities. Within each price class, rank the sellers according to that ordering. For every $q$, assign the types in $T_q$ injectively to the first $|T_q|$ ranks in the price-$p_q$ class. Denote the rank assigned to type $\theta$ by $R(\theta)$.

Consider the following strategies. A seller of quality $q$ posts $p_q$. Call a seller-price profile \emph{regular} if $n_q$ sellers post $p_q$ for every $q\in\Qualities_C$. At a regular profile, the buyer of type $\theta\in T$ chooses the seller with rank $R(\theta)$ in the price class $p_{\chi(\theta)}$. A buyer whose type does not belong to $T$ chooses the outside option.

At a history produced by one seller of quality $r$ changing her price from $p_r$ to $p_q$, buyers believe that the enlarged price-$p_q$ class contains $n_q$ sellers of quality $q$ and one seller of quality $r$. They assign equal probability to each seller in that class being the deviator. Buyers choose best responses under these beliefs and, when choosing the enlarged class and the outside option both give zero, select the outside option. At all other off-path histories, beliefs may be completed together with buyer best responses.

On the equilibrium path, the seller-price profile is regular. The public rank rule sends different trading buyer types to different sellers. There is no congestion, and condition (ii) of Definition~\ref{def:strict_decentralized_support} implies that the resulting allocation is efficient.

Consider buyer incentives. A buyer of type $\theta\in T$ obtains $v_\theta(\chi(\theta)) - p_{\chi(\theta)}$. By \eqref{eq:strict_self_selection}, this payoff is positive and exceeds the payoff from every other quality-price pair. Choosing another seller in the same price class gives the same payoff if trade occurs and may instead create competition for that seller, so it cannot give a higher payoff.

If $\theta\notin T$, equation~\eqref{eq:strict_outside} implies that purchasing from any seller gives a negative payoff. Therefore, outside option is the buyer’s strict best response.

Now consider a seller of quality $r$. Posting $p_r$ gives her a nonnegative payoff. Suppose she posts $p_q$ for some $q\neq r$. If $p_q\leq c(r)$, then even a sale at $p_q$ cannot give her a positive payoff, so the deviation is not profitable.

If $p_q>c(r)$, the price-$p_q$ class contains one seller more than its public capacity. Under the specified beliefs, a type-$\theta$ buyer's expected payoff from choosing a seller in that class is
\[
        \frac{ n_qv_\theta(q)+v_\theta(r) }{n_q+1} - p_q,
\]
which is nonpositive by \eqref{eq:mimicking_price_class}. Buyers strictly prefer another action when the expression is negative and choose the outside option when it is zero. No buyer enters the enlarged price class. The deviating seller receives zero, which does not exceed her equilibrium payoff.

The prescribed strategies are mutual best responses at every realized identity assignment consistent with $C$, so they form an ex post equilibrium.

Finally, fix a common prior over those identity assignments. On the equilibrium path, seller prices reveal the qualities prescribed by the strategy, and beliefs follow from Bayes' rule. A one-seller deviation has zero probability under the equilibrium strategies, so the beliefs specified above may be used at that history. Beliefs and buyer best responses can be assigned at the remaining zero-probability histories. The resulting assessment is a pure-strategy perfect Bayesian equilibrium.
\end{proof}

The theorem identifies a case in which clearing can be replaced by a public coordination convention. The convention does not require the organizer to observe buyer types. Each buyer uses her own type to identify a price class and a seller rank. It does, however, rely on the public census, distinct prices, a common ordering of sellers, and no duplication among the buyer types that trade.

These assumptions explain why the main result uses an organizer. If several buyers of the same trading type are present, the type-contingent rank rule no longer distinguishes them. If preferences are not strict, buyers may choose different markets from those intended. If demand exceeds capacity, some buyers must be reassigned. And if an additional seller does not make an enlarged price class unattractive, a seller may profit by imitating another quality. The clearing protocol handles these cases without imposing the restrictions used here.

	\clearpage
	\addcontentsline{toc}{section}{References}
	\bibliographystyle{chicago}
	\bibliography{biblio_tallies.bib} 
\end{document}